\newcommand{\ldml}{\mathsf{G3Ldm}}
\newcommand{\ldmtl}{\mathsf{G3Tstit}}
\newcommand{\tstitl}{\mathsf{G3Tstit}}
\newcommand{\xstitl}{\mathsf{G3Xstit}}
\newcommand{\ldm}{\mathsf{Ldm}}
\newcommand{\ldmn}{\mathsf{Ldm}}
\newcommand{\ldmt}{\mathsf{Tstit}}
\newcommand{\tstit}{\mathsf{Tstit}}
\newcommand{\xstit}{\mathsf{Xstit}}
\newcommand{\settrefl}{(\mathsf{refl}_{\Box})}
\newcommand{\setteucl}{(\mathsf{eucl}_{\Box})}
\newcommand{\stitrefl}{(\mathsf{refl}_{\agbox{}})}
\newcommand{\stiteucl}{(\mathsf{eucl}_{\agbox{}})}
\newcommand{\ioa}{(\mathsf{IOA})}
\newcommand{\bridge}{(\mathsf{br}_{[i]})}
\newcommand{\gtrans}{(\mathsf{trans}_{\g})}
\newcommand{\gser}{(\mathsf{ser}_{\g})}
\newcommand{\gconn}{(\mathsf{conn}_{\g})}
\newcommand{\hconn}{(\mathsf{conn}_{\h})}
\newcommand{\gconv}{(\mathsf{conv}_{\g})}
\newcommand{\hconv}{(\mathsf{conv}_{\h})}
\newcommand{\ncuh}{(\mathsf{ncuh})}
\newcommand{\agtrefl}{(\mathsf{refl}_{Ag})}
\newcommand{\agteucl}{(\mathsf{eucl}_{Ag})}
\newcommand{\agtd}{(\mathsf{agd})}
\newcommand{\irrtwo}{(\mathsf{irr}_{\g})}
\newcommand{\agr}{(\mathsf{[Ag]})}
\newcommand{\agdiar}{(\mathsf{\lb Ag \rb})}
\newcommand{\reflis}{(\mathsf{refl}_=)}
\newcommand{\euclis}{(\mathsf{eucl_=})}
\newcommand{\subis}{(\mathsf{sub_=})}
\newcommand{\comp}{(\mathsf{comp}_{\g 1})}
\newcommand{\comptwo}{(\mathsf{comp}_{\g 2})}
\newcommand{\diamx}{\langle X \rangle} 
\newcommand{\boxnext}{([\mathsf{X}])}
\newcommand{\diamnext}{(\langle\mathsf{X}\rangle)}
\newcommand{\serx}{(\mathsf{ser_X})}
\newcommand{\detx}{(\mathsf{det_X})}
\newcommand{\stitboxx}{(\mathsf{[A]}^x)}
\newcommand{\stitdiamx}{(\mathsf{\langle A\rangle}^x)}
\newcommand{\ioax}{(\mathsf{IOA_X})}
\newcommand{\cmon}{(\mathsf{C{-}Mon})}
\newcommand{\emptyeff}{({\emptyset}\mathsf{Eff})}
\newcommand{\effempty}{(\mathsf{Eff}{\emptyset})}
\newcommand{\ageff}{(\mathsf{AgEff})}
	\newcommand{\effag}{(\mathsf{EffAg})}
\newcommand{\disr}{(\vee)}
\newcommand{\conr}{(\wedge)}
\newcommand{\settr}{(\Box)}
\newcommand{\settdiar}{(\Diamond)}
\newcommand{\stitr}{(\agbox{})}
\newcommand{\stitdiar}{(\agdia)}
\newcommand{\gr}{(\mathsf{G})}
\newcommand{\fr}{(\mathsf{F})}
\newcommand{\hr}{(\mathsf{H})}
\newcommand{\pr}{(\mathsf{P})}
\newcommand{\id}{(\mathsf{id})}
\newcommand{\R}{\mathcal{R}}
\newcommand{\agdia}{\langle i \rangle}
\newcommand{\agbox}{[i]}
\newcommand{\lb}{\langle}
\newcommand{\rb}{\rangle}
\newcommand{\g}{\mathsf{G}}
\newcommand{\h}{\mathsf{H}}
\newcommand{\f}{\mathsf{F}}
\newcommand{\p}{\mathsf{P}}
\newcommand{\pres}{\mathsf{pres}}
\newcommand{\fut}{\mathsf{fut}}
\begin{document}
%
%
%
\title{Cut-free Calculi and Relational Semantics for Temporal STIT Logics\thanks{This is a pre-print of an article published in Logics in Artificial Intelligence. The final authenticated version is available online at: \url{https://doi.org/10.1007/978-3-030-19570-0_52}.}
}
\author{Kees van Berkel\inst{1} \and Tim Lyon\inst{1}}

\institute{$^{1}$Institut f\"ur Logic and Computation, Technische Universit\"at Wien, Austria  \\ \email{\{kees,lyon\}@logic.at}}

\titlerunning{Cut-free Calculi for Temporal STIT Logics}
\authorrunning{K. van Berkel and T. Lyon}

%
%
%
%
\maketitle              
\begin{abstract}
We present cut-free labelled sequent calculi for a central formalism in logics of agency: 
STIT logics with temporal operators. These include sequent systems for $\ldm$, $\ldmt$ and $\xstit$. 
 All calculi presented possess essential structural properties such as contraction- and cut-admissibility. 
The labelled calculi $\ldml$ and $\ldmtl$ 
 are shown sound and complete relative to irreflexive temporal frames. Additionally, we extend current results by showing
 that also 
 $\xstit$ 
 can be characterized through relational frames, omitting the use of BT+AC frames. %
 
\keywords{Labelled sequent calculi $\cdot$ Cut-free completeness $\cdot$ Temporal logic $\cdot$ Multi-agent STIT logic 
 $\cdot$ Relational semantics}
\end{abstract}

\section{Introduction}

    Various autonomous machines are developed with the aim of performing particular human tasks. Human acting, however, is inevitably connected to legal and moral decision making--sometimes more than we think. Hence
, such machines will eventually be found in difficult scenarios in which normatively acceptable actions must be generated 
\cite{Ger15}. What is more, these decisions can quickly turn into complex (technical) problems \cite{Goo14}. The above stresses the need for formal tools that allow for reasoning about agents, the choices they have, and the actions they are \textit{able} and \textit{allowed} to perform. 
 Implementable logics of agency can play an important role in the development of such automated 
  systems: they can provide explicit proofs that can be checked and which, more importantly, can be understood by humans (\textit{e.g.} \cite{ArkBriBel05}). The present work takes a first step in this direction by providing cut-free sequent calculi for one of the central formalisms of agency: STIT logic with temporal operators. 

 
The logic of STIT, which is an acronym for `Seeing To It That', is a prominent modal framework for the formal analysis of multi-agent interaction and reasoning about choices.\footnote{For an introduction to STIT logic and a historical overview we refer to \cite{BelPer90,BelPerXu01,Hor01}.} In short, STIT logics contain modal formulae of the form $[i]\phi$, capturing the notion that ``the agent $i$ sees to it that the state of affairs $\phi$ is brought about". STIT logic knows many fruitful extensions 
 and its recent application to legal theory, deontic reasoning, and epistemics shows that issues of agency are essentially tied to \textit{temporal aspects} of choice: for example,  consider issues in  
	legal responsibility \cite{LorSar15}; social commitment  \cite{Lor13}; knowledge-based obligations \cite{Bro11}; agent-bound instrumentality \cite{BerPas18}; and actions as events \cite{Xu12}.

Unfortunately, nearly all available proof systems for STIT logics are Hilbert-style systems, which are known to be cumbersome for proof search and not suitable 
 for proving metalogical properties of the intended formalisms. 
 To this purpose, a renowned alternative proof framework 
 is Gentzen's \textit{sequent calculus} \cite{Gen35}. It allows one to construct proofs that decompose the formulae to be proven in a stepwise manner; making it an effective tool for proof search and a good candidate for automated deduction procedures. However, this framework is not strong 
 enough to design cut-free analytic calculi for many modal logics of interest \cite{Neg05}; including STIT logic. In this work, we will treat several STIT logics through 
 a more expressive extension of this 
 formalism: Labelled Sequent Calculi \cite{Neg05,Vig00}.
 
The aim of the present paper is to provide labelled  calculi for several central \textit{temporal} STIT logics: $\ldm$, $\ldmt$ and $\xstit$. To our knowledge, there have only been three attempts to capture STIT logics in alternative proof systems: in \cite{ArkBriBel05} a natural deduction system for a deontic STIT logic is proposed and in \cite{WanOlk18,Wan06} tableaux systems for multi-agent deliberative STIT logics are presented. 

On the one hand, the novelty of the present contribution compared to previous works, 
is that all presented calculi (i) possess useful proof-theoretic properties such as contraction- and cut-admissibility and (ii) are modular and extend to several \textit{temporal} STIT-logics, including both temporal operators and inherently temporal STIT-operators (in a multi-agent, as well as a group setting). In doing so, we answer an open question in \cite{Wan06} regarding the construction of a rule-based proof system for temporal extensions of $\ldm$. On the other hand, the investigation of STIT has been with an essential focus on its intuitive semantics: branching time structures, extended with histories as paths and agential choice-functions (BT+AC-frames). Recent work \cite{BalHerTro08,HerSch08,Lor13}, however, shows that the basic atemporal STIT logic $\ldm$ and its temporal extension $\ldmt$ are characterizable through simpler relational frames. The current work extends these results by showing that also the logic $\xstit$ can be semantically characterized without using BT+AC structures.

In section \ref{LDM_Section} we will introduce the base logic $\ldm$ and its corresponding labelled calculus. Thereafter, in section \ref{TSTIT_Section}, we provide a cut-free calculus for the temporal STIT logic $\ldmt$, introduced in \cite{Lor13}, which exploits a temporal irreflexivity rule based on 
 \cite{Gab81}
. Last, in section \ref{XSTIT_Section}, we provide a labelled calculus for the inherently temporal STIT logic $\xstit$ from \cite{Bro11,Bro11b}. Here we show that the \textit{independence of agents} principle of STIT logic can be captured using \textit{systems of rules} from \cite{Neg16}. We conclude and highlight some envisaged future work in section \ref{Conclusion}.

\section{The Logic $\ldmn$}\label{LDM_Section}

   \subsection{Axioms and Relational Semantics for $\ldmn$}

The basic STIT logic $\ldm$ offers a framework for reasoning about individual agents realizing propositions via the 
choices available to them at particular moments in time.  
%
%
%
 In the semantics of $\ldm$, each \textit{moment} can be formalized as an equivalence class of \textit{worlds}, where each world sits in a linear chain (referred to as a \emph{history}) extending to the future and (possibly to) the past. Therefore, each world contained in a particular moment can be thought of as an alternative state of affairs that evolves along a different timeline. Moreover, for each agent, moments are further partitioned into equivalence classes, where each class represents a possible choice available to the agent for realizing 
 a set of potential outcomes
. Hence, if a proposition $\phi$ holds true in every world of a particular choice for an agent $i$, then we claim that ``$i$ sees to it that $\phi$" (written formally as $[i]\phi$) at each world of that choice; \textit{i.e.} 
 $i$'s committal to the choice ensures 
  $\phi$ regardless of which world in the choice set is actual.

The above STIT operator $[i]$ is referred to as the \emph{Chellas}-STIT 
(i.e. \emph{cstit}) \cite{BelPerXu01}. It is often distinguished from the \emph{deliberative} STIT (i.e. \emph{dstit}) which consists of 
 \textit{cstit} together with a negative condition: we say that ``agent $i$ deliberatively sees to it that $\phi$'' (written formally as $[i]^{d}$) 
 when (i) ``$i$ sees to it that $\phi$'' 
 and (ii) ``$\phi$ is currently not settled true'' \cite{BelHor95,Hor01}. 
 The second condition ensures that the realization of $\phi$ \textit{depends} on the choice made by the agent; \textit{i.e.} $\phi$ might not have been case had the agent chosen to act differently. By making use of the \emph{settledness} operator $\Box$, which is prefixed to a formula when the formula holds true at every world in a moment, \textit{cstit} and \textit{dstit} become inter-definable: namely, $[i]^{d}\phi$ \textit{iff} $[i] \phi \wedge \neg \Box \phi$. As an example of a STIT formula, the formula $\Diamond [i]^d\phi$ must be interpreted as follows: at the current moment, agent $i$ has a possible choice available that allows $i$ to see to it that $\phi$ is guaranteed, and there is an alternative choice present to $i$ that does not guarantee $\phi$. In this paper, we introduce $\Box$ and $[i]$ as primitive and take $[i]^d$ as defined. 

In this section, we make all of the aforementioned notions formally precise and provide a relational semantics for $\ldm$ along with a corresponding cut-free labelled calculus. In section \ref{TSTIT_Section}, we will extend $\ldm$ with temporal operators, obtaining the logic $\tstit$. Since both logics rely on the same semantics, we introduce their languages and semantics simultaneously, avoiding unnecessary repetition. Lastly, in what follows we give all formulae of the associated logics in negation normal form. This reduces the number of rules in the associated calculi and offers a simpler presentation of the proof theory.
 The languages for $\ldm$ and $\tstit$ are given below:
 

\begin{definition}[The Languages $\mathcal{L}_{\ldm}$ and $\mathcal{L}_{\ldmt}$]\label{ldmtlanguage} Let $Ag = \{1,2,...,n\}$ be a finite set of agent labels and let $Var =\{p_1,p_2,p_3...\}$ be a countable set of propositional variables. The language $\mathcal{L}_{\ldm}$ is given by the following BNF grammar: 
{\small $$\phi ::= p \ | \ \overline{p} \ | \ \phi \wedge \phi \ | \ \phi\vee \phi \ | \ \Box \phi \ | \ \Diamond \phi \ | \ [i] \phi \ | \ \lb i \rb \phi$$} 
The language $\mathcal{L}_{\ldmt}$ is defined accordingly: 
{\small $$\phi ::= p \ | \ \overline{p} \ | \ \phi \wedge \phi \ | \ \phi\vee \phi \ | \ \Box \phi \ | \ \Diamond \phi \ | \ [i] \phi \ | \ \lb i \rb \phi \ | \ [Ag] \phi \ | \ \lb Ag \rb \phi \ | \ \g \phi \ | \ \f \phi \ | \ \h \phi \ | \ \p \phi$$}
where $i\in Ag$ and $p \in Var$.
\end{definition}

\noindent The language $\mathcal{L}_{\tstit}$ extends $\mathcal{L}_{\ldm}$ through the incorporation of the tense modalities $\g$, $\f$, $\h$, and $\p$ and the modalities $[Ag]$ and $\lb Ag \rb$ for the grand coalition $Ag$ of agents. $\g$ and $\f$ are duals and read, respectively, as `always will be in the future' and `somewhere in the future'. $\h$ are $\p$ are also dual and are interpreted, respectively, as `always has been in the past' and `somewhere in the past' (cf. \cite{Lor13,Pri67}). The operator 
$[Ag]$ captures the notion that `the grand coalition of agents sees to it that'. Note that the negation of a formula $\phi$, written $\overline{\phi}$, is obtained in the usual way by replacing each operator with its dual, each positive propositional atom $p$ with its negation $\overline{p}$, and each negative propositional atom $\overline{p}$ with its positive version $p$. We may therefore 
 define $\phi \rightarrow \psi$ as $\overline{\phi} \vee \psi$, $\phi \leftrightarrow \psi$ as $\phi \rightarrow \psi \wedge \psi \rightarrow \phi$, $\top$ as $p \vee \overline{p}$, and $\bot$ as $p \wedge \overline{p}$. We will use these abbreviations throughout the paper. 
 

 At present, we are principally 
  interested in $\ldm$ and temporal frames: in particular, since $\ldmt$ will be introduced as the temporal extension of $\ldm$ and, more generally, because the logic of STIT has an implicit temporal intuition underlying choice-making (\textit{cf.} original branching-time frames employed for $\ldm$ \cite{BelPerXu01,BelHor95,Hor01}). We will prove that $\ldm$ is strongly complete with respect to these more elaborate \textit{irreflexive} Temporal Kripke STIT frames.

\begin{definition}[Relational $\ldmt$ Frames and Models \cite{Lor13}]\label{models_tkstit} Let $\R_{\alpha}(w) := \{v\in W | (w,v)\in R_{\alpha}\}$ for $\alpha \in \{\Box, Ag, \g, \h \} \cup Ag$. A relational \emph{Temporal STIT frame ($\ldmt$-frame)} is defined as a tuple $F = (W, \R_{\Box}, \{\R_{i} | i \in Ag\}, \R_{Ag}, \R_{\g}, \R_{\h})$ where $W$ is a non-empty set of worlds $w,v,u...$ and: 

\begin{itemize}


\item For all $i\in Ag$, $\R_{\Box}$, $\R_{i}$, $\R_{Ag} \subseteq W\times W$ are equivalence relations where:


\item[{\rm \textbf{(C1)}}] For each $i$, $\R_{i} \subseteq \R_{\Box}$;

\item[{\rm \textbf{(C2)}}] For all $u_{1},...,u_{n} \in W$, if $\R_{\Box}u_{i}u_{j}$ for all $1 \leq i,j \leq n$, then $\bigcap_{i} \R_{i}(u_{i}) \neq \emptyset$;

\item[{\rm \textbf{(C3)}}] For all $w \in W$, $\R_{Ag}(w) = \bigcap_{i \in Ag} \R_{i}(w)$;


\item $\R_{\g}\subseteq W\times W$ is a transitive and serial binary relation 
and $\R_{\h}$ is the converse of $\R_{\g}$, and the following conditions hold:


\item[{\rm \textbf{(C4)}}] For all $w, u, v \in W$, if $\R_{\g}wu$ and $\R_{\g}wv$, then $\R_{\g}uv$, $u = v$, or $\R_{\g}vu$;

\item[{\rm \textbf{(C5)}}] For all $w, u, v \in W$, if $\R_{\h}wu$ and $\R_{\h}wv$, then $\R_{\h}uv$, $u = v$, or $\R_{\h}vu$;

\item[{\rm \textbf{(C6)}}] $\R_{\g} \circ \R_{\Box} \subseteq \R_{Ag} \circ \R_{\g}$; (Relation composition $\circ$ is defined as usual.)

\item[{\rm \textbf{(C7)}}] For all $w,u \in W$, if $u \in \R_{\Box}(w)$, then $u \not\in \R_{\g}(w)$;
\end{itemize}
A $\ldmt$-model is defined as a tuple $M = (F,V)$ where $F$ is a $\ldmt$-frame and $V$ is a valuation function assigning propositional variables to subsets of $W$; that is, $V{:}\ Var \mapsto \mathcal{P}(W)$. 
\end{definition}

The property expressed in C2 corresponds to the familiar \textit{independence of agents} principle of STIT logic, which states that if it is currently possible for each distinct agent to make a certain choice, then it is possible for all such choices to be made simultaneously. Condition C6 captures the STIT principle of \textit{no choice between undivided histories}, which 
 ensures that if two 
 time-lines 
  remain 
   undivided at some 
  future moment, then no agent can currently make a choice realizing one time-line 
  without the other. 
  (This 
principle is 
 inexpressible in the atemporal language of the 
  base logic $\ldm$.) For a philosophical discussion of these principles see \cite{BelPerXu01}. Last, condition C7 ensures that the temporal frames under consideration are \textit{irreflexive}, which means that the future is a strict future (excluding the present). For a discussion of the other frame properties we refer to \cite{Lor13}.

\begin{definition}[Semantics for $\mathcal{L}_{\ldm}$ and $\mathcal{L}_{\ldmt}$
]\label{Semantics_ldm_tstit} Let $M$ 
 be a $\ldmt$-model and let $w$ be a world in its domain $W$. The \emph{satisfaction} of a formula $\phi$ on $M$ at $w$ is inductively defined as follows (in clauses 1-14 we omit explicit mention of $M$): 
\begin{small}

\begin{multicols}{2}
\begin{itemize}
\item[1.] $ w \models p$ iff $w \in V(p)$

\item[2.] $ w \models \overline{p}$ iff $w \not\in V(p)$

\item[3.] $ w \models \phi \wedge \psi$ iff $ w \models \phi$ and $ w \models \psi$

\item[4.] $ w \models \phi \vee \psi$ iff $ w \models \phi$ or $ w \models \psi$

\item[5.] $ w \models \Box \phi$ iff $\forall u \in \R_{\Box}(w)$, $ u \models \phi$

\item[6.] $ w \models \Diamond \phi$ iff $\exists u \in \R_{\Box}(w)$, $ u \models \phi$

\item[7.] $ w \models [i] \phi$ iff $\forall u \in \R_{i}(w)$, $ u \models \phi$
\end{itemize}
\columnbreak
\begin{itemize}
\item[8.] $ w \models \lb i \rb \phi$ iff $\exists u \in \R_{i}(w)$, $ u \models \phi$

\item[9.] $ w \models [Ag] \phi$ iff $\forall u \in \R_{Ag}(w)$, $ u \models \phi$

\item[10.] $ w \models \lb Ag \rb \phi$ iff $\exists u \in \R_{Ag}(w)$, $ u \models \phi$

\item[11.] $ w \models \g \phi$ iff $\forall u \in \R_{\g}(w)$, $ u \models \phi$

\item[12.] $ w \models \f \phi$ iff $\exists u \in \R_{\g}(w)$, $ u \models \phi$
\item[13.] $ w \models \h \phi$ iff $\forall u \in \R_{\h}(w)$, $ u \models \phi$

\item[14.] $ w \models \p \phi$ iff $\exists u \in \R_{\h}(w)$, $ u \models \phi$
\end{itemize}
\end{multicols}

\end{small}
\noindent A formula $\phi$ is \emph{globally true} on 
 $M$ (\textit{i.e.} $M {\models} \phi$) iff it is satisfied at every world $w$ in the domain $W$ of $M$. A formula $\phi$ is \emph{valid} (\textit{i.e.} ${\models} \phi$) iff it is globally true on every $\ldmt$-model.


\end{definition}

\begin{definition}[The Logic $\ldm$~\cite{BelPerXu01}] The Hilbert system of $\ldm$ consists of the following axioms and inference rules:

\begin{small}
\begin{center}
\begin{tabular}{c @{\hskip 1em} c @{\hskip 1em} c @{\hskip 1em} c}

$\phi \rightarrow (\psi \rightarrow \phi)$

&

$(\overline{\psi} \rightarrow \overline{\phi}) \rightarrow (\phi \rightarrow \psi)$

&

$(\phi \rightarrow (\psi \rightarrow \chi)) \rightarrow ((\phi \rightarrow \psi) \rightarrow (\phi \rightarrow \chi))$

\end{tabular}

\ \\

\begin{tabular}{c @{\hskip 1em} c @{\hskip 1em} c @{\hskip 1em} c @{\hskip 1em} c @{\hskip 1em} c}

$\Box \phi \rightarrow \phi$

&

$\Diamond \phi \rightarrow \Box \Diamond \phi$

&

$\Box (\phi \rightarrow \psi) \rightarrow (\Box \phi \rightarrow \Box \psi)$

&

$\agbox{} \phi \rightarrow \phi$

&

$\lb i \rb \phi \rightarrow \agbox{} \lb i \rb \phi$

\end{tabular}
\end{center}

\begin{center}
\begin{tabular}{c @{\hskip 1em} c @{\hskip 1em} c @{\hskip 1em} c @{\hskip 1em} c}

$\Box \phi \vee \Diamond \overline{\phi}$

&

$[i] \phi \vee \lb i \rb \overline{\phi}$

&

$\bigwedge_{i \in Ag} \Diamond [i] \phi_{i} \rightarrow \Diamond ( \bigwedge_{i \in Ag}[i] \phi_{i})$

\end{tabular}
\end{center}

\begin{center}
\begin{tabular}{c @{\hskip 1em} c @{\hskip 1em} c @{\hskip 1em} c @{\hskip 1em} c}

$\agbox{} (\phi \rightarrow \psi) \rightarrow (\agbox{} \phi \rightarrow \agbox{} \psi)$

&

$\Box \phi \rightarrow [i] \phi$

&

\AxiomC{$\phi$}
\UnaryInfC{$\Box \phi$}
\DisplayProof

&

\AxiomC{$\phi$}
\AxiomC{$\phi \rightarrow \psi$}
\BinaryInfC{$\psi$}
\DisplayProof

\end{tabular}
\end{center}
\end{small}
A derivation of $\phi$ in $\ldm$ from a set of premises $\Theta$, is written as $\Theta \vdash_{\ldm} \phi$. When $\Theta$ is the empty set, we refer to $\phi$ as a \emph{theorem} and write $\vdash_{\ldm} \phi$.
\end{definition}
The axiomatization contains duality-axioms $\Box \phi \vee \Diamond \overline{\phi}$ and $[i] \phi \vee \lb i \rb \overline{\phi}$ which ensure the usual interaction between the box and diamond modalities. Furthermore, the axiom $\bigwedge_{i \in Ag} \Diamond [i] \phi_{i} \rightarrow \Diamond ( \bigwedge_{i \in Ag}[i] \phi_{i})$ is the \textit{independence of agents} (\emph{IOA}) axiom. %

\begin{theorem}[Soundness~\cite{Lor13}]
For any formula $\phi$, if $ \vdash_{\ldm} \phi$, then $ {\models} \phi$.
\end{theorem}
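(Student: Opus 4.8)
The plan is to prove soundness by the standard induction on the length of Hilbert derivations in $\ldm$: one shows that every axiom is valid and that every inference rule preserves validity, so that anything derivable from the empty set of premises is globally true on every $\ldmt$-model. First I would dispatch the propositional axioms and modus ponens, which are immediate from clauses 3--4 of Definition~\ref{Semantics_ldm_tstit} together with the definitions of $\rightarrow$ and $\overline{\phi}$ as negation normal form abbreviations; here one just checks that $\overline{\phi}$ behaves semantically as the classical negation of $\phi$, which follows by a routine subinduction on formula structure using the duality of the modal and tense operators.

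Next I would handle the normal-modal part. The $\mathsf{K}$-axioms $\Box(\phi\to\psi)\to(\Box\phi\to\Box\psi)$ and $\agbox{}(\phi\to\psi)\to(\agbox{}\phi\to\agbox{}\psi)$, the necessitation rule for $\Box$, and the duality axioms $\Box\phi\vee\Diamond\overline{\phi}$ and $[i]\phi\vee\lb i\rb\overline{\phi}$ all follow directly from the quantifier clauses 5--8 and the fact that $\R_{\Box}$ and $\R_i$ are total (indeed reflexive) on their fields. The axiom $\Box\phi\to\phi$ uses reflexivity of $\R_{\Box}$, and $\agbox{}\phi\to\phi$ uses reflexivity of $\R_i$ (each $\R_i$ is an equivalence relation); $\Diamond\phi\to\Box\Diamond\phi$ and $\lb i\rb\phi\to\agbox{}\lb i\rb\phi$ are the usual $\mathsf{5}$-type validities obtained from euclideanness, which holds because $\R_{\Box}$ and $\R_i$ are equivalence relations. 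The interaction axiom $\Box\phi\to[i]\phi$ is immediate from condition \textbf{(C1)}, $\R_i\subseteq\R_{\Box}$: if $\phi$ holds throughout $\R_{\Box}(w)$ it holds throughout the smaller set $\R_i(w)$.

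The one genuinely nontrivial case, and the main obstacle, is the \emph{independence of agents} axiom $\bigwedge_{i\in Ag}\Diamond[i]\phi_i\to\Diamond(\bigwedge_{i\in Ag}[i]\phi_i)$, whose validity must be extracted from condition \textbf{(C2)}. I would argue as follows: fix a model $M$ and world $w$, and suppose $w\models\Diamond[i]\phi_i$ for every $i\in Ag$. Then for each $i$ there is $u_i\in\R_{\Box}(w)$ with $u_i\models[i]\phi_i$, i.e. $\phi_i$ holds throughout $\R_i(u_i)$. Since $\R_{\Box}$ is an equivalence relation and each $u_i\in\R_{\Box}(w)$, we have $\R_{\Box}u_iu_j$ for all $i,j$, so \textbf{(C2)} yields a world $v\in\bigcap_{i}\R_i(u_i)$. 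For each $i$, $v\in\R_i(u_i)$ gives $v\models\phi_i$; moreover, since $\R_i$ is an equivalence relation, $v\in\R_i(u_i)$ implies $\R_i(v)=\R_i(u_i)$, hence $\phi_i$ holds throughout $\R_i(v)$, i.e. $v\models[i]\phi_i$. As $n$ may exceed the number of $u_i$'s that are pairwise $\R_{\Box}$-related, note we may pad the list of worlds in \textbf{(C2)} by repeating $w$-equivalent worlds if some agent index is unused; more simply, the $u_i$ for $i\in Ag$ already form a list of $n$ pairwise $\R_{\Box}$-related worlds. Finally $v\in\R_{\Box}(w)$ since $v\in\R_i(u_i)\subseteq\R_{\Box}(u_i)$ by \textbf{(C1)} and $\R_{\Box}$ is transitive, so $w\models\Diamond(\bigwedge_{i\in Ag}[i]\phi_i)$.

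Having verified all axioms and rules, the proof concludes by the induction: if $\vdash_{\ldm}\phi$ then $\phi$ is obtained from valid formulas by validity-preserving rules, hence ${\models}\phi$. The tense and $[Ag]$ modalities play no role here since they do not occur in the $\ldm$-axiomatization; the same argument, augmented with the obvious clauses for the extra operators and conditions \textbf{(C3)}--\textbf{(C7)}, will serve for the soundness direction of the $\ldmt$ completeness result later. I expect the only subtlety beyond bookkeeping to be the careful treatment of negation normal form in the propositional base case and the correct handling of the index set $Ag$ in the \emph{IOA} case as sketched above.
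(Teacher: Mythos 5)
Your proposal is correct and is exactly the standard soundness induction that the paper delegates to the cited reference (Lorini 2013) rather than reproducing: every axiom except \emph{IOA} follows from the equivalence-relation properties of $\R_{\Box}$ and $\R_{i}$ together with \textbf{(C1)} and the NNF negation lemma, and your treatment of the \emph{IOA} axiom via \textbf{(C2)} --- including the key observation that $v \in \R_i(u_i)$ gives $\R_i(v) = \R_i(u_i)$ so that $v \models [i]\phi_i$ and not merely $v \models \phi_i$, and that $v \in \R_{\Box}(w)$ by \textbf{(C1)} plus transitivity --- is the intended argument. The remark about ``padding'' the list of worlds in \textbf{(C2)} is unnecessary, as you yourself note, since the $u_i$ for $i \in Ag$ already supply the required $n$ pairwise $\R_{\Box}$-related worlds.
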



\noindent Observe that all axioms of $\ldm$ are within the Sahlqvist class. Therefore, we know that $\ldm$ is already strongly complete relative to the simpler class of frames defined by the first-order properties corresponding to its axioms \cite{BlaRijVen01} (\textit{cf.} \cite{BalHerTro08,HerSch08} for alternative 
 completeness proofs of $\ldm$ relative to this class of 
 relational frames). As mentioned previously, we are interested in $\ldm$ relative to the more involved \textit{temporal} frames. The usual canonical model construction from \cite{BlaRijVen01} cannot be 
 applied to obtain completeness 
 of $\ldm$ in relation to $\ldmt$-frames. This follows from the fact that the axioms of $\ldm$ do not impose any temporal structure on the canonical model of $\ldm$, and hence, we are not ensured that the resulting model qualifies 
 as a $\ldmt$-model. Theorem \ref{compl_ldm} is therefore proved via an alternative canonical model construction, which can be found in appendix \ref{Completeness_Proof_ldm}. 


\begin{theorem}[Completeness]\label{compl_ldm} Any consistent set $\Sigma \subset \mathcal{L}_{\ldm}$ is satisfiable.

\end{theorem}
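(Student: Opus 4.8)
The plan is to prove completeness via a tailored canonical model construction, since (as the authors note) the standard Henkin construction for $\ldm$ fails to yield the temporal structure required by a $\ldmt$-frame. First I would fix a consistent set $\Sigma \subset \mathcal{L}_{\ldm}$ and extend it to a maximal $\ldm$-consistent set $\Gamma_0$ by the usual Lindenbaum lemma. The key move is that the carrier of the model should not be \emph{all} maximal consistent sets, but rather a set of pairs (or sequences) carrying explicit temporal coordinates: intuitively, worlds of the form $(\Gamma, n)$ where $\Gamma$ is maximal $\ldm$-consistent and $n \in \mathbb{Z}$ records a position along a history, so that the temporal relations $\R_\g, \R_\h$ can be defined by comparing coordinates while the modal relations $\R_\Box, \R_i$ are defined, as usual, by the condition that the relevant boxed formulas transfer ($\Gamma \R_\Box \Delta$ iff $\{\phi : \Box\phi \in \Gamma\} \subseteq \Delta$, and similarly with $[i]$ for $\R_i$).

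The bulk of the argument then splits into verifying the frame conditions C1--C7 and proving a Truth Lemma. For C1, C2, C3 one reuses the Sahlqvist-correspondence facts about $\ldm$: C1 follows from the axiom $\Box\phi \to [i]\phi$, the equivalence properties from the S5-axioms for $\Box$ and $[i]$, C2 from the (IOA) axiom via a standard existence/consistency argument (given $n$ maximal consistent sets pairwise $\R_\Box$-related, one shows $\bigcup_i \{\phi : [i]\phi \in \Gamma_i\}$ is consistent, using (IOA) to rule out a contradiction), and C3 by defining $\R_{Ag}$ outright as $\bigcap_i \R_i$. The genuinely temporal conditions C4--C7 are where the coordinate bookkeeping pays off: transitivity and seriality of $\R_\g$, the comparability/connectedness conditions C4 and C5, and especially the irreflexivity condition C7 ($u \in \R_\Box(w) \Rightarrow u \notin \R_\g(w)$) should hold essentially by construction, because temporally-related worlds are forced to have distinct coordinates and hence cannot be settledness-related. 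Condition C6 (no choice between undivided histories, $\R_\g \circ \R_\Box \subseteq \R_{Ag} \circ \R_\g$) is the subtlest: it is not a theorem of $\ldm$ (being inexpressible in $\mathcal{L}_{\ldm}$), so it cannot be extracted from $\Gamma_0$; instead the construction must be \emph{designed} so that the witnessing temporal successors can always be chosen inside the intersection $\R_{Ag}(w)$, which is why one builds the model incrementally, at each stage adding only the temporal and modal successors demanded by the formulas present and choosing them coherently.

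Finally, the Truth Lemma is proved by induction on formula complexity; the Boolean cases are routine, the $\Box$/$\Diamond$ and $[i]/\lb i\rb$ cases use the S5 existence lemmas (for the diamond directions one needs that $\lb i\rb\phi \in \Gamma$ implies the existence of an $\R_i$-successor containing $\phi$, which follows from maximal consistency plus the normality of $[i]$), and the tense cases $\g,\f,\h,\p$ use the coordinate structure together with seriality of $\R_\g$. Applying the Truth Lemma to the world $(\Gamma_0, 0)$ gives a $\ldmt$-model satisfying every member of $\Sigma$, which is the claim. I expect the main obstacle to be exactly the simultaneous satisfaction of C6 and C7 alongside the existence requirements for the temporal diamonds: one must thread the needle so that adding future witnesses never violates irreflexivity and never forces a choice between undivided histories, which typically requires a careful stepwise construction with a well-chosen enumeration of demands rather than a one-shot canonical model. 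Since the authors explicitly defer the details to Appendix~\ref{Completeness_Proof_ldm}, I would present the above as the skeleton and relegate the inductive verifications and the bookkeeping to that appendix.
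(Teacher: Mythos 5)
Your skeleton has the right overall shape (Lindenbaum extension, worlds as pairs of an $\ldm$-MCS with a temporal coordinate, $\R_{\g}$ read off the coordinates, truth lemma at the designated world), and you correctly locate the crux at condition C6. But at exactly that point the proposal stops short of an argument: you say the construction "must be designed so that the witnessing temporal successors can always be chosen inside $\R_{Ag}(w)$" via "a careful stepwise construction with a well-chosen enumeration of demands," without saying what that design is. As literally described — standard canonical relations $\R_{\Box}$, $\R_{i}$ at every coordinate, and $\R_{\g}$ relating any two worlds with increasing coordinates — the construction fails: C6 requires $\R_{\g} \circ \R_{\Box} \subseteq \R_{Ag} \circ \R_{\g}$, but the canonical $\R_{\Box}$ is in general strictly larger than $\R_{Ag} = \bigcap_{i} \R_{i}$, so a $\R_{\Box}$-alternative of a temporal successor need not be reachable through $\R_{Ag}$ first. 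Moreover, if $\R_{\g}$ relates $(\Gamma,n)$ to every $(\Delta,m)$ with $n<m$, condition C4 (comparability of futures) also fails, since two distinct MCSs at the same later coordinate are neither $\R_{\g}$-related nor equal.

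The paper resolves both issues with a one-shot definition rather than an incremental one, and the trick is worth internalizing: histories are \emph{constant} — $\R^{\ldm}_{\g}$ relates $w^{j}$ only to $w^{k}$ (the \emph{same} MCS at a later coordinate $k>j$), which immediately gives C4, C5 and C7 — and at every positive coordinate both $\R_{\Box}$ and each $\R_{i}$ are redefined to be $\R^{\fut}_{\Box}(w) = \R^{\fut}_{i}(w) = \bigcap_{i \in Ag}\R^{\pres}_{i}(w)$. With the settledness relation collapsed to $\R_{Ag}$ at all future levels, C6 becomes automatic, and since $\Sigma \subset \mathcal{L}_{\ldm}$ contains no tense or $[Ag]$ operators, the truth lemma only needs to be proved at coordinate $0$, where the original canonical relations are in force — so the surgery at positive levels costs nothing. (Two smaller slips in your write-up: the coordinates can be $\mathbb{N}$ rather than $\mathbb{Z}$, and there are no tense cases in the truth lemma, because the MCSs are $\ldm$-MCSs and contain no formulas with $\g$, $\f$, $\h$, or $\p$.) Without some concrete mechanism of this kind, the C6/C4 verification is a genuine gap in the proposal.
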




\subsection{A Cut-free Labelled Calculus for $\ldm$}\label{cutfreeldm}

We now provide a cut-free labelled calculus for $\ldm$, which can be seen as a simplification of the tableaux calculus in \cite{Wan06}. Labelled sequents $\Gamma$ are defined through the following BNF grammar: $$\Gamma ::= x:\phi \ | \ \Gamma, \Gamma \ | \ \R_{\alpha}xy, \Gamma$$
where $x$ is from a countable set of labels $L = \{ x, y, z, ... \}$, $\alpha \in \{\Box\} \cup Ag$, and $\phi \in \mathcal{L}_{\ldm}$. Note that commas are used equivocally in the interpretation of a labelled sequent: representing (i) a conjunction when occurring between relational atoms, (ii) a disjunction when occurring between labelled formulae, and (iii) an implication when binding the multiset of relational atoms to the multiset of labelled formulae, which comprise a sequent. Last, we use the notation $\vdash_{G3\mathsf{X}} x:\phi$ (for $\mathsf{X} \in \{\ldm, \tstit, \xstit\}$) to denote here and later that the labelled formula $x:\phi$ is derivable in the calculus $\mathsf{G3X}$. 

The first order correspondents of all $\ldm$ axioms are \textit{geometric axioms}:
 that is, axioms of the form $\forall x_{1} ... x_{n} ((\phi_{1} \wedge ... \wedge \phi_{m}) \rightarrow \exists y_{1}...y_{k} (\psi_{1} \vee ... \vee \psi_{l}))$ where each $\phi_{i}$ 
  is atomic and does not contain free occurrences of $y_{j}$ (for 
 $1 \leq j \leq k$), and each $\psi_{i}$ 
  is a conjunction $\chi_{1} \wedge ... \wedge \chi_{r}$ of atomic formulae. The calculus $\ldml$ is obtained by transforming all such correspondents into rules; \textit{i.e.} \textit{geometric rules}. (For further discussion on extracting rules from axioms, we refer to \cite{Neg05,Neg16}.) Last, since our formulae are in negation normal form, we provide a one-sided version of the calculi introduced in \cite{Neg05}. This allows for a simpler formalism with fewer rules, but which is equivalent in expressivity.

\begin{definition}[The Calculus $\ldml$]\label{sequentldml} 

\begin{small}
\begin{center}
\begin{tabular}{c c c c}

\AxiomC{ }
\RightLabel{$\id$}
\UnaryInfC{$\Gamma, w:p, w:\overline{p} $}
\DisplayProof

&

\AxiomC{$\Gamma, w: \phi$}
\AxiomC{$\Gamma, w: \psi$}
\RightLabel{$\conr$}
\BinaryInfC{$\Gamma, w: \phi \wedge \psi$}
\DisplayProof

&

\AxiomC{$\Gamma, w: \phi, w : \psi$}
\RightLabel{$\disr$}
\UnaryInfC{$\Gamma, w: \phi \vee \psi$}
\DisplayProof

&

\AxiomC{$\Gamma, \R_{\Box}wv, v: \phi$}
\RightLabel{$\settr^{*}$}
\UnaryInfC{$\Gamma, w: \Box \phi$}
\DisplayProof
\end{tabular}
\end{center}

\begin{center}
\begin{tabular}{c @{\hskip 1 em} c @{\hskip 1 em} c}

\AxiomC{$\Gamma, \R_{\Box}wu, w: \Diamond \phi, u: \phi$}
\RightLabel{$\settdiar$}
\UnaryInfC{$\Gamma, \R_{\Box}wu, w: \Diamond \phi$}
\DisplayProof

&

\AxiomC{$\Gamma, \R_{i}wv, v: \phi$}
\RightLabel{$\stitr^{*}$}
\UnaryInfC{$\Gamma, w: [i] \phi$}
\DisplayProof

&

\AxiomC{$\Gamma, \R_{i}wu, w: \lb i \rb \phi, u: \phi$}
\RightLabel{$\stitdiar$}
\UnaryInfC{$\Gamma, \R_{i}wu, w: \lb i \rb \phi$}
\DisplayProof

\end{tabular}
\end{center}

\begin{center}
\begin{tabular}{c c c}

\AxiomC{$\R_{\Box}ww, \Gamma$}
\RightLabel{$\settrefl$}
\UnaryInfC{$\Gamma$}
\DisplayProof

&

\AxiomC{$\R_{i}ww, \Gamma$}
\RightLabel{$\stitrefl$}
\UnaryInfC{$\Gamma$}
\DisplayProof

&

\AxiomC{$\R_{\Box}wu_{1}, ..., \R_{\Box}wu_{n}, \R_{1}u_{1}v, ..., \R_{n}u_{n}v, \Gamma$}
\RightLabel{$\ioa^{*}$}
\UnaryInfC{$\R_{\Box}wu_{1}, ..., \R_{\Box}wu_{n},\Gamma$}
\DisplayProof

\end{tabular}
\end{center}

\begin{center}
\begin{tabular}{c c c}

\AxiomC{$\R_{\Box}wu, \R_{\Box}wv, \R_{\Box}uv, \Gamma$}
\RightLabel{$\setteucl$}
\UnaryInfC{$\R_{\Box}wu, \R_{\Box}wv, \Gamma$}
\DisplayProof

&

\AxiomC{$\R_{\Box}wu, \R_{i}wu, \Gamma$}
\RightLabel{$\bridge$}
\UnaryInfC{$\R_{i}wu, \Gamma$}
\DisplayProof

&

\AxiomC{$\R_{i}wu, \R_{i}wv, \R_{i}uv, \Gamma$}
\RightLabel{$\stiteucl$}
\UnaryInfC{$\R_{i}wu, \R_{i}wv, \Gamma$}
\DisplayProof

\end{tabular}
\end{center}
The `$\ast$' on the labels $\settr$, $\stitr$, and $\ioa$ indicates an eigenvariable condition for this rule: \textit{i.e.} the label $v$ occurring in the premise of the rule cannot occur in the conclusion.
\end{small}
\end{definition}


%

\noindent 
The rule $\id$ is an initial sequent and the rules $\conr$, $\disr$, $\settr$, $\settdiar$, $\stitr$ and $\stitdiar$ allow us to decompose connectives. Furthermore, as indicated by the relational atoms, the rules $\settrefl, \stitrefl, \setteucl,\stiteucl,\bridge$ capture the behavior of the corresponding modal operators, and the rule $\ioa$ secures independence of agents in $\ldml$. In order to establish the intended soundness and completeness results, we need to formally interpret a labelled sequent relative to a given model. For the sake of brevity,
we provide the semantics uniformly for all labelled sequent languages appearing in this paper: 



\begin{definition}[Interpretation, Satisfiability, Validity] 
%
%
%
Let $\mathsf{X}\in \{\ldm,\ldmt,$ $\xstit\}$. Let $M$ be a model for $\mathsf{X}$ with domain $W$, $L$ the set of labels used in the labelled sequent language of $\mathsf{G3X}$, $\Gamma$ a sequent in $\mathsf{G3X}$ and let $\R_{\alpha}$ be a relation of $M$. 
(We have $\R_{\alpha} \in \{\R_{\Box}, \R_{i}\}$ for $\mathsf{X} = \ldm$, $\R_{\alpha} \in \{\R_{\Box}, \R_{i}, \R_{Ag}, \R_{\g}, \breve{\R}_{\g}, \R_{\h}\}$ for $\mathsf{X} = \ldmt$, and $\R_{\alpha} \in \{\R_{\Box}, \R_{X}, \R_{A}\}$, for all $A \subseteq Ag$, when $\mathsf{X} = \xstit$. We take $\breve{\R}_{\g}$ as the complement of the relation $\R_{\g}$.
) Last, let $I$ be an \emph{interpretation function} of $L$ on $M$ that maps labels to worlds; \textit{i.e.} $I{:} \ L \mapsto W$. We say that, 
\begin{itemize}
\item[] a sequent $\Gamma$ is \emph{satisfied} in $M$ with $I$ iff for all relational atoms $\R_{\alpha} xy$ and equalities $x {=} y$ in $\Gamma$, if $\R_{\alpha} x^{I}y^{I}$ holds in $M$, then there must exist some $z : \phi$ in $\Gamma$ such that $M, z^{I} {\models} \phi$.
\end{itemize}
A sequent $\Gamma$ is \emph{valid} iff it is satisfiable in any model $M$ with any $I$ of $L$ on $M$.



\end{definition}

\begin{theorem}[Soundness]\label{Soundness_G3LDM}
Every sequent derivable in $\ldml$ is valid.
\end{theorem}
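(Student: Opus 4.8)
The plan is to prove soundness by induction on the height of the derivation in $\ldml$. The base case is the initial sequent $\id$: any sequent of the form $\Gamma, w{:}p, w{:}\overline{p}$ is trivially valid, since for every interpretation $I$ and model $M$, either $w^{I} \in V(p)$, in which case $M, w^{I} \models p$, or $w^{I} \notin V(p)$, in which case $M, w^{I} \models \overline{p}$; so the required witnessing labelled formula always exists. For the inductive step, I would assume that every sequent derivable with a shorter derivation is valid, and then argue rule by rule that the conclusion of each rule is valid whenever its premises are. The core of the argument is thus a case distinction over the rules of Definition~\ref{sequentldml}.

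For the propositional rules ($\conr$, $\disr$) and the diamond/box decomposition rules, the reasoning is routine: fix $M$ and $I$, suppose all the relational atoms in the conclusion are satisfied under $I$, and use the corresponding clause of Definition~\ref{Semantics_ldm_tstit} together with the induction hypothesis applied to the premise(s) to produce a witnessing labelled formula in the conclusion. For instance, for $\settr^{*}$ with conclusion $\Gamma, w{:}\Box\phi$, if the conclusion fails to be satisfied at some $M, I$, then in particular $M, w^{I} \not\models \Box\phi$, so there is $u \in \R_{\Box}(w^{I})$ with $M, u \not\models \phi$; extend $I$ to the eigenvariable $v$ by setting $v^{I} = u$ (legitimate precisely because of the eigenvariable condition, so the value of $I$ on labels of $\Gamma$ and on $w$ is unaffected), and observe that this extended interpretation falsifies the premise $\Gamma, \R_{\Box}wv, v{:}\phi$, contradicting the induction hypothesis. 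The rules $\settdiar$ and $\stitdiar$ are handled similarly but without any eigenvariable extension, using that the diamond formula is retained in the premise. For the frame rules $\settrefl, \stitrefl, \setteucl, \stiteucl, \bridge, \ioa^{*}$, one uses the corresponding first-order frame conditions from Definition~\ref{models_tkstit}: reflexivity of $\R_{\Box}$ and $\R_{i}$ for the $\refl$ rules, euclideanness (which together with reflexivity makes these equivalence relations) for the $\eucl$ rules, condition C1 ($\R_{i} \subseteq \R_{\Box}$) for $\bridge$, and condition C2 (independence of agents) for $\ioa^{*}$. In each case, if the conclusion is unsatisfied under some $M, I$, the relevant frame condition guarantees that the extra relational atom(s) in the premise also hold under $I$ (for $\ioa^{*}$ one must first pick the common element $v \in \bigcap_i \R_i(u_i^I)$ supplied by C2 and interpret the eigenvariable $v$ there), so the premise is unsatisfied as well, contradicting the induction hypothesis.

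The main obstacle, such as it is, lies in getting the bookkeeping around the eigenvariable conditions exactly right — ensuring that when we extend or modify the interpretation function $I$ to accommodate a fresh label appearing only in a premise, we do not disturb the interpretation of any label that occurs in the conclusion, so that the falsification of the conclusion genuinely transfers to a falsification of the premise. This is precisely what the $\ast$-annotations on $\settr$, $\stitr$, and $\ioa$ are designed to license, so the argument goes through, but it is the one place where care is needed rather than pure routine. Once all rules are checked, the induction is complete and every $\ldml$-derivable sequent is valid.
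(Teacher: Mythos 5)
Your proof is correct and follows exactly the route the paper takes: induction on the height of the derivation, with the base case for $\id$ and a rule-by-rule check that validity is preserved, using the eigenvariable conditions to extend the interpretation and the frame conditions of Definition~\ref{models_tkstit} for the relational rules. The paper merely compresses this into a two-line sketch with a citation to Negri's Theorem 5.3, so your write-up is a faithful (and more detailed) elaboration of the same argument.
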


\begin{proof} By induction on the height of the given $\ldml$ derivation. For initial sequents of the form $\Gamma, w{:}p, w {:} \overline{p}$ the claim is clear. 
 The inductive step is argued by showing that each inference rule preserves validity (cf. theorem 5.3 in \cite{Neg09}). 

\end{proof}


    

\begin{lemma}\label{sequent_ldm} For all $\phi\in \mathcal{L}_{\ldm}$, if $\vdash_{\ldm} \phi$, then $\vdash_{\ldml} x : \phi$.
\end{lemma}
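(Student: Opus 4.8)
The plan is to induct on the length of a Hilbert derivation witnessing $\vdash_{\ldm}\phi$, freely invoking the structural properties of $\ldml$ that accompany any $\mathsf{G3}$-style presentation and hold for the calculi of this paper: height-preserving admissibility of weakening, of uniform label substitution, and of contraction; invertibility of the propositional rules $\disr$ and $\conr$; and admissibility of cut $\cut$. I will also use the standard \emph{generalized initial sequent} lemma, that $\Gamma, x{:}\phi, x{:}\overline{\phi}$ is derivable in $\ldml$ for every $\phi \in \mathcal{L}_{\ldm}$ and every $\Gamma$ (an easy induction on $\phi$ from $\id$ and the logical rules). It is convenient to prove the slightly stronger statement that $\vdash_{\ldml} x{:}\phi$ for \emph{every} label $x$; by label substitution this is equivalent to the claim as stated.

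\textbf{Base case (axioms).} For each axiom schema $\chi$ of $\ldm$ I would exhibit an $\ldml$-derivation of $x{:}\chi$. The three propositional axioms are handled by applying $\disr$ and $\conr$ bottom-up until every leaf is a generalized initial sequent. The duality axioms $\Box\phi \vee \Diamond\overline{\phi}$ and $[i]\phi \vee \lb i \rb\overline{\phi}$ are obtained by a $\disr$, then $\settr^{*}$ (resp.\ $\stitr^{*}$) introducing a fresh $v$ together with the atom $\R_{\Box}xv$ (resp.\ $\R_{i}xv$), and then $\settdiar$ (resp.\ $\stitdiar$) using $v$ as witness, closing on $v{:}\phi, v{:}\overline{\phi}$. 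The reflexivity axioms $\Box\phi \to \phi$ and $[i]\phi \to \phi$ are similar but use $\settrefl$ (resp.\ $\stitrefl$) to produce the reflexive atom at $x$ before the diamond step; the euclideanness axioms $\Diamond\phi \to \Box\Diamond\phi$ and $\lb i\rb\phi \to [i]\lb i\rb\phi$ use $\setteucl$ (resp.\ $\stiteucl$) to transport the relevant atom between labels; the two $\mathsf{K}$-axioms use the box rule on the consequent, the diamond rules on the antecedents, $\conr$, and generalized initial sequents; and $\Box\phi \to [i]\phi$ uses $\bridge$ to pass from an introduced $\R_{i}xu$ to $\R_{\Box}xu$ before applying $\settdiar$.

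\textbf{Base case ($\ioa$).} The only genuinely laborious axiom is independence of agents. In negation normal form it reads $x{:}\,\bigvee_{i\in Ag}\Box\lb i\rb\overline{\phi_i} \vee \Diamond\bigwedge_{i\in Ag}[i]\phi_i$. After $\disr$, apply $\settr^{*}$ once for each disjunct $\Box\lb i\rb\overline{\phi_i}$, introducing fresh labels $u_1,...,u_n$ with atoms $\R_{\Box}xu_i$ and formulae $u_i{:}\lb i\rb\overline{\phi_i}$; then apply $\ioa^{*}$, introducing a fresh $v$ with $\R_{i}u_iv$ for all $i$. Using $\bridge$ on each $\R_{i}u_iv$ to obtain $\R_{\Box}u_iv$, together with $\settrefl$ and $\setteucl$ to derive the needed symmetry and transitivity instances for $\R_{\Box}$, one obtains $\R_{\Box}xv$. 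A $\settdiar$ step with witness $v$ then reduces the goal to $v{:}\bigwedge_i[i]\phi_i$, and $\conr$ splits it into goals $v{:}[i]\phi_i$; on the $i$-th branch $\stitr^{*}$ introduces a fresh $w_i$ with $\R_{i}vw_i$ and the formula $w_i{:}\phi_i$, and from $\R_{i}u_iv$ and $\R_{i}vw_i$ via $\stitrefl$ and $\stiteucl$ one derives $\R_{i}u_iw_i$, so that $\stitdiar$ applied to $u_i{:}\lb i\rb\overline{\phi_i}$ yields $w_i{:}\overline{\phi_i}$; the branch then closes as a generalized initial sequent on $w_i{:}\phi_i, w_i{:}\overline{\phi_i}$. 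Keeping track of the relational atoms in the right order, and invoking contraction where copies accumulate, is the one real subtlety, and this is the step I expect to cost the most work.

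\textbf{Inductive step.} There are two rules to simulate. For necessitation, given $\vdash_{\ldml} x{:}\phi$, label substitution gives $\vdash_{\ldml} v{:}\phi$ for $v$ fresh, weakening gives $\vdash_{\ldml} \R_{\Box}xv, v{:}\phi$, and $\settr^{*}$ (whose eigenvariable condition holds since $v$ does not occur in $x{:}\Box\phi$) gives $\vdash_{\ldml} x{:}\Box\phi$. For modus ponens, suppose $\vdash_{\ldml} x{:}(\overline{\theta}\vee\rho)$ and $\vdash_{\ldml} x{:}\theta$; invertibility of $\disr$ turns the first into $\vdash_{\ldml} x{:}\overline{\theta}, x{:}\rho$, and one application of $\cut$ on the cut formula $\theta$ against the second derivation yields $\vdash_{\ldml} x{:}\rho$. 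This closes the induction; as indicated, the $\ioa$ case is the main obstacle, everything else being routine manipulation of the modal sequent rules.
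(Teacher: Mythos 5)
Your proposal is correct and follows essentially the same route as the paper: each axiom and rule of $\ldm$ is simulated in $\ldml$ (the routine cases deferred to the standard $\mathsf{G3}$ literature, necessitation via substitution/weakening and $\settr$, modus ponens via invertibility of $\disr$ plus admissible cut), with the only nontrivial work being the IOA axiom. Your IOA derivation — $\settr$ to introduce the $u_i$, then $\ioa$ for the fresh $v$, $\bridge$ plus the derived $\R_{\Box}$-transitivity to obtain $\R_{\Box}xv$, $\settdiar$, $\conr$, and on each branch $\stitr$ followed by derived $\R_{i}$-transitivity and $\stitdiar$ — is exactly the derivation the paper exhibits for two agents.
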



\begin{proof} The derivation of each axiom and inference rule of $\ldm$, except for the IOA-axiom, is straightforward (See \cite{Neg05,Negvon01}). For readability, we only present the derivation of the IOA-axiom for two agents; the general case is similar:\\

\begin{small}
\begin{tabular}{@{\hskip -4em} c}
\AxiomC{$\R_{1}vu,\R_{1}yv, \R_{1}yu, ..., y : \lb 1 \rb \overline{\phi}_{1}, u : \overline{\phi}_{1},  u : \phi_{1}$}
\UnaryInfC{$\R_{1}vu,\R_{1}yv, \R_{i}yu, ..., y : \lb 1 \rb \overline{\phi}_{1}, u : \phi_{1}$}
\dashedLine \UnaryInfC{$\R_{1}vu,\R_{1}yv, ..., y : \lb 1 \rb \overline{\phi}_{1}, u : \phi_{1}$}
\UnaryInfC{$\R_{1}yv, ..., y : \lb 1 \rb \overline{\phi}_{1}, v: [1] \phi_{1}$}

\AxiomC{$\R_{2}vu,\R_{2}zv, \R_{i}zw, ..., z : \lb 2 \rb \overline{\phi}_{2}, w : \overline{\phi}_{2},  w : \phi_{2}$}
\UnaryInfC{$\R_{2}vw,\R_{2}zv, \R_{2}zw, ..., z : \lb 2 \rb \overline{\phi}_{2}, w : \phi_{2}$}
\dashedLine \UnaryInfC{$\R_{2}vw,\R_{2}zv, ..., z : \lb 2 \rb \overline{\phi}_{2}, w : \phi_{2}$}
\UnaryInfC{$\R_{2}zv, ..., z : \lb 2 \rb \overline{\phi}_{2}, v: [2] \phi_{2}$}

\BinaryInfC{$\R_{1}yv, \R_{2}zv,\R_{\Box}xy, \R_{\Box}yv, \R_{\Box}xv, \R_{\Box}xz, y : \lb 1 \rb \overline{\phi}_{1}, z: \lb 2 \rb \overline{\phi}_{2}, x: \Diamond ([1] \phi_{1} \wedge [2] \phi_{2}), v: [1] \phi_{1} \wedge [2] \phi_{2}$}

\UnaryInfC{$\R_{1}yv, \R_{2}zv,\R_{\Box}xy, \R_{\Box}yv, \R_{\Box}xv, \R_{\Box}xz, y : \lb 1 \rb \overline{\phi}_{1}, z: \lb 2 \rb \overline{\phi}_{2}, x: \Diamond ([1] \phi_{1} \wedge [2] \phi_{2})$}

\dashedLine \UnaryInfC{$\R_{1}yv, \R_{2}zv,\R_{\Box}xy, \R_{\Box}yv, \R_{\Box}xz, y : \lb 1 \rb \overline{\phi}_{1}, z: \lb 2 \rb \overline{\phi}_{2}, x: \Diamond ([1] \phi_{1} \wedge [2] \phi_{2})$}

\UnaryInfC{$\R_{1}yv, \R_{2}zv,\R_{\Box}xy, \R_{\Box}xz, y : \lb 1 \rb \overline{\phi}_{1}, z: \lb 2 \rb \overline{\phi}_{2}, x: \Diamond ([1] \phi_{1} \wedge [2] \phi_{2})$}
\UnaryInfC{$\R_{\Box}xy, \R_{\Box}xz, y : \lb 1 \rb \overline{\phi}_{1}, z: \lb 2 \rb \overline{\phi}_{2}, x: \Diamond ([1] \phi_{1} \wedge [2] \phi_{2})$}
\UnaryInfC{$x : \Box \lb 1 \rb \overline{\phi}_{1}, x: \Box \lb 2 \rb \overline{\phi}_{2}, x: \Diamond ([1] \phi_{1} \wedge [2] \phi_{2})$}
\UnaryInfC{$x : \Box \lb 1 \rb \overline{\phi}_{1} \vee \Box \lb 2 \rb \overline{\phi}_{2} \vee \Diamond ([1] \phi_{1} \wedge [2] \phi_{2})$}
\DisplayProof
\end{tabular}
\end{small}

\vspace{0.3cm}
\noindent The dashed lines in the above proof indicate the use of transitivity rules, which are derivable from the $\stitrefl$, $\stiteucl$, $\settrefl$, and $\setteucl$ rules (see \cite{Neg05}).

\end{proof}

\begin{theorem}[Completeness]\label{Compeleteness_G3ldm} For all $\phi \in \mathcal{L}_{\ldm}$, if $\models \phi$, then $\vdash_{\ldml} x: \phi$.
\end{theorem}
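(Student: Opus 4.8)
The plan is to obtain this completeness result not by a direct proof-search/countermodel construction on $\ldml$, but by chaining two facts already available in the excerpt: the Hilbert-style completeness of $\ldm$ relative to $\ldmt$-frames (Theorem~\ref{compl_ldm}) and the fact that $\ldml$ simulates the Hilbert calculus (Lemma~\ref{sequent_ldm}). Since Theorem~\ref{Soundness_G3LDM} already gives the converse direction, this detour will in fact establish that $\ldml$ exactly characterizes $\ldmt$-validity.

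Concretely, I would argue as follows. First I would observe that the sequent-validity of the one-formula sequent $x{:}\phi$ (which has no relational atoms) unfolds, via the interpretation in Definition~\ref{Semantics_ldm_tstit} and the accompanying satisfiability clause, to the statement that $M, x^{I} \models \phi$ for every $\ldmt$-model $M$ and every interpretation $I$, i.e. exactly to $\models \phi$ in the formula sense. So assume $\models \phi$. Then $\overline{\phi}$, the negation normal form negation of $\phi$, is satisfied at no world of any $\ldmt$-model; hence the singleton set $\{\overline{\phi}\}$ is not satisfiable. Applying the contrapositive of Theorem~\ref{compl_ldm}, the set $\{\overline{\phi}\}$ is $\ldm$-inconsistent, and since $\ldm$ contains classical propositional logic this yields $\vdash_{\ldm} \phi$. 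Finally, Lemma~\ref{sequent_ldm} converts this Hilbert derivation into a derivation of $x{:}\phi$ in $\ldml$, giving $\vdash_{\ldml} x{:}\phi$ as required.

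There is no deep obstacle here; the only points requiring care are bookkeeping ones. The first is making sure the two notions of validity line up, i.e. that "$\phi$ is a valid $\mathcal{L}_{\ldm}$-formula" and "the sequent $x{:}\phi$ is valid" coincide — this is immediate from the definitions but worth stating explicitly. The second is the passage from "$\{\overline{\phi}\}$ is unsatisfiable" to "$\vdash_{\ldm}\phi$", which uses Theorem~\ref{compl_ldm} applied to a singleton together with the classical deduction-theorem-style reasoning inside $\ldm$; one should note that $\overline{\phi}$ is genuinely (equivalent to) the negation of $\phi$, which is exactly how negation normal form was set up in the paper. With these two observations in place the argument is a three-line chain, and combined with Theorem~\ref{Soundness_G3LDM} it shows that derivability in $\ldml$ is sound and complete for $\ldmt$-frame validity.
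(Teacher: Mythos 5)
Your proposal is correct and follows exactly the route the paper takes: its proof of this theorem is literally "Follows from Theorem~\ref{compl_ldm} and Lemma~\ref{sequent_ldm}," and your write-up just spells out the standard intermediate steps (unsatisfiability of $\{\overline{\phi}\}$, contraposition of the Hilbert completeness theorem, then the simulation lemma). Nothing to object to.
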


\begin{proof} Follows from theorem \ref{compl_ldm} and lemma \ref{sequent_ldm}.

\end{proof}
Due to the fact that 
all labelled sequent calculi given in this paper
  fit within the scheme presented in \cite{Neg05,Neg16}, we obtain the subsequent theorem specifying their proof-theoretic properties:



\begin{theorem}\label{Properties_of_Calc} Each calculus $\mathsf{G3X}$ with $\mathsf{X} \in \{\ldm, \ldmt, \xstit \}$ has the following properties:

\begin{enumerate}

\item  All sequents of the form $\Gamma, x : \phi, x: \overline{\phi}$ are derivable in $\mathsf{G3X}$ with $\phi$ in the language $\mathcal{L}_{\mathsf{X}}$;

\item All inference rules of $\mathsf{G3X}$ are height-preserving invertible;


\item Weakening, contraction, and variable-substitution are height-preserving admissible;

\item Cut is admissible.

\end{enumerate}
\end{theorem}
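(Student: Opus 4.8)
The plan is to verify that each calculus $\mathsf{G3X}$ is an instance of the general format of labelled sequent systems with geometric rules developed by Negri \cite{Neg05} — together with its extension to \emph{systems of geometric rules} \cite{Neg16}, which is needed for the rules encoding independence of agents in $\xstitl$ — and then to specialize the uniform meta-theorems of that framework. The preliminary (and most routine) task is the conformity check: each of $\settrefl,\stitrefl,\setteucl,\stiteucl,\bridge,\ioa$ (and, in the temporal and $\xstit$ systems, their analogues, the equality rules, and the grand-coalition rules) has only relational atoms and equalities as active formulas and matches the geometric rule scheme, with the marked eigenvariable conditions being exactly those the scheme permits; the logical rules $\conr,\disr,\settr,\settdiar,\stitr,\stitdiar$ are the standard one-sided $\mathsf{G3}$ rules; and the Gabbay-style irreflexivity rule of $\ldmtl$ must be inspected separately, since it falls outside the pure geometric format. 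Granting this, items 1--4 follow as below.

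For item 1 I would argue by induction on the complexity of $\phi$, the base case being $\id$. For $\wedge$ and $\vee$ one applies the corresponding rules to the $\phi$- and $\overline\phi$-occurrences and invokes the induction hypothesis. For the modal cases one first applies the box-type rule (e.g. $\settr$ to $w{:}\Box\phi$), introducing a fresh successor together with a relational atom $\R_\Box wv$, then the matching diamond-type rule ($\settdiar$ to $w{:}\Diamond\overline\phi$) along that same atom, exposing $v{:}\phi,v{:}\overline\phi$, which closes by the induction hypothesis; the $[i]/\lb i\rb$ case is identical with $\R_i$, and the further modalities of $\ldmt$ and $\xstit$ are treated the same way.

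For item 3 I would first establish height-preserving admissibility of variable-substitution by induction on derivation height (all rules being closed under renaming of labels, with eigenvariables kept fresh), and then height-preserving admissibility of weakening by a similar induction, using substitution to rename eigenvariables that clash with the weakening formula. Item 2, height-preserving invertibility, then splits in two: every relational/geometric rule is invertible immediately, because each of its premises is the conclusion with extra relational atoms added, so height-preserving weakening supplies the derivation; the logical rules are invertible by the usual induction on height with a case analysis on the last rule, permuting the inverted rule upward. Height-preserving admissibility of contraction follows by a further induction on height, appealing to height-preserving invertibility in the case where a contracted formula is principal.

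For item 4, cut admissibility, I would run the standard double induction: a principal induction on the complexity of the cut formula with a subsidiary induction on the cut-height (the sum of the heights of the two premises). If the cut formula is not principal in one premise, the cut is permuted into that premise using height-preserving invertibility/admissibility of the structural rules and the subsidiary hypothesis, renaming eigenvariables apart via height-preserving substitution. If the cut formula $x{:}\phi$ is principal in both premises it is necessarily a logical formula — relational atoms occur only in antecedent position and are never cut formulas — so $\phi$'s main connective is decomposed, the cut being replaced by one or two cuts on proper subformulas; for a modal cut formula this reduction substitutes the eigenvariable of the box-rule premise by the witness label supplied by the diamond-rule premise before cutting on the immediate subformula. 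The hard part will be the final bookkeeping step: checking that the side rules genuinely commute with cut — in particular that the system of rules implementing $\ioa$ in $\xstitl$ and the irreflexivity rule of $\ldmtl$ satisfy the closure conditions of \cite{Neg16} — so that this generic argument applies unchanged to our concrete rule sets.
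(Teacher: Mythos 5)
Your proposal is correct and takes essentially the same route as the paper, which simply observes that all three calculi fall within the (generalized) geometric rule format of Negri and cites the uniform meta-theorems of \cite{Neg05,Neg16} (together with the closure condition discussed immediately after the theorem) for exactly the inductions you sketch. One small clarification: your worry about the Gabbay-style irreflexivity rule is unnecessary, since $\ldmtl$ does not contain that rule but instead encodes irreflexivity via the complement relation $\breve{\R}_{\g}$ through the rules $\irrtwo$, $\comp$, and $\comptwo$, all of which are geometric, so the generic argument applies without a separate case.
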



\begin{proof} See \cite{Neg05} and \cite{Neg16} for details.
\end{proof}

In order to maintain the admissibility of contraction, our calculi must satisfy the \emph{closure condition} \cite{Neg05,Neg16}. That is, 
 the calculi $\ldml,\ldmtl$ and $\xstitl$ adhere to the following condition: For any \textit{generalized geometric rule} in which a substitution of variables produces a duplication of relational atoms or equalities active in the rule, the instance of the rule with such duplicates contracted is added to the calculus. Since variable substitutions can only bring about a finite number of rule instances possessing duplications, the closure condition adds at most finitely many rules and is hence unproblematic. (Generalized geometric rules extend the class of geometric rules 
 and can be extracted from generalized geometric axioms. In short, these are formulae of the form $GA_{n} = \forall x_{1} ... x_{n} ((\phi_{1} \wedge ... \wedge \phi_{m}) \rightarrow (\exists y_{1} \bigwedge GA_{k_{1}} \vee ... \vee \exists y_{m} \bigwedge GA_{k_{m}}))$, 
 where each $\bigwedge GA_{k_{j}}$ (for $0 {\leq} k_{1}, \cdots\!, k_{m} {<} n$) stands for a conjunction of generalized geometric axioms, inductively 
 constructed up to $k_j$-depth with the base case $GA_{0}$ being a geometric axiom. For a formal treatment of these axioms and rules see \cite{Neg16}.)

%
%
%



\section{The Logic $\ldmt$}\label{TSTIT_Section}

    \subsection{Axiomatization for $\ldmt$}

The logic $\tstit$ extends the logic $\ldm$ through the incorporation of 
 tense modalities 
 and the modality 
 for the grand coalition 
 of agents (see definition \ref{ldmtlanguage}). 
 This additional expressivity allows for the application of $\ldmt$ in settings where one wishes to reason about the joint action of all agents, or 
 the consequences of choices over time. The logic was originally proposed in \cite{Lor13} as a Hilbert system, in this section we provide a corresponding cut-free calculus. 


\begin{definition}[The Logic $\ldmt$ \cite{Lor13}]\label{TSTIT_Def} The Hilbert system for the logic $\ldmt$ is defined as the logic $\ldm$ \emph{extended} with the following axioms and inference rules:
\begin{small}
\begin{center}
\begin{tabular}{c @{\hskip 1.5em} c @{\hskip 1.5em} c @{\hskip 1.5em} c}
$[Ag] \phi \rightarrow \phi$
&
$\lb Ag \rb \phi \rightarrow [Ag] \lb Ag \rb \phi$
&
$\bigwedge_{1 \leq i \leq n} [i] \phi_{i} \rightarrow [Ag] \bigwedge_{1 \leq i \leq n} \phi_{i}$
&
$\phi \rightarrow \g \p \phi$
\end{tabular}
\end{center}
\begin{center}
\begin{tabular}{c @{\hskip 1.5em} c @{\hskip 1.5em} c @{\hskip 1.5em} c @{\hskip 1.5em} c}
$\phi \rightarrow \h \f \phi$
&
$\g \phi \rightarrow \f \phi$
&
$\f \f \phi \rightarrow \f \phi$
&
$\f \p \phi \rightarrow \p \phi \vee \phi \vee \f \phi$
&
$\p \f \phi \rightarrow \p \phi \vee \phi \vee \f \phi$
\end{tabular}
\end{center}
\begin{center}
\begin{tabular}{c @{\hskip 1.2em} c @{\hskip 1.2em} c @{\hskip 1.4em} c}
$\g \phi \vee \f \overline{\phi}$
&
$\h \phi \vee \p \overline{\phi}$
&
$[Ag] \phi \vee \lb Ag \rb \overline{\phi}$
&
$\alpha (\phi \rightarrow \psi)\rightarrow (\alpha \phi\rightarrow \alpha\psi)$ for $\alpha\in\{\g,\h,[Ag]\}$
\end{tabular}
\end{center}
\begin{center}
\begin{tabular}{c @{\hskip 2em} c @{\hskip 2em} c @{\hskip 2em} c 
}
$\f \Diamond \phi \rightarrow \lb Ag \rb \f \phi$
&
\AxiomC{$\phi$}
\UnaryInfC{$\g \phi$}
\DisplayProof
&
\AxiomC{$\phi$}
\UnaryInfC{$\h \phi$}
\DisplayProof
&
\AxiomC{$(\Box \neg p \wedge \Box (\g p \wedge \h p)) \rightarrow \phi$}
\RightLabel{\text{ with $p \not\in \phi$}}
\UnaryInfC{$\phi$}
\DisplayProof
\end{tabular}
\end{center}
\end{small}
A derivation of $\phi$ in $\tstit$ from a 
 set of premises $\Theta$, is written as $\Theta \vdash_{\tstit} \phi$. When $\Theta$ is the empty set, we refer to $\phi$ as a \emph{theorem} and write $\vdash_{\ldmt} \phi$.
\end{definition}

\noindent Note that the axiom $\f \Diamond \phi \rightarrow \lb Ag \rb \f \phi$ characterizes the \emph{no choice between undivided histories} property (
definition \ref{models_tkstit}, C6). Furthermore, the last inference rule, a variation of Gabbay's irreflexivity rule \cite{Gab81}, characterizes the 
 property of $\R_{\g}$-irreflexivity (
definition \ref{models_tkstit}, C7). 
 For a discussion of all axioms and 
  rules see \cite{Lor13}.



\begin{theorem}[Soundness and Completeness~\cite{Lor13}]\label{Soundness_Completeness_T-STIT} For any formula $\phi \in \mathcal{L}_{\tstit}$, $\vdash_{\tstit} \phi$ iff $ \models \phi$.
\end{theorem}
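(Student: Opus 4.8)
The plan is to treat the two directions separately. For soundness I would re-use the fact, already recorded above, that $\ldm$ is sound for the class of $\ldmt$-models, so that only the new axioms and rules of Definition~\ref{TSTIT_Def} need to be inspected, by induction on the length of a $\ldmt$-derivation. Each new axiom matches one of the frame conditions of Definition~\ref{models_tkstit} in the familiar Sahlqvist manner: $[Ag]\phi\to\phi$ and $\lb Ag\rb\phi\to[Ag]\lb Ag\rb\phi$ hold because $\R_{Ag}$ is an equivalence relation (reflexivity and Euclideanness following from C3 together with the corresponding properties of the $\R_{i}$); $\bigwedge_{1\le i\le n}[i]\phi_{i}\to[Ag]\bigwedge_{i}\phi_{i}$ holds because $\R_{Ag}(w)=\bigcap_{i}\R_{i}(w)\subseteq\R_{i}(w)$ (C3 and C1); the converse pair $\phi\to\g\p\phi$ and $\phi\to\h\f\phi$ holds because $\R_{\h}=\breve{\R}_{\g}$; $\g\phi\to\f\phi$ and $\f\f\phi\to\f\phi$ hold by seriality and transitivity of $\R_{\g}$; the two ``$.3$''-style axioms $\f\p\phi\to\p\phi\vee\phi\vee\f\phi$ and $\p\f\phi\to\p\phi\vee\phi\vee\f\phi$ correspond to the connectedness conditions C5 and C4; the duality axioms and the $\mathsf{K}$-axioms for $\g,\h,[Ag]$ are routine; and $\f\Diamond\phi\to\lb Ag\rb\f\phi$ corresponds exactly to the composition inclusion $\R_{\g}\circ\R_{\Box}\subseteq\R_{Ag}\circ\R_{\g}$ of C6. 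Necessitation for $\g$ and $\h$ trivially preserves global truth.

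The only non-routine case of the soundness argument is the irreflexivity rule, which infers $\phi$ from $(\Box\neg p\wedge\Box(\g p\wedge\h p))\to\phi$ under the proviso $p\notin\phi$. Suppose $\not\models\phi$, say $M,w\not\models\phi$. As $p$ does not occur in $\phi$, I would modify the valuation of $p$ without affecting this, putting $V'(p):=\bigcup_{u\in\R_{\Box}(w)}\big(\R_{\g}(u)\cup\R_{\h}(u)\big)$. Using C7 together with the fact that $\R_{\Box}$ is an equivalence (if $u,v\in\R_{\Box}(w)$ then $v\in\R_{\Box}(u)$, whence $v\notin\R_{\g}(u)$ by C7, and $v\notin\R_{\h}(u)$ as well, since otherwise $u\in\R_{\g}(v)$, again contradicting C7), one verifies $V'(p)\cap\R_{\Box}(w)=\emptyset$. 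Thus in the model $M'$ obtained from $M$ by replacing the valuation of $p$ by $V'(p)$ we have $M',w\models\Box\neg p$ and $M',w\models\Box(\g p\wedge\h p)$, while $M',w\not\models\phi$ by the coincidence lemma; hence $M',w\not\models(\Box\neg p\wedge\Box(\g p\wedge\h p))\to\phi$, so the premise is not valid, and the rule preserves validity contrapositively.

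For completeness I would argue contrapositively and build a falsifying model by a canonical-model construction, as for Theorem~\ref{compl_ldm} but now covering the full language $\mathcal{L}_{\ldmt}$. Assuming $\not\vdash_{\ldmt}\phi$, so that $\{\overline{\phi}\}$ is consistent, form the canonical model $M^{c}$ of $\ldmt$ (maximal consistent sets as worlds; $\R_{\Box}^{c},\R_{i}^{c},\R_{Ag}^{c},\R_{\g}^{c}$ read off from the respective box-modalities and $\R_{\h}^{c}:=\breve{\R}_{\g}^{c}$, which agrees with the relation induced by $\h$ thanks to the converse axioms). Since every $\ldmt$-axiom is Sahlqvist, $M^{c}$ satisfies the first-order correspondents C1--C6 of Definition~\ref{models_tkstit}: $\R_{\Box}^{c},\R_{i}^{c},\R_{Ag}^{c}$ are equivalences, C1 comes from $\Box\phi\to[i]\phi$, C2 from the independence-of-agents axiom (the usual, slightly delicate, Lindenbaum-style argument is needed to produce the common witness), C3 from the conjunction axiom together with the derivable $[i]\phi\to[Ag]\phi$, seriality/transitivity/C4/C5 for $\R_{\g}^{c}$ from the tense axioms, and C6 from $\f\Diamond\phi\to\lb Ag\rb\f\phi$. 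What $M^{c}$ need \emph{not} satisfy is the irreflexivity condition C7, and repairing this is precisely the purpose of the Gabbay-style rule. I would fix a variable $p$ not occurring in $\phi$ and observe that $\{\overline{\phi}\}\cup\{\Box\neg p,\ \Box(\g p\wedge\h p)\}$ is consistent---otherwise $\vdash_{\ldmt}(\Box\neg p\wedge\Box(\g p\wedge\h p))\to\phi$, and since $p\notin\phi$ the irreflexivity rule would yield $\vdash_{\ldmt}\phi$, a contradiction. Extending this set to a maximal consistent set $\Gamma_{0}$ and passing to the submodel $M'$ of $M^{c}$ generated by $\Gamma_{0}$, conditions C1--C6 are inherited and the Truth Lemma localises, so $M',\Gamma_{0}\models\overline{\phi}$; and the conjunct $\Box\neg p\wedge\Box(\g p\wedge\h p)\in\Gamma_{0}$ forces $\R_{\g}^{c}$-irreflexivity on the $\R_{\Box}$-class of $\Gamma_{0}$ (every world there falsifies $p$ but forces $p$ at all its strict $\R_{\g}$- and $\R_{\h}$-successors), which is then propagated through $M'$ along the canonical relations using C4--C6, exactly as in the irreflexivity-rule completeness arguments of \cite{Gab81} and \cite{Lor13}. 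This yields a genuine $\ldmt$-model with $M',\Gamma_{0}\not\models\phi$, i.e.\ $\not\models\phi$.

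I expect the main obstacle to be exactly this last step: making the irreflexivity rule interact with the canonical construction so that the final frame is irreflexive \emph{everywhere}, not merely at the distinguished moment, while all of C1--C6 survive the restriction to the generated submodel. An alternative is to discard the raw canonical temporal relation and thread fresh histories through the moments of the STIT part to build an irreflexive $\R_{\g}$ by hand, as is done for Theorem~\ref{compl_ldm}; but then one must re-prove that this hand-built relation still captures $\g$ and $\h$ correctly, which reintroduces essentially the same difficulty. A secondary nuisance is the verification of C2 in $M^{c}$: although the independence-of-agents axiom is Sahlqvist, actually producing the witness in $\bigcap_{i}\R_{i}^{c}(u_{i})$ still calls for a careful maximal-consistent-set argument.
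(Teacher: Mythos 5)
First, a point of reference: the paper does not prove this theorem at all --- it is imported wholesale from \cite{Lor13}, so there is no in-paper proof to compare against. Judged on its own terms, your soundness half is essentially correct and complete: the axiom-by-axiom correspondence with {\rm \textbf{(C1)}}--{\rm \textbf{(C6)}} is routine, and your treatment of the Gabbay rule (redefining $V'(p):=\bigcup_{u\in\R_{\Box}(w)}(\R_{\g}(u)\cup\R_{\h}(u))$ and using {\rm \textbf{(C7)}} plus the fact that $\R_{\h}$ is the converse of $\R_{\g}$ to get $V'(p)\cap\R_{\Box}(w)=\emptyset$) is exactly the standard validity-preservation argument and it goes through. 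One notational caution: in this paper $\breve{\R}_{\g}$ denotes the \emph{complement} of $\R_{\g}$ (it is introduced for the labelled irreflexivity machinery), not the converse; writing $\R_{\h}=\breve{\R}_{\g}$ as you do says something false in the paper's notation, even though your intended meaning (converse) is clear.

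The completeness half has two genuine gaps. First, {\rm \textbf{(C3)}} is an \emph{equality} $\R_{Ag}(w)=\bigcap_{i}\R_{i}(w)$, and the axioms you cite ($\bigwedge_{i}[i]\phi_{i}\to[Ag]\bigwedge_{i}\phi_{i}$ and the derivable $[i]\phi\to[Ag]\phi$) only yield the inclusion $\R_{Ag}\subseteq\bigcap_{i}\R_{i}$ in the canonical model; the converse inclusion is not modally definable and the canonical model need not satisfy it. This is precisely why the paper, in its own Lemma~\ref{Provable_or_Counter}, has to invoke lemma~9 of \cite{Lor13} to \emph{transform} the countermodel into one where the equality holds while preserving satisfaction --- your sketch needs the same surgery. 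Second, and more seriously, your single application of the irreflexivity rule only makes $\{\overline{\phi},\Box\neg p,\Box(\g p\wedge\h p)\}$ consistent, which forces {\rm \textbf{(C7)}} on the $\R_{\Box}$-class of the root $\Gamma_{0}$ and nowhere else; there is no name available for the worlds in other $\R_{\Box}$-classes reached along $\R_{\g}^{c}$, and {\rm \textbf{(C4)}}--{\rm \textbf{(C6)}} do not by themselves propagate irreflexivity outward. You flag this as ``the main obstacle'' but offer no resolution, and the resolution is the actual content of the Gabbay-style completeness argument: one must either restrict to \emph{witnessed} maximal consistent sets (each carrying its own fresh name, obtained by iterating the rule) or perform a step-by-step network construction, rather than taking a one-shot generated submodel of the raw canonical model. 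As it stands, the completeness direction is a plan with its hardest step left open, not a proof.
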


\subsection{A Cut-free Labelled Calculus for $\tstit$}

Let $L=\{ x, y, z, ... \}$ be a countable set of labels. The language of $\ldmtl$ is defined as follows: 
$$\Gamma ::= x:\phi \ | \ \Gamma, \Gamma \ | \ \R_{\alpha} xy, \Gamma$$
where $x\in L$, $\phi\in\mathcal{L}_{\ldmt}$, and $\R_{\alpha} \in \{\R_{\Box}, \R_{i}, \R_{Ag}, \R_{\g}, \breve{\R}_{\g}, \R_{\h}\}$. On the basis of this language, we construct the calculus $\tstitl$ as an extension of $\ldml$. 


\begin{definition}[The Calculus $\ldmtl$]\label{sequentldmtl} The labelled calculus $\ldmtl$ consists of all the rules of $\ldml$ extended with the following set of rules:
\begin{small}
\begin{center}
\begin{tabular}{c c c c}

\AxiomC{$\R_{\h}wu, \R_{\g}uw, \Gamma$}
\RightLabel{$\hconv$}
\UnaryInfC{$\R_{\h}wu, \Gamma$}
\DisplayProof

&

\AxiomC{$ \Gamma, \R_{\h}wu, w: \p \phi, u : \phi$}
\RightLabel{$\pr$}
\UnaryInfC{$\Gamma, \R_{\h}wu, w: \p \phi$}
\DisplayProof

&

\AxiomC{ }
\RightLabel{$\comp$}
\UnaryInfC{$\R_{\g}wu, \breve{\R}_{\g}wu, \Gamma$}
\DisplayProof

\end{tabular}
\end{center}

\begin{center}
\begin{tabular}{c c c c}
\AxiomC{$\Gamma, \R_{\g}wv, v: \phi$}
\RightLabel{$\gr^{*}$}
\UnaryInfC{$\Gamma, w: \g \phi$}
\DisplayProof

&

\AxiomC{$\Gamma, \R_{\g}wu, w: \f \phi, u : \phi$}
\RightLabel{$\fr$}
\UnaryInfC{$\Gamma, \R_{\g}wu, w: \f \phi$}
\DisplayProof

&

\AxiomC{$ \R_{\g}wu, \R_{\h}uw, \Gamma$}
\RightLabel{$\gconv$}
\UnaryInfC{$ \R_{\g}wu, \Gamma$}
\DisplayProof

&

\end{tabular}
\end{center}

\begin{center}
\begin{tabular}{c c c c}

\AxiomC{$ \Gamma, \R_{Ag}wu, w: \lb Ag \rb \phi, u : \phi$}
\RightLabel{$\agdiar$}
\UnaryInfC{$\Gamma, \R_{Ag}wu, w: \lb Ag \rb \phi$}
\DisplayProof

&

\AxiomC{$\R_{Ag}ww, \Gamma$}
\RightLabel{$\agtrefl$}
\UnaryInfC{$\Gamma$}
\DisplayProof

&

\AxiomC{$w = w, \Gamma$}
\RightLabel{$(\mathsf{refl_{=}})$}
\UnaryInfC{$\Gamma$}
\DisplayProof

\end{tabular}
\end{center}

\begin{center}
\begin{tabular}{c}

\AxiomC{$\R_{\g}uv, \R_{\g}wu, \R_{\g}wv, \Gamma$}
\AxiomC{$u = v, \R_{\g}wu, \R_{\g}wv, \Gamma$}
\AxiomC{$\R_{\g}vu, \R_{\g}wu, \R_{\g}wv, \Gamma$}
\RightLabel{$\gconn$}
\TrinaryInfC{$\R_{\g}wu, \R_{\g}wv, \Gamma$}
\DisplayProof

\end{tabular}
\end{center}

\begin{center}
\begin{tabular}{c}

\AxiomC{$\R_{\h}uv, \R_{\h}wu, \R_{\h}wv, \Gamma$}
\AxiomC{$u = v, \R_{\h}wu, \R_{\h}wv, \Gamma$}
\AxiomC{$\R_{\h}vu, \R_{\h}wu, \R_{\h}wv, \Gamma$}
\RightLabel{$\hconn$}
\TrinaryInfC{$\R_{\h}wu, \R_{\h}wv, \Gamma$}
\DisplayProof
\end{tabular}
\end{center}

\begin{center}
\begin{tabular}{c c}
\AxiomC{$ \R_{\g}wu, \R_{\Box}uz, \R_{Ag}wv, \R_{\g}vz, \Gamma$}
\RightLabel{$\ncuh^{*}$}
\UnaryInfC{$ \R_{\g}wu, \R_{\Box}uz, \Gamma$}
\DisplayProof

&

\AxiomC{$\R_{\g}wu, \Gamma$}
\AxiomC{$\breve{\R}_{\g}wu, \Gamma$}
\RightLabel{$\comptwo$}
\BinaryInfC{$\Gamma$}
\DisplayProof

\end{tabular}
\end{center}

\begin{center}
\begin{tabular}{c c c}

\AxiomC{$\R_{\g}wu, \R_{\g}uv, \R_{\g}wv, \Gamma$}
\RightLabel{$\gtrans$}
\UnaryInfC{$\R_{\g}wu, \R_{\g}uv, \Gamma$}
\DisplayProof

&

\AxiomC{$\R_{Ag}wu, \R_{i}wu, \Gamma$}
\RightLabel{$\agtd$}
\UnaryInfC{$\R_{Ag}wu, \Gamma$}
\DisplayProof

&

\AxiomC{$ \Gamma, \R_{\h}wv, v: \phi$}
\RightLabel{$\hr^{*}$}
\UnaryInfC{$ \Gamma, w: \h \phi$}
\DisplayProof

\end{tabular}
\end{center}

\begin{center}
\begin{tabular}{c c c}

\AxiomC{$ \R_{\g}wv, \Gamma$}
\RightLabel{$\gser^{*}$}
\UnaryInfC{$ \Gamma$}
\DisplayProof

&

\AxiomC{$\R_{Ag}wu, \R_{Ag}wv, \R_{Ag}uv, \Gamma$}
\RightLabel{$\agteucl$}
\UnaryInfC{$\R_{Ag}wu, \R_{Ag}wv, \Gamma$}
\DisplayProof

&

\AxiomC{$\R_{\Box}wu, \breve{\R}_{\g}wu, \Gamma$}
\RightLabel{$\irrtwo$}
\UnaryInfC{$\R_{\Box}wu, \Gamma$}
\DisplayProof
\end{tabular}
\end{center}

\begin{center}
\begin{tabular}{c c c}

\AxiomC{$w = u, \Delta[w], \Delta[u], \Gamma$}
\RightLabel{$(\mathsf{sub_{=}})$}
\UnaryInfC{$w = u, \Delta[w], \Gamma$}
\DisplayProof

&

\AxiomC{$w = u, w = v, u = v, \Gamma$}
\RightLabel{$(\mathsf{eucl_{=}})$}
\UnaryInfC{$w = u, w = v, \Gamma$}
\DisplayProof

&

\AxiomC{$ \Gamma, \R_{Ag}wv, v: A$}
\RightLabel{$\agr^{*}$}
\UnaryInfC{$ \Gamma, w: [Ag] A$}
\DisplayProof

\end{tabular}
\end{center}
For $\hr$, $\agr$, $\gr$, $\ncuh$, and $\gser$ the `$\ast$' states that $v$ must be an eigenvariable.
\end{small}

\end{definition}

\noindent We note that the rules $\gconv$ and $\hconv$ express the converse relation between $\R_{\g}$ and $\R_{\h}$, and the rules $\agtd$, $\gconn$, $\hconn$, $\ncuh$ and $\{\irrtwo,\comp,$ $\comptwo \}$ correspond to conditions {\rm \textbf{(C3)}}-{\rm \textbf{(C7)}} of definition \ref{models_tkstit}, respectively. Furthermore, the notation $\Delta[u]$ in the substitution rule $(\mathsf{sub_{=}})$ is used to express a collection of relational atoms and labelled formulae where all occurrences of the label $w$ in $\Delta[w]$ have been replaced by occurrences of $u$. This notation uniformly captures all of the substitution rules given in \cite{Neg05}.
 
\begin{theorem}[Soundness]\label{Soundness_G3Tstit} Every sequent derivable in $\ldmtl$ is valid.

\end{theorem}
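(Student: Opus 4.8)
The plan is to proceed exactly as in the proof of Theorem~\ref{Soundness_G3LDM}: by induction on the height of the given $\ldmtl$-derivation one shows that every zero-premise rule yields a valid sequent and that every inference rule preserves validity, so that validity percolates from the leaves down to the conclusion. Since $\ldmtl$ contains all the rules of $\ldml$, for those the inductive step is already done; what is left is to verify the rules added in Definition~\ref{sequentldmtl}, which it is convenient to split into four groups.

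First, the logical rules $\gr^{*}$, $\hr^{*}$, $\agr^{*}$ and the existential rules $\fr$, $\pr$, $\agdiar$ are handled by the standard modal-rule argument, reading off satisfaction clauses 9--14 of Definition~\ref{Semantics_ldm_tstit}. Concretely I would argue contrapositively: if the conclusion fails under some interpretation $I$ on a $\ldmt$-model $M$, then for the box rules the relevant universal satisfaction clause produces a witness world $u$, and redefining $I$ on the eigenvariable $v$ (which by the eigenvariable condition is absent from the conclusion) to denote $u$ falsifies the premise, contradicting the induction hypothesis; the existential rules need no such move. The same interpretation shift handles the two remaining eigenvariable rules, $\gser^{*}$ (using seriality of $\R_{\g}$) and $\ncuh^{*}$, where the witness for $v$ is the intermediate world supplied by the composition $\R_{\g}\circ\R_{\Box}\subseteq\R_{Ag}\circ\R_{\g}$ of condition C6.

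Second come the purely relational rules $\gtrans$, $\agtd$, $\agtrefl$, $\agteucl$, $\gconv$, $\hconv$, $\gconn$ and $\hconn$, each of which mirrors one of the frame conditions of Definition~\ref{models_tkstit}: transitivity of $\R_{\g}$, the inclusion $\R_{Ag}(w)\subseteq\R_{i}(w)$ from C3, reflexivity and euclideanness of $\R_{Ag}$, the converse relationship between $\R_{\g}$ and $\R_{\h}$, and conditions C4 and C5. For each, assuming the conclusion is unsatisfied under $I$ means all its relational atoms hold in $M$ while none of its labelled formulas is satisfied; the corresponding frame property then forces the additional relational atom(s) of the premise(s) to hold -- in the ternary rules $\gconn$, $\hconn$ one of the three disjuncts of C4 (resp.\ C5) is realised, so one of the three premises is unsatisfied -- contradicting the induction hypothesis. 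The equality rules $\reflis$, $\euclis$ and $\subis$ are sound simply because $=$ is interpreted as genuine identity of worlds, making them instances of reflexivity, symmetry-plus-transitivity, and the substitutivity of identicals; this is verbatim as in \cite{Neg05}.

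The one genuinely new point, and the step I expect to require the most care, concerns the rules $\comp$, $\comptwo$ and $\irrtwo$ governing the complement relation $\breve{\R}_{\g}$: here one must use that in the interpretation $\breve{\R}_{\g}$ denotes the set-theoretic complement of $\R_{\g}$. Then $\comp$ is sound because $\R_{\g}w^{I}u^{I}$ and $\breve{\R}_{\g}w^{I}u^{I}$ cannot both hold, so any sequent containing both atoms is vacuously satisfied; $\comptwo$ is sound by the case split on whether $(w^{I},u^{I})\in\R_{\g}$, the two cases using the first and the second premise; and $\irrtwo$ is sound precisely by the irreflexivity condition C7, which gives that $\R_{\Box}w^{I}u^{I}$ implies $(w^{I},u^{I})\notin\R_{\g}$, i.e.\ $\breve{\R}_{\g}w^{I}u^{I}$ holds, so this atom may safely be added to the antecedent. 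Since the whole argument fits the general scheme of \cite{Neg05,Neg16}, one may alternatively just appeal to the soundness results established there.
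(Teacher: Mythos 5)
Your proposal is correct and follows exactly the route the paper takes: induction on derivation height, showing each rule of $\ldmtl$ preserves validity under the sequent interpretation (the paper simply says ``similar to Theorem~\ref{Soundness_G3LDM}'' and defers to the general results of \cite{Neg05,Neg09,Neg16}). Your explicit verification of the eigenvariable rules via reinterpretation and of the $\breve{\R}_{\g}$-rules $\comp$, $\comptwo$, $\irrtwo$ against condition C7 just spells out the details the paper leaves implicit.
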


\begin{proof} Similar to theorem \ref{Soundness_G3LDM}.

\end{proof} 
 
\noindent Unfortunately, with respect $\ldmtl$ completeness, we cannot use the relatively simple strategy applied in proving $\ldml$ completeness. This is because the irreflexivity rule (def. \ref{TSTIT_Def}) does not readily lend itself to derivation in $\tstitl$. Here we prove $\ldmtl$ completeness relative to irreflexive $\ldmt$-frames by leveraging the methods presented in \cite{Neg09}. (NB. For this reason, we introduced $\breve{\R}_{\g}$--the complement of $\R_{\g}$--
 directly into the language of the proof system.)
 %

\begin{lemma}\label{Provable_or_Counter} Let $\Gamma$ be a $\ldmtl$-sequent. Either, $\Gamma$ is $\ldmtl$-derivable, or it has a $\ldmt$-countermodel.

\end{lemma}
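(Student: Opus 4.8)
The plan is to run the standard Negri-style reductio/completeness argument for labelled calculi (as in \cite{Neg09}): we attempt a root-first proof search for $\Gamma$ in $\ldmtl$, and either the search terminates in a derivation, or it produces a (possibly infinite) saturated branch from which we read off a $\ldmt$-countermodel. First I would fix a fair strategy for applying the rules of $\ldmtl$ to $\Gamma$, building a derivation tree upward, ensuring that every rule that can be applied is eventually applied along every non-closed branch (with eigenvariables drawn from a fixed reservoir of fresh labels, and reusing witnesses where the corresponding principal formula/relational atom is already present so as not to loop needlessly). If every branch closes with an instance of $\id$ (or of the closure conditions, including $w{=}w$ and the $\comp$-axiom forcing closure), then $\Gamma$ is $\ldmtl$-derivable and we are done. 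Otherwise, by König's lemma the tree has an infinite branch, or a finite non-closing one; in either case take the union $\overline{\Gamma}$ of all labelled formulae and relational atoms occurring on that branch.

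Second, from $\overline{\Gamma}$ I would define the candidate model $M$. Its domain $W$ is the set of labels occurring on the branch, quotiented by the equality atoms $x{=}y$ appearing in $\overline{\Gamma}$ — here the rules $(\mathsf{refl}_=)$, $(\mathsf{eucl}_=)$ and $(\mathsf{sub}_=)$ guarantee that the collection of equality atoms is (the restriction of) an equivalence relation and is respected by all other atoms, so the quotient is well defined. For each $\alpha$ put $\R_\alpha x^I y^I$ iff $\R_\alpha xy \in \overline{\Gamma}$ (and for $\breve{\R}_\g$ similarly); the interpretation $I$ sends a label to its equivalence class, and the valuation is $V(p) = \{x^I : x{:}p \in \overline{\Gamma}\}$. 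The crucial bookkeeping is to check that $M$ is genuinely a $\ldmt$-frame: each saturation/frame rule of $\ldmtl$ forces the corresponding first-order condition of Definition \ref{models_tkstit}. Concretely, $\settrefl,\stitrefl,\agtrefl$ together with the euclidean rules give that $\R_\Box,\R_i,\R_{Ag}$ are equivalences; $\bridge$ and $\agtd$ give C1 and (with $\ageff$-style rules) C3; $\ioa$ gives C2; $\gtrans,\gser$ give transitivity and seriality of $\R_\g$; $\gconv,\hconv$ make $\R_\h$ the converse of $\R_\g$; $\gconn,\hconn$ give the connectedness conditions C4, C5; $\ncuh$ gives C6; and the trio $\{\irrtwo,\comp,\comptwo\}$ together with the behaviour of $\breve{\R}_\g$ forces C7, i.e. $\R_\g$-irreflexivity on the $\R_\Box$-classes. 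Saturation under the logical rules $\settr^*,\stitr^*,\agr^*,\gr^*,\hr^*$ and their diamond counterparts $\settdiar,\stitdiar,\agdiar,\fr,\pr$ (plus $\conr,\disr$) then yields, by the usual induction on formula complexity, a \emph{truth lemma}: for every $x{:}\phi \in \overline{\Gamma}$ we have $M, x^I \not\models \phi$.

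Third, I would conclude: since the branch never closed, no pair $x{:}p, x{:}\overline p$ both lie in $\overline{\Gamma}$, so the valuation is consistent with the truth lemma; and every relational atom $\R_\alpha xy$ of $\Gamma \subseteq \overline{\Gamma}$ holds in $M$ while every labelled formula $z{:}\phi$ of $\Gamma$ is falsified at $z^I$. Hence $\Gamma$ is not satisfied in $M$ with $I$, i.e. $M$ with $I$ is a $\ldmt$-countermodel to $\Gamma$, which is the second disjunct.

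**Main obstacle.** The delicate point is the treatment of irreflexivity (C7) and the complement relation $\breve{\R}_\g$: unlike the purely "positive" geometric rules, the rules $\comp,\comptwo,\irrtwo$ interact with a relation that is not monotonically generated by the search, so one must argue carefully that the saturated branch assigns $\breve{\R}_\g xy$ in a way genuinely complementary to $\R_\g xy$ on the relevant labels, and that this forces $\R_\g$-irreflexivity without over-determining the model — this is exactly the subtlety that motivated adding $\breve{\R}_\g$ to the object language (as the authors note) and is where the adaptation of \cite{Neg09} does real work. A secondary nuisance is ensuring termination-or-fair-saturation of the proof search in the presence of eigenvariable rules so that the branch, if infinite, is genuinely saturated; this is handled by the standard scheduling argument, but must be stated.
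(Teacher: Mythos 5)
Your overall strategy is exactly the paper's: build the reduction tree of $\Gamma$ in the style of \cite{Neg09}, close every branch or extract an open (possibly infinite) branch via K\"onig's lemma, quotient the labels by the equality atoms, read off the relations and valuation, verify the frame conditions rule-by-rule, and conclude with a truth lemma. Two concrete points in your write-up are wrong or incomplete, however. First, your valuation has the wrong polarity: you set $V(p) = \{x^I : x{:}p \in \overline{\Gamma}\}$, but then claim the truth lemma $M, x^I \not\models \phi$ for every $x{:}\phi \in \overline{\Gamma}$ --- these contradict each other already at the atomic case. Since the calculus is one-sided and the labelled formulae of a sequent are read disjunctively, the countermodel must \emph{falsify} every labelled formula on the open branch, so $x{:}p \in \overline{\Gamma}$ must force $x^I \not\in V(p)$ (this is exactly how the paper defines $V$); openness of the branch is what then guarantees $x{:}\overline{p} \in \overline{\Gamma}$ can be handled consistently.

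Second, your claim that the rules deliver condition C3 ($\R_{Ag} = \bigcap_{i} \R_{i}$) does not go through: the only relevant rule of $\ldmtl$ is $\agtd$ (the ``$\ageff$-style rules'' you invoke belong to $\xstitl$, not $\ldmtl$), and it yields only the inclusion $\R_{Ag} \subseteq \bigcap_{i \in Ag} \R_{i}$ in the extracted structure; nothing in the saturation forces the converse inclusion. The paper closes this gap by appealing to Lemma 9 of \cite{Lor13}, which transforms the extracted structure into a genuine $\ldmt$-model satisfying $\R_{Ag} = \bigcap_{i} \R_{i}$ while preserving the satisfaction of all formulae. Without this (or some equivalent repair), your candidate $M$ need not be a $\ldmt$-model and the second disjunct of the lemma is not established. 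Your treatment of C7 via $\irrtwo$, $\comp$, $\comptwo$ is sketched at the right level of detail and matches the paper's argument (both for $\R_{\Box}xy \wedge \R_{\g}xy$ and for $\R_{\g}ww$ via $\settrefl$), so that part is fine as a plan.
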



\begin{proof} We construct the \textit{Reduction Tree} (\textbf{RT}) of a given sequent $\Gamma$, following the method of \cite{Neg09}. If \textbf{RT} is finite, all leaves are initial sequents that are conclusions of $\id$ or $\comp$. If \textbf{RT} is infinite, by K\"onig's lemma, there exists an infinite branch: 
 $\Gamma_{0}$, $\Gamma_{1}$, ..., $\Gamma_{n}$,... (with $\Gamma_{0} {=}\Gamma$). Let $\mathbf{\Gamma}$ = $\bigcup \Gamma_{i}$. We define a $\ldmt$-model $M^{\ast} {=} (W, \R_{\Box}, \{R_{i} | i \in Ag\}, \R_{Ag}, \R_{\g}, \R_{\h}, V)$ as follows: Let $x \thicksim_{\mathbf{\Gamma}} y$ \textit{iff} $x {=} y \in \mathbf{\Gamma}$. (Usage of the rules ($\mathsf{ref}_{=}$) and ($\mathsf{eucl}_{=}$) in the infinite branch ensure 
  $\thicksim_{\mathbf{\Gamma}}$ is an equivalence relation.) Define $W$ to consist of all equivalence classes $[x]$ of labels in $\mathbf{\Gamma}$ under $\thicksim_{\mathbf{\Gamma}}$. For each $\R_{\alpha} xy \in \mathbf{\Gamma}$ let $([x]_{\thicksim_{\mathbf{\Gamma}}},[y]_{\thicksim_{\mathbf{\Gamma}}})\in \R_{\alpha}$ (with $\R_{\alpha} {\in} \{\R_{\Box}, \R_{i}, \R_{Ag}, \R_{\g}, \breve{\R}_{\g}, \R_{\h}\}$), and for each labelled propositional 
 atom $x : p \in \mathbf{\Gamma}$, let $[x]_{\thicksim_{\mathbf{\Gamma}}} \not\in V(p)$. It is a routine task to show that all relations and the valuation are well-defined. Last, let the interpretation $I {:} L {\mapsto} W$ map each label $x$ to the class of labels $[x]_{\thicksim_{\mathbf{\Gamma}}}$ containing $x$, and suppose $I$ maps all other labels not in ${\mathbf{\Gamma}}$ arbitrarily. We show that: \textbf{(i)} $M^{\ast}$ is a $\ldmt$ model, and \textbf{(ii)} $M^{\ast}$ is a counter-model for $\Gamma$.

\ \ \textbf{(i)} First, we 
 assume w.l.o.g. that $\Gamma {\neq} \emptyset$ because the empty sequent is not satisfied on any model. Thus, there must exist at least one label in $\Gamma$; i.e. 
 $W {\neq} \emptyset$. 

\ \ We argue that $\R_{\Box}$ is an equivalence relation and omit the analogues proofs showing that $\R_{i}$ and $\R_{Ag}$ are equivalence relations. Suppose, for some $\Gamma_{n}$ in the infinite branch there occurs a label $x$ but $\R_{\Box}xx \not\in \Gamma_{n}$. By definition of \textbf{RT}, at some later stage $\Gamma_{n+k}$ the rule $\settrefl$ will be applied; hence, $\R_{\Box}xx \in \mathbf{\Gamma}$. The argument is similar for the $\setteucl$ rule. Properties {\rm \textbf{(C1)}} and  {\rm \textbf{(C2)}} follow from the rules $\bridge$ and $\ioa$, respectively. Regarding {\rm \textbf{(C3)}}, we only obtain $\R_{Ag} \subseteq \bigcap_{i \in Ag} \R_{i}$ in $M^{\ast}$ via the $\agtd$ rule. Using lemma 9 of \cite{Lor13}, we can transform $M^{\ast}$ into a model where (i) $\R_{Ag} {=} \bigcap_{i \in Ag} \R_{i}$ and where (ii) the model satisfies the same formulae.

\ \  We obtain that $\R_{\g}$ is transitive and serial due to the $\gtrans$ and $\gser$ rules. $\R_{\h}$ is the converse of $\R_{\g}$ by 
 $\gconv$ and $\hconv$. The properties {\rm \textbf{(C4)}}, {\rm \textbf{(C5)}} and {\rm \textbf{(C6)}} follow from the rules $\gconn$, $\hconn$ and $\ncuh$
, respectively.

\ \ {\rm \textbf{(C7)}} follows from $\irrtwo$, $\comp$, and the equality rules: these rules ensure that ($\ast$) if $[u]_{\thicksim_{\mathbf{\Gamma}}} \in \R_{\Box}([w]_{\thicksim_{\mathbf{\Gamma}}})$, then $[u]_{\thicksim_{\mathbf{\Gamma}}} \not\in \R_{\g}([w]_{\thicksim_{\mathbf{\Gamma}}})$. In what follows, we abuse notation and use $[w]$ to denote equivocally the label $w$, as well as any other label $v$ for which a chain of equalities between $w$ and $v$ occurs in the sequent. The claim ($\ast$) is obtained accordingly: if both $\R_{\Box}[w][u]$ and $\R_{\g}[w][u]$ appear together in some sequent $\Gamma_{i}$, then higher up in the infinite branch, the equality rules will introduce relational atoms of the form $\R_{\Box}w'u'$ and $\R_{\g}w'u'$. Eventually, the rule $\irrtwo$ will also be applied and, subsequently, the rule $\comp$ will ensure that the reduction tree procedure halts for the given branch. Moreover, if $\R_{\g}[w][w]$ occurs in a sequent $\Gamma_{i}$ of \textbf{RT}, then higher up in the branch the equality rules will introduce a relational atom of the form $\R_{\g}w'w'$. Eventually, $\settrefl$ will be applied which adds $\R_{\Box}w'w'$ to the branch containing $\Gamma_{i}$. Lastly, $\irrtwo$ will be applied even higher up  this branch, adding $\breve{\R}_{\g}w'w'$, which by $\comp$ will halt the \textbf{RT}-procedure for that branch. Thus we may conclude: for any infinite branch of \textbf{RT} 
 $\R_{\g}ww$ will not occur for any label $w$; meaning that not only will $M^{\ast}$ satisfy {\rm \textbf{(C7)}}, its relation $\R_{\g}$ will be irreflexive. Additionally, note that $\comptwo$ will ensure that $\breve{\R}_{\g}$ is the complement of $\R_{\g}$.



\ \ Lastly, as long as $[x]_{\thicksim_{\mathbf{\Gamma}}} \not\in V(p)$ when $x : p \in \mathbf{\Gamma}$, 
 all other labels can be mapped by $V$ in any arbitrary manner. Thus, $V$ is a valid valuation function.

\ \ \textbf{(ii)} By construction, $M^{\ast}$ satisfies each relational atom in $\mathbf{\Gamma}$, and therefore, satisfies each relational atom in $\Gamma$. By induction on the complexity of $\phi$ it is shown that for any formula $x : \phi \in \mathbf{\Gamma}$ we have $M^{\ast}, [x]_{\thicksim_{\mathbf{\Gamma}}} \not\models \phi$ (See \cite{Neg09} for details). Hence, $\Gamma$ is falsified on $M^{\ast}$ with $I$.

\end{proof}

\begin{theorem}[Completeness]\label{Completeness_G3Tstit} Every valid sequent is derivable in $\ldmtl$.

\end{theorem}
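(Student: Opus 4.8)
The plan is to obtain Theorem~\ref{Completeness_G3Tstit} directly from Lemma~\ref{Provable_or_Counter} by contraposition, so that essentially no fresh work is needed beyond what the reduction-tree construction already delivers.

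First I would unfold the statement: let $\Gamma$ be an arbitrary $\ldmtl$-sequent and suppose $\Gamma$ is valid; assume toward a contradiction that $\Gamma$ is not derivable in $\ldmtl$. By Lemma~\ref{Provable_or_Counter} every $\ldmtl$-sequent is either $\ldmtl$-derivable or possesses a $\ldmt$-countermodel; since the first disjunct fails by assumption, there is a $\ldmt$-model $M^{\ast}$ together with an interpretation $I$ of the labels on $M^{\ast}$ that does not satisfy $\Gamma$. But by the definition of validity for labelled sequents, a valid sequent is satisfied in every $\ldmt$-model under every interpretation; the existence of such $M^{\ast}$ and $I$ therefore contradicts the validity of $\Gamma$. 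Hence $\Gamma$ must be derivable in $\ldmtl$. Phrased as a direct implication, this is exactly: if $\models \Gamma$ then $\vdash_{\ldmtl} \Gamma$.

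The single point that must be in place — and which is already discharged inside the proof of Lemma~\ref{Provable_or_Counter}, not here — is that the model $M^{\ast}$ extracted from an infinite branch of the reduction tree is a genuine $\ldmt$-model in the sense of Definition~\ref{models_tkstit}: that $\R_{\Box}$, $\R_{i}$, $\R_{Ag}$ are equivalence relations, that {\rm \textbf{(C1)}}--{\rm \textbf{(C7)}} hold (with {\rm \textbf{(C3)}} secured after the transformation via lemma~9 of \cite{Lor13}), and that $\breve{\R}_{\g}$ really is the complement of $\R_{\g}$ so that irreflexivity {\rm \textbf{(C7)}} is met. The anticipated main obstacle — namely the Gabbay-style irreflexivity rule, which does not admit a straightforward Hilbert-to-sequent translation — has thus already been absorbed into Lemma~\ref{Provable_or_Counter} through the device of carrying $\breve{\R}_{\g}$ in the object language together with the rules $\comp$ and $\comptwo$; relative to that lemma, the present theorem is a one-line consequence.
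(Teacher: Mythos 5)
Your proposal is correct and matches the paper exactly: the paper's proof of Theorem~\ref{Completeness_G3Tstit} is precisely the one-line contrapositive of Lemma~\ref{Provable_or_Counter}, with all the substantive work (the reduction tree, the verification of {\rm \textbf{(C1)}}--{\rm \textbf{(C7)}}, and the handling of irreflexivity via $\breve{\R}_{\g}$, $\comp$, and $\comptwo$) residing in that lemma. Nothing further is needed.
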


\begin{proof} Follows from lemma \ref{Provable_or_Counter}.

\end{proof}




\section{The Logic $\xstit$}\label{XSTIT_Section}

\subsection{Axioms and Relational Semantics for $\xstit$}

A common feature of the \textit{cstit}- and \textit{dstit}-operator is that they do not internally employ temporal structures. 
 In this section, we consider the logic of $\xstit$ which contains a non-instantaneous STIT-operator explicitly affecting next states. This logic, introduced in \cite{Bro11,Bro11b}, has been motivated by the observation that affecting next states is a central aspect of agency in computer science.  
Moreover, extensions of the logic $\xstit$ have been employed to investigate the concepts of purposeful and voluntary acts and their relation to different levels of legal culpability \cite{Bro11}. The logic 
 was originally proposed for a two-dimensional semantics making reference to both states and histories; the latter defined as maximally linear ordered paths on a frame. In this section, we provide a semantics for $\xstit$ 
  that relies on relational frames, 
  avoiding the use of complex two-dimensional indices (the possibility of which was already noted in \cite{Bro11}). We provide a labelled calculus $\xstitl$ for this logic and prove that it is sound and complete with respect to its relational characterization. Furthermore, by showing a correspondence between the original Hilbert system $\xstit$ and the calculus $\xstitl$ we show that the language of $\xstit$ does not allow us to distinguish between the two available semantics.

\begin{definition}[The Language $\mathcal{L}_{\xstit}$]\label{language_xstit} Let $Ag {=} \{1,2,...,n\}$ be a finite set of agent labels and let $Var {=}\{p_1,p_2,p_3...\}$ be a countable set of propositional variables. 
 $\mathcal{L}_{\xstit}$ is defined as follows:
{\small
$$\phi ::= p \ | \ \overline{p} \ | \ \phi \wedge \phi \ | \ \phi \vee \phi \ | \ \Box \phi \  | \ \Diamond \phi \ |  \ [A]^x\phi \ | \ \langle A\rangle^x \phi \ | \ [X] \phi \ | \ \langle X\rangle \phi $$}
where $p \in Var$; and $A\subseteq Ag$ (with special cases $\emptyset$ and $Ag$).
\end{definition}

\noindent The language uses the 
 settledness operator $\Box$, a group-stit operator $[A]^{x}$, and the operator $[X]$ referring to the next state. Formulae of the form $[A]^{x}\phi$ must be read as `group $A$ effectively sees to it that in the next state $\phi$ holds'. 

As mentioned previously, we provide a semantics for the logic $\xstit$ based on relational frames. The conditions on these frames are obtained through a simple transformation of the two-dimensional frame properties presented in \cite{Bro11}.

\begin{definition}[Relational $\xstit$ Frames and Models]\label{models_xstit} An $\xstit$-frame is defined to be a tuple $F = ( W, \R_{\Box}, \R_X, \{\R_A | A\subseteq Ag\})$ such that $W \neq \emptyset$ and: 

\begin{itemize}
 
\item[{\rm \textbf{(D1)}}] $\R_{\Box}\subseteq W{\times}W$ is an equivalence relation;

\item[{\rm \textbf{(D2)}}] $\R_X\subseteq W{\times}W$ is serial and deterministic;

\item[{\rm \textbf{(D3)}}] $\R_A \subseteq W{\times}W$ such that,

\begin{itemize}

\item[(i)] $\R_{\emptyset} = \R_{\Box} \circ \R_X$;

\item[(ii)] $\R_{Ag} = \R_X \circ \R_{\Box}$;

\item[(iii)] $\R_A\subseteq \R_B$ for $\emptyset \subseteq B\subseteq A\subseteq Ag$;

\item[(iv)] For any $B,A\subseteq Ag$ (s.t. $B\cap A=\emptyset$) and $\forall w_1,w_2,w_3,w_5,w_6\in W$ we have: $(\R_{\Box} w_1w_2\land \R_{\Box}w_1w_3) \rightarrow \exists w_4 (\R_{\Box}w_1w_4 \land (\R_A w_4w_5 \rightarrow \R_Aw_2w_5)\land (\R_B w_4w_6 \rightarrow \R_Bw_3w_6))$ 

\end{itemize}

\end{itemize}

\noindent 
A relational $\xstit$-model is a tuple $M = ( F,V )$ where $F$ is an $\xstit$-frame and $V$ a valuation function mapping propositional variables $p_i\in Var$ to subsets of $W$; i.e. $V: Var\mapsto \mathcal{P}(W)$.
\end{definition}

\noindent 
Condition (D3)-(iv) expresses the \textit{independence of agents} principle for $\xstit$. From condition (D3)-(ii) we obtain that $\mathcal{R}_{Ag} \subseteq \mathcal{R}_X \circ \mathcal{R}_{\Box}$, which ensures the principle of \textit{no choice between undivided histories} (\textit{cf}. definition \ref{models_tkstit}, C6). Furthermore, we stress that, following \cite{Bro11}, the relation $\R_X$ is not explicitly defined as a \textit{strict} next-relation; that is, the frame construction allows for reflexive worlds. For a discussion of all the frame properties we refer the reader to \cite{Bro11}. 


\begin{definition}[Semantics of $\mathcal{L}_\xstit$] 
To define the \emph{satisfaction} of a formula $\phi\in\mathcal{L}_{\xstit}$ on $M$ at $w$, we make use of clauses (1)-(6) from definition \ref{Semantics_ldm_tstit}, taking $M$ to be an $\xstit$-model (but omitting explicit mention of $M$ in the clauses), along with the following clauses (global truth and validity are defined as usual):
\begin{center}
\begin{small}
\begin{multicols}{2}
\begin{itemize}
\item[7.] $w \models [A]^x \phi$ iff  $\forall u \in \R_{A}(w)$, $u \models \phi$;

\item[8.] $w \models \langle A\rangle^x \phi$ iff $\exists u \in \R_{A}(w)$, $u \models \phi$;
\end{itemize}
\columnbreak
\begin{itemize}

\item[9.] $w \models [X] \phi$ iff $\forall u \in \R_{X}(w)$, $u \models \phi$;

\item[10.] $w \models \langle X\rangle \phi$ iff $\exists u \in \R_{X}(w)$, $u \models \phi$.

\end{itemize}
\end{multicols}
\end{small}
\end{center}
\end{definition}

\begin{definition}[The Logic $\xstit$ \cite{Bro11}]\label{axioms_xstit}
The Hilbert system for $\xstit$ consists of 
 the axioms and rules below, where 
$\phi, \psi\in\mathcal{L}_{\xstit}$, $A\subseteq Ag$ and $\alpha \in \{\Box,[A]^x,[X]\}$:

\begin{small}
\begin{center}
\begin{tabular}{c @{\hskip 2em} c @{\hskip 2em} c} 

$\phi\rightarrow (\psi \rightarrow \phi)$

&

$(\overline{\psi} \rightarrow \overline{\phi}) \rightarrow (\phi \rightarrow \psi)$

&

$(\phi \rightarrow (\psi \rightarrow \chi)) \rightarrow ((\phi \rightarrow \psi) \rightarrow (\phi \rightarrow \chi))$

\end{tabular}

\ \\

\begin{tabular}{c @{\hskip 1.5em} c @{\hskip 1.5em} c @{\hskip 1.5em} c @{\hskip 1.5em} c @{\hskip 1.5em} c }
$\alpha (\phi\rightarrow \psi)\rightarrow (\alpha \phi\rightarrow \alpha \psi)$ 
&

$\Box \phi \rightarrow \phi$

&

$\Diamond \phi \rightarrow \Box \Diamond \phi$




&

$[A]^x\phi \rightarrow \langle A\rangle^x \phi$
&
$\diamx \phi \rightarrow [X] \phi$

\end{tabular}

\ \\
\begin{tabular}{ c @{\hskip 1.5em} c @{\hskip 1.5em} c@{\hskip 1.5em} c @{\hskip 1.5em} c @{\hskip 1.5em} c}


$\Box [X]\phi \leftrightarrow [\emptyset]^x \phi$

& 

$[Ag]^x \phi \leftrightarrow [X]\Box\phi$

& 

$[A]^x \phi \rightarrow [B]^x \phi^{(\dag)}$
&

$\Box \phi \vee \Diamond \overline{\phi}$ 

&

$[A]^x \phi \vee \langle A\rangle^x \overline{\phi}$

\end{tabular}

\ \\

\begin{tabular}{c @{\hskip 1.5em} c @{\hskip 1.5em} c @{\hskip 1.5em} c @{\hskip 1.5em} c }

 $\Diamond [A]^x \phi \land \Diamond [B]^x \psi \rightarrow \Diamond ([A]^x \phi \land [B]^x \psi)^{(\dag\dag)}$

&

$[X] \phi \vee \langle X\rangle \overline{\phi}$
& 
\AxiomC{$\phi$}
\AxiomC{$\phi \rightarrow \psi$}
\BinaryInfC{$\psi$}
\DisplayProof

&
\AxiomC{$\phi$}
\UnaryInfC{$\alpha \phi$}
\DisplayProof











\end{tabular}
\end{center}

\noindent where $(\dag) A\subseteq B\subseteq Ag$; and $(\dag\dag) A \cap B =\emptyset$. 
\end{small}
\\
\\
\noindent A derivation of $\phi$ in $\xstit$ from $\Theta$ is written as $\Theta \vdash_{\xstit} \phi$. When $\Theta$ is the empty set, we refer to $\phi$ as a \emph{theorem} and write $\vdash_{\xstit} \phi$.
\end{definition}
We refer to $\Diamond [A]^x \phi \land \Diamond [B]^x \psi \rightarrow \Diamond ([A]^x \phi \land [B]^x \psi)$ as the IOA$^x$-axiom. In contrast with the standard IOA-axiom, observe that IOA$^x$-axiom refers to the independence of isolated \textit{groups} of agents with respect to \textit{successor} states. For a natural language interpretation of the other axioms of $\xstit$ we refer to \cite{Bro11}.

Instead of proving completeness for the intended sequent calculus directly, we prove it first for the Hilbert calculus. This enables us to eventually conclude the equivalence of these two calculi with respect to the logic $\xstit$. 

\begin{theorem}[Completeness of $\xstit$]\label{xstitcomp} 
For all $\phi\in\mathcal{L}_{\xstit}$, if $\models \phi$, then $\vdash_{\xstit} \phi$. 
 \end{theorem}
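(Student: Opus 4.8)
The plan is to prove completeness of the Hilbert system $\xstit$ with respect to the relational $\xstit$-frames of Definition \ref{models_xstit} via a canonical model construction, exactly as one does for normal multimodal logics, and then to verify that the canonical model satisfies each of the frame conditions \textbf{(D1)}--\textbf{(D3)}. First I would observe that all the modal operators $\Box$, $[A]^x$ (for each $A \subseteq Ag$) and $[X]$ are normal (the axiom schema $\alpha(\phi\rightarrow\psi)\rightarrow(\alpha\phi\rightarrow\alpha\psi)$ together with the necessitation rule is present for $\alpha\in\{\Box,[A]^x,[X]\}$), so the standard canonical model $M^c=(W^c,\R_{\Box}^c,\R_X^c,\{\R_A^c\mid A\subseteq Ag\},V^c)$ is well-defined: $W^c$ is the set of maximal $\xstit$-consistent sets, $\R_\alpha^c\Delta\Gamma$ iff $\{\phi \mid \alpha\phi\in\Delta\}\subseteq\Gamma$, and $V^c(p)=\{\Delta\mid p\in\Delta\}$. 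The Truth Lemma (by induction on formula structure, using the Existence Lemma for the diamond cases) then gives $M^c,\Delta\models\phi$ iff $\phi\in\Delta$, so it suffices to show $M^c$ is a genuine $\xstit$-model, i.e. that its frame satisfies \textbf{(D1)}--\textbf{(D3)}; contraposition of an unprovable $\phi$ gives a consistent $\{\overline\phi\}$, extended to some $\Delta\in W^c$ falsifying $\phi$.

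Next I would discharge the frame conditions one at a time by the usual Sahlqvist-style correspondence arguments. \textbf{(D1)}: $\R_\Box^c$ is an equivalence relation because $\Box$ is an S5 modality ($\Box\phi\rightarrow\phi$ gives reflexivity, $\Diamond\phi\rightarrow\Box\Diamond\phi$ gives euclideanness, and reflexive+euclidean $=$ equivalence). \textbf{(D2)}: seriality of $\R_X^c$ follows from $[X]\phi\vee\langle X\rangle\overline\phi$ (equivalently $\langle X\rangle\top$), and determinism (functionality) from $\diamx\phi\rightarrow[X]\phi$ in the standard way. \textbf{(D3)}: for (i) and (ii) I would use the bi-implications $\Box[X]\phi\leftrightarrow[\emptyset]^x\phi$ and $[Ag]^x\phi\leftrightarrow[X]\Box\phi$: each says the modality $[\emptyset]^x$ (resp. $[Ag]^x$) has exactly the same canonical accessibility relation as the composite $\R_\Box\circ\R_X$ (resp. $\R_X\circ\R_\Box$), which one checks by unwinding the definitions of the canonical relations for composite operators. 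For (iii), the axiom $[A]^x\phi\rightarrow[B]^x\phi$ for $A\subseteq B$ gives, contrapositively at the level of canonical relations, $\R_B^c\subseteq\R_A^c$ — note the direction matches \textbf{(D3)}-(iii) since there $\R_A\subseteq\R_B$ for $B\subseteq A$. For (iv), the independence principle, I would run the standard argument for IOA-style axioms: given $\R_\Box^c w_1 w_2$ and $\R_\Box^c w_1 w_3$, build the desired witness $w_4$ as a maximal consistent extension of $\{\phi \mid [\emptyset]^x\phi \in w_1\}\cup\{\psi \mid [A]^x\psi\in w_2\}\cup\{\chi\mid[B]^x\chi\in w_3\}$, where one uses the IOA$^x$-axiom $\Diamond[A]^x\phi\land\Diamond[B]^x\psi\rightarrow\Diamond([A]^x\phi\land[B]^x\psi)$ together with $A\cap B=\emptyset$ to show this set is consistent, and then checks that $w_4$ has the required relational properties.

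The main obstacle I anticipate is condition \textbf{(D3)}-(iv): the independence-of-agents clause has a quantifier-alternation shape ($\forall\forall\rightarrow\exists(\land(\rightarrow)\land(\rightarrow))$) that does not fit the simplest canonicity templates, so the consistency argument for the witness set $w_4$ needs care — one has to show, using only finitely many conjuncts from each of the three generating sets and repeated application of IOA$^x$, that no contradiction arises, and then separately verify the two conditional clauses $\R_A^c w_4 w_5\rightarrow\R_A^c w_2 w_5$ and $\R_B^c w_4 w_6 \rightarrow\R_B^c w_3 w_6$ hold for the constructed $w_4$. A secondary subtlety is bookkeeping the interaction between the "group" relations $\R_A^c$ and the composite relations forced by (i)/(ii): one should confirm that the definitions are mutually consistent, i.e. that the canonical $\R_\emptyset^c$ really does coincide with $\R_\Box^c\circ\R_X^c$ and not merely contain or be contained in it, which again is a routine but attention-demanding unwinding of the canonical-relation definitions for composite modalities. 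Everything else — the Lindenbaum and Truth Lemmas, and conditions \textbf{(D1)}, \textbf{(D2)}, \textbf{(D3)}-(i)--(iii) — is standard modal-logic machinery.
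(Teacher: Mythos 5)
Your overall strategy is sound, but it is a genuinely different (and much longer) route than the one the paper takes. The paper's entire proof is an appeal to the Sahlqvist Completeness Theorem: it observes (following Broersen) that every axiom of $\xstit$ is a Sahlqvist formula, notes that the first-order correspondents of these axioms jointly define exactly the frame class of Definition~\ref{models_xstit}, and then cites Theorem~4.42 of Blackburn--de Rijke--Venema. That theorem packages precisely the machinery you propose to carry out by hand --- Lindenbaum, Truth Lemma, and canonicity of each axiom on the canonical frame --- so your manual verification of \textbf{(D1)}--\textbf{(D3)} is the unrolled version of the paper's one-line citation. What your approach buys is explicitness: one sees concretely why each frame condition holds in the canonical model, which would be necessary if any axiom fell outside the Sahlqvist class. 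What the paper's approach buys is brevity and, crucially, that the delicate canonicity argument for the IOA$^x$ axiom comes for free, since $\Diamond [A]^x \phi \land \Diamond [B]^x \psi \rightarrow \Diamond([A]^x\phi \land [B]^x\psi)$ is itself Sahlqvist.

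One concrete slip in your sketch of \textbf{(D3)}-(iv): the witness set you propose for $w_4$, namely $\{\phi \mid [\emptyset]^x\phi \in w_1\}\cup\{\psi \mid [A]^x\psi\in w_2\}\cup\{\chi\mid[B]^x\chi\in w_3\}$, does not deliver the required properties. It does not secure $\R^c_{\Box}w_1w_4$ (for that you need $\{\phi \mid \Box\phi \in w_1\}$; the modality $[\emptyset]^x$ corresponds to the composite $\R_{\Box}\circ\R_X$, not to $\R_{\Box}$), and it does not secure the inclusion clauses: to guarantee $\R^c_A w_4w_5 \rightarrow \R^c_A w_2w_5$ you need $w_4$ to contain the \emph{boxed} formulae $[A]^x\psi$ for each $[A]^x\psi \in w_2$ (so that $w_4$'s $[A]^x$-theory extends $w_2$'s), not merely their bodies $\psi$. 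The correct seed is $\{\phi \mid \Box\phi \in w_1\}\cup\{[A]^x\psi \mid [A]^x\psi\in w_2\}\cup\{[B]^x\chi\mid[B]^x\chi\in w_3\}$, and the consistency of its finite subsets is exactly what IOA$^x$ (together with $A\cap B=\emptyset$ and the duality and S5 axioms for $\Box$) provides. With that repair your argument goes through; as written, the constructed $w_4$ need not satisfy any of the three conjuncts demanded by \textbf{(D3)}-(iv).
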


\begin{proof}
As observed in \cite{Bro11}, all axioms of $\xstit$ are Sahlqvist formulae
. Furthermore, the first-order correspondents of the $\xstit$ axioms 
 taken together define the class of frames from definition \ref{models_xstit}. 
Applying Theorem 4.42 of \cite{BlaRijVen01}, we obtain that the logic $\xstit$ is 
complete relative to this class of frames.


\end{proof}

\subsection{A Cut-free Labelled Calculus for $\xstit$}

We provide a labelled calculus $\xstitl$ that is sound 
 and complete relative to the relational frames of definition \ref{models_xstit}. In order to convert the $\xstit$ axiomatization into rules for the intended calculus, we first observe that every axiom of $\xstit$ is a \textit{geometric formula} with the exception of the IOA$^x$ axiom. 
  For the geometric formulae 
   we can find corresponding geometric rules, following \cite{Neg05}. The first-order frame condition $\mathrm{(D3)}(iv)$ for IOA$^x$ (def. \ref{models_xstit}) is not a geometric formula; however, we observe that its components $\R_{A}w_{4}w_{5} {\rightarrow} \R_{A}w_{2}w_{5}$ and $\R_{B}w_{4}w_{6} {\rightarrow} \R_{B}w_{3}w_{6}$ in fact are. 
 The IOA$^x$-condition is, thus, a \textit{generalized geometric axiom} of type $GA_1$ and we may therefore find an equivalent system of rules, following \cite{Neg16}. 
  


We refer to the following system of rules $\langle (\mathsf{IOA{-}E}), \{(\mathsf{IOA{-}U_1}),(\mathsf{IOA{-}U_{2}})\}\rangle$ as the `independence of agents' rule $\ioax$. We may use the rule ($\mathsf{IOA{-}E}$) wherever throughout the course of a derivation, but if we use either ($\mathsf{IOA{-}U_1}$) or ($\mathsf{IOA{-}U_2}$), then we must (i) use the other ($\mathsf{IOA{-}U}_{i}$) rule (for $i \in \{1,2\}$) in a separate branch of the derivation and (ii) use the ($\mathsf{IOA{-}E}$) rule below both instances of ($\mathsf{IOA{-}U}_{i}$); \textit{i.e.} the derivation is of the form represented below: 



\begin{small}

\begin{center}

\AxiomC{$R_{A}w_{4}w_{5}, R_{A}w_{2}w_{5}, \Gamma$}
\RightLabel{($\mathsf{IOA-U_{1}})$}
\UnaryInfC{$R_{A}w_{4}w_{5}, \Gamma$}
\UnaryInfC{$\vdots$}

\AxiomC{$R_{B}w_{4}w_{6}, R_{B}w_{3}w_{6}, \Gamma'$}
\RightLabel{($\mathsf{IOA-U_{2}})$}
\UnaryInfC{$R_{B}w_{4}w_{6}, \Gamma'$}
\UnaryInfC{$\vdots$}

\BinaryInfC{$R_{\Box}w_{1}w_{2}, R_{\Box}w_{1}w_{3}, R_{\Box}w_{1}w_{4}, \Gamma''$}
\RightLabel{($\mathsf{IOA-E})^{*}$}
\UnaryInfC{$R_{\Box}w_{1}w_{2}, R_{\Box}w_{1}w_{3},\Gamma''$}
\DisplayProof

\end{center}
\ \\
\noindent
where (*) $w_{4}$ is an eigenvariable in the ($\mathsf{IOA-E}$) rule.

\end{small}


\begin{definition}[The Calculus $\xstitl$]\label{sequentxstitl} The labeled calculus $\xstitl$ consists of the rules $\id$, $\conr$, $\disr$, $\reflis$, $\euclis$, $\subis$, $\settr$, $\settdiar$, $\settrefl$, and $\setteucl$ from definitions \ref{sequentldml} and \ref{sequentldmtl} extended with the $\ioax$-rule 
and the following: 

\begin{small}








\begin{center}
\begin{tabular}{c @{\hskip 1 em} c @{\hskip 1 em} c @{\hskip 1 em} c}



\AxiomC{$\Gamma, \R_A wv, v: \phi$}
\RightLabel{$\stitboxx^{*}$}
\UnaryInfC{$\Gamma, w: [A]^x \phi$}
\DisplayProof

&

\AxiomC{$\Gamma, \R_Awu, w: \lb A \rb^x \phi, u: \phi$}
\RightLabel{$\stitdiamx$}
\UnaryInfC{$\Gamma, \R_A wu, w: \lb A \rb^x \phi$}
\DisplayProof

\end{tabular}
\end{center}

\begin{center}
\begin{tabular}{c c}

\AxiomC{$ \Gamma, \R_X wv, w: \diamx \phi, v: \phi$}
\RightLabel{$\diamnext$}
\UnaryInfC{$\Gamma, \R_X wv  w: \langle X\rangle \phi$}
\DisplayProof

&

\AxiomC{$\R_{\Box}wv, \R_X vu, \R_{\emptyset} wu,  \Gamma$}
\RightLabel{$\effempty$}
\UnaryInfC{$\R_{\Box}wv, \R_X vu, \Gamma$}
\DisplayProof

\end{tabular}
\end{center}

\begin{center}
\begin{tabular}{ c @{\hskip 1em} c @{\hskip 1em} c }

\AxiomC{$\R_A wv, \R_B wv, \Gamma$}
\RightLabel{$\cmon^{\dag}$}
\UnaryInfC{$\R_A wv, \Gamma$}
\DisplayProof
&
\AxiomC{$\R_X wv,  \Gamma$}
\RightLabel{$\serx^{\ast}$}
\UnaryInfC{$\Gamma$}
\DisplayProof
& 
\AxiomC{$v = u, \R_X wv, \R_X wu,  \Gamma$}
\RightLabel{$\detx$}
\UnaryInfC{$\R_X wv,\R_X wu,  \Gamma$}
\DisplayProof

\end{tabular}
\end{center}







\begin{center}

\begin{tabular}{c @{\hskip 1em} c @{\hskip 1em} c }

\AxiomC{$ \Gamma, \R_X wv, v: \phi$}
\RightLabel{$\boxnext^{\ast}$}
\UnaryInfC{$\Gamma, w: [X] \phi$}
\DisplayProof

&

\AxiomC{$\R_{\Box}wv, \R_X vu, \R_{\emptyset} wu,  \Gamma$}
\RightLabel{$\emptyeff ^{\ast}$}
\UnaryInfC{$\R_{\emptyset} wu,  \Gamma$}
\DisplayProof

\end{tabular}
\end{center}

\begin{center}

\begin{tabular}{c @{\hskip 2.5em} c }

\AxiomC{$\R_{Ag}wu, \R_X wv, \R_{\Box}vu,  \Gamma$}
\RightLabel{$\effag$}
\UnaryInfC{$\R_X wv, \R_{\Box}vu,  \Gamma$}
\DisplayProof

&

\AxiomC{$\R_{Ag}wu, \R_X wv, \R_{\Box}vu,  \Gamma$}
\RightLabel{$\ageff^{\ast}$}
\UnaryInfC{$\R_{Ag}wu, \Gamma$}
\DisplayProof





\end{tabular}
\end{center}
\noindent where $(\ast)$ $v$ is an eigenvariable; and $(\dag)$ $B\subseteq A\subseteq Ag$.
\end{small}
\end{definition}

 
\noindent Observe that the rules $\{\emptyeff,\effempty\}, \{\ageff,\effag\}, \cmon$ and $\ioax$ of the labelled calculus $\xstitl$ capture the frame conditions $\mathrm{(D3)}(i){-}(iv)$ of definition \ref{models_xstit}, respectively.\footnote{In \cite{Neg16} it is shown that every generalized geometric formula can be captured through (a system of) rules, allowing 
 for the construction of \textit{analytic} calculi for the minimal modal logic $\mathsf{K}$ extended with any axioms from the Sahlqvist class. Since all axioms of $\ldm$ and $\xstit$ are Sahlqvist formulae, the results also apply to these logics.}


\begin{theorem}[Soundness]\label{xstitsound} Every sequent derivable in $\xstitl$ is valid.
\end{theorem}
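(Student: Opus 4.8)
The plan is to follow the structure of the proof of Theorem~\ref{Soundness_G3LDM}: induct on the height of the given $\xstitl$ derivation and show that every inference rule preserves validity, arguing contrapositively throughout by turning a countermodel of a rule's conclusion into a countermodel of one of its premises. For the base case, an initial sequent $\Gamma, w{:}p, w{:}\overline{p}$ is satisfied by any model $M$ with any interpretation $I$, since either $w^I \in V(p)$, whence $M, w^I \models p$, or $w^I \notin V(p)$, whence $M, w^I \models \overline{p}$; thus some labelled formula of the sequent is always satisfied, independently of its relational atoms and equalities.

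For the inductive step I would dispatch the propositional and modal rules ($\conr$, $\disr$, $\settr$, $\settdiar$, $\stitboxx$, $\stitdiamx$, $\boxnext$, $\diamnext$) by the routine argument for one-sided labelled calculi (cf.\ \cite{Neg05,Neg09}): for the box-type rules, the failure of the principal formula at $w^I$ --- e.g.\ $M, w^I \not\models [A]^x\phi$, yielding some $u \in \R_A(w^I)$ with $M, u \not\models \phi$ --- names the world to which the eigenvariable is mapped, the eigenvariable condition ensuring that the relational atoms and remaining formulae of the sequent stay unaffected; for the diamond-type rules the countermodel of the conclusion already works, since the relational atom carried in the context forces the displayed successor to falsify the immediate subformula. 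The purely relational rules $\settrefl$, $\setteucl$, $\reflis$, $\euclis$, $\subis$, $\emptyeff$, $\effempty$, $\ageff$, $\effag$, $\cmon$, $\serx$, $\detx$ are the ordinary geometric rules extracted from the frame conditions (D1), (D2), and (D3)(i)--(iii) of Definition~\ref{models_xstit}: in each case, if all relational atoms and equalities of the conclusion hold under $I$, the matching frame property supplies the additional relational atoms --- and, for $\serx$ and $\detx$, the fresh or identified world --- needed to make the relational part of the premise hold, its labelled-formula part being inherited unchanged from the conclusion.

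The main obstacle is the soundness of the \emph{system of rules} $\ioax = \langle(\mathsf{IOA{-}E}), \{(\mathsf{IOA{-}U_1}),(\mathsf{IOA{-}U_2})\}\rangle$, which --- following \cite{Neg16} --- must be verified as a unit rather than rule by rule, since it encodes the generalized-geometric condition (D3)(iv). Used on its own, $(\mathsf{IOA{-}E})$ is unproblematic: a countermodel $(M,I)$ of its conclusion gives $\R_{\Box}w_1^I w_2^I$ and $\R_{\Box}w_1^I w_3^I$, so already reflexivity of $\R_{\Box}$ lets us set $w_4 \mapsto w_1^I$ and falsify the premise. The substantial case is a paired occurrence: suppose the conclusion $\R_{\Box}w_1w_2, \R_{\Box}w_1w_3, \Gamma''$ of an $(\mathsf{IOA{-}E})$ application is falsified by $(M,I)$. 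Applying (D3)(iv) to the disjoint groups $A, B$ and the worlds $w_1^I, w_2^I, w_3^I$ furnishes a world $w_4^\ast$ with $\R_{\Box}w_1^I w_4^\ast$ such that $\R_A w_4^\ast w_5 \Rightarrow \R_A w_2^I w_5$ and $\R_B w_4^\ast w_6 \Rightarrow \R_B w_3^I w_6$ hold for \emph{all} $w_5$ and $w_6$. Since $w_4$ is an eigenvariable of $(\mathsf{IOA{-}E})$, hence absent from $\Gamma''$, extending $I$ by $w_4 \mapsto w_4^\ast$ falsifies the premise $\R_{\Box}w_1w_2, \R_{\Box}w_1w_3, \R_{\Box}w_1w_4, \Gamma''$; I would then propagate this countermodel upward using the induction hypothesis. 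It reaches a leaf --- contradicting the validity of all leaves --- unless it first passes through one of $(\mathsf{IOA{-}U_1})$ or $(\mathsf{IOA{-}U_2})$, and it cannot pass through both, as they lie on separate branches. If it passes through $(\mathsf{IOA{-}U_1})$, then its conclusion $\R_A w_4 w_5, \Gamma$ is falsified by an interpretation $J$ that still sends $w_4$ to $w_4^\ast$ and $w_2$ to $w_2^I$ (no intervening rule can have reassigned either, by the eigenvariable conditions), so $\R_A w_4^\ast w_5^J$ holds, whence $\R_A w_2^I w_5^J = \R_A w_2^J w_5^J$ holds by the first implication above, and therefore the premise $\R_A w_4 w_5, \R_A w_2 w_5, \Gamma$ is falsified by $J$ too; the $(\mathsf{IOA{-}U_2})$ case is symmetric via the second implication. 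Continuing the propagation contradicts the validity of the leaves, which closes the induction and establishes the theorem.
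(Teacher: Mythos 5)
Your proof is correct and takes essentially the same route as the paper: the paper's own proof is a two-line appeal to the height induction of Theorem~\ref{Soundness_G3LDM} together with Theorem~6.3 of \cite{Neg16} for generalized geometric (systems of) rules, and your argument simply unfolds what that citation contains — in particular the global countermodel-propagation treatment of $\ioax$, where the witness $w_4^\ast$ supplied by (D3)(iv) is fixed at the $(\mathsf{IOA{-}E})$ step and then discharges the otherwise non-locally-sound rules $(\mathsf{IOA{-}U}_i)$ higher up. Note only that this requires reading (D3)(iv) with $\exists w_4$ scoping over $w_5, w_6$ (the $GA_1$ form), which is the reading you correctly adopt.
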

\begin{proof} Similar to theorem \ref{Soundness_G3LDM}. 
 Since all rules of $\xstitl$ are generalized geometric rules, we can apply the general soundness results of Theorem 6.3 of \cite{Neg16}.
\end{proof}

\noindent 
 In order to prove completeness of $\xstitl$ relative to the logic $\xstit$, we employ the same strategy as for $\ldml$, by first proving that every formula derivable in $\xstit$ is derivable in $\xstitl$. 

\begin{lemma}\label{xstitsequent}
For all $\phi \in \mathcal{L}_{\xstit}$, if $\vdash_{\xstit} \phi$, then $\vdash_{\xstitl} x : \phi$. 
\end{lemma}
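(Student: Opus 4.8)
The plan is to mirror the strategy already used for Lemma~\ref{sequent_ldm}: show that every axiom of the Hilbert system $\xstit$ is derivable in $\xstitl$ (as a labelled formula $x:\phi$), and that each inference rule of $\xstit$ is admissible in $\xstitl$. Since $\xstitl$ contains all the propositional rules $\id$, $\conr$, $\disr$ together with the structural rules, the three classical propositional axiom schemes and \emph{modus ponens} are handled exactly as in the standard labelled-calculus literature \cite{Neg05,Negvon01}, using cut-admissibility (Theorem~\ref{Properties_of_Calc}) to simulate detachment. The necessitation rule $\phi / \alpha\phi$ for $\alpha\in\{\Box,[A]^x,[X]\}$ is derived by applying the corresponding right-rule ($\settr$, $\stitboxx$, or $\boxnext$) with a fresh eigenvariable to the derivation of $x:\phi$. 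The normal-modality distribution axioms $\alpha(\phi\to\psi)\to(\alpha\phi\to\alpha\psi)$ are routine: decompose the outer implications and disjunctions, introduce the relational atom via the right-rule for $\alpha$, and close using invertibility of the left-acting instances together with $\id$.

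Next I would treat the genuinely modal axioms one at a time, each time reading off which structural rule of $\xstitl$ does the work. The reflexivity axiom $\Box\phi\to\phi$ uses $\settrefl$; the S5-style axiom $\Diamond\phi\to\Box\Diamond\phi$ uses $\setteucl$ together with $\settrefl$ (transitivity being derivable from these, as noted after Lemma~\ref{sequent_ldm}); the duality axioms $\Box\phi\vee\Diamond\overline\phi$, $[A]^x\phi\vee\lb A\rb^x\overline\phi$, $[X]\phi\vee\lb X\rb\overline\phi$ follow by applying the relevant right-diamond rule and then $\id$, since $x:\phi$ and $x:\overline\phi$ form an axiom by Theorem~\ref{Properties_of_Calc}(1); $[A]^x\phi\to\lb A\rb^x\phi$ and $\diamx\phi\to[X]\phi$ use seriality/determinism rules $\serx$ and $\detx$ respectively. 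The monotonicity axiom $[A]^x\phi\to[B]^x\phi$ for $B\subseteq A\subseteq Ag$ is exactly what $\cmon^\dag$ delivers. The two bridge-style equivalences $\Box[X]\phi\leftrightarrow[\emptyset]^x\phi$ and $[Ag]^x\phi\leftrightarrow[X]\Box\phi$ are handled by the rule pairs $\{\emptyeff,\effempty\}$ and $\{\ageff,\effag\}$, which encode $\R_\emptyset=\R_\Box\circ\R_X$ and $\R_{Ag}=\R_X\circ\R_\Box$; here one decomposes each direction of the biconditional separately, introduces the composed relational atoms via the appropriate composition rule, and closes with $\id$.

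The main obstacle, as flagged in the excerpt itself, is the $\mathrm{IOA}^x$ axiom $\Diamond[A]^x\phi\wedge\Diamond[B]^x\psi\to\Diamond([A]^x\phi\wedge[B]^x\psi)$ for disjoint $A,B$, whose frame correspondent $\mathrm{(D3)}(iv)$ is only a \emph{generalized} geometric formula of type $GA_1$ and is therefore captured not by a single rule but by the system $\ioax=\langle(\mathsf{IOA{-}E}),\{(\mathsf{IOA{-}U_1}),(\mathsf{IOA{-}U_2})\}\rangle$ with its global side condition on derivations. The plan here is to set up the derivation bottom-up: after decomposing the implication and the two $\Diamond$'s on the left (introducing $\R_\Box w_1 w_2$ with $w_2:[A]^x\phi$ and $\R_\Box w_1 w_3$ with $w_3:[B]^x\psi$) and the $\Diamond$ on the right, apply $(\mathsf{IOA{-}E})$ to obtain a fresh witness $w_4$ with $\R_\Box w_1 w_4$; then one must derive the resulting sequent, which requires $w_4:[A]^x\phi\wedge[B]^x\psi$. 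This splits (via $\conr$, and the right-rules $\stitboxx$ for two fresh successors $w_5,w_6$) into two branches, and in the $A$-branch one invokes $(\mathsf{IOA{-}U_1})$ to rewrite $\R_A w_4 w_5$ into $\R_A w_2 w_5$, after which $w_2:[A]^x\phi$ together with the left-acting behaviour of $\stitboxx$ (via invertibility, Theorem~\ref{Properties_of_Calc}(2)) and $\id$ closes the branch; symmetrically the $B$-branch uses $(\mathsf{IOA{-}U_2})$, $\R_B w_4 w_6\rightsquigarrow\R_B w_3 w_6$, and $w_3:[B]^x\psi$. The delicate point is purely bookkeeping: one must verify that the two $(\mathsf{IOA{-}U}_i)$ applications sit in genuinely separate branches above a single $(\mathsf{IOA{-}E})$ instance, and that $w_4$ (the eigenvariable) does not leak into the conclusion — i.e. that the schematic derivation pattern displayed before Definition~\ref{sequentxstitl} is respected. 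I expect this to be the only step requiring real care; everything else is a direct transcription of the Hilbert axioms into their corresponding structural rules, exactly as in the $\ldm$ case.
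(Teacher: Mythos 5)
Your proposal is correct and follows essentially the same route as the paper: the geometric axioms and inference rules are handled routinely via their corresponding structural rules (the paper simply cites \cite{Neg05}), and your bottom-up derivation of the IOA$^x$ axiom --- $(\mathsf{IOA{-}E})$ introducing the fresh witness $w_4$, a $\conr$/$\stitboxx$ split into two branches, and one $(\mathsf{IOA{-}U}_i)$ per branch closing against the diamond formulae at $w_2$ and $w_3$ --- is exactly the derivation displayed in appendix \ref{XSTIT_IOA} (there phrased in the one-sided negation-normal-form calculus, so the branches close by $\stitdiamx$ on $w_2:\langle A\rangle^x\overline{\phi}$ and $\id$ rather than by invertibility of a left rule). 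The only small gloss is that $[A]^x\phi\to\langle A\rangle^x\phi$ needs seriality of $\R_A$, which is not a primitive rule but is obtained by chaining $\serx$, $\settrefl$, $\effag$ and $\cmon$; this does not affect the argument.
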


\begin{proof} The derivation of each axiom and inference rule 
 is straightforward 
  (See \cite{Neg05}). The $\xstitl$-derivation of the IOA$^x$-axiom can be obtained by applying the rule system $\ioax$ (see appendix \ref{XSTIT_IOA}).
\end{proof}

\begin{corollary}[Completeness]\label{xstitlcomp} For all $\phi\in\mathcal{L}_{\xstit}$, if $\models \phi$, then $\vdash_{\xstitl} x : \phi$ 
\end{corollary}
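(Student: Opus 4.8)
The plan is to chain together the two results already established for $\xstit$: the completeness of the Hilbert system $\xstit$ relative to the relational frames of Definition \ref{models_xstit} (Theorem \ref{xstitcomp}), and the embedding of the Hilbert system into the labelled calculus $\xstitl$ (Lemma \ref{xstitsequent}). So I would argue as follows: suppose $\models \phi$, i.e. $\phi$ is valid on all relational $\xstit$-models. By Theorem \ref{xstitcomp}, $\vdash_{\xstit} \phi$. By Lemma \ref{xstitsequent}, $\vdash_{\xstitl} x{:}\phi$. This is exactly the desired conclusion.

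Concretely, the proof is essentially one line, mirroring the proof of Theorem \ref{Compeleteness_G3ldm} which concludes $\ldml$-completeness from the analogous pairing of Theorem \ref{compl_ldm} and Lemma \ref{sequent_ldm}. I would write:

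\begin{proof}
Assume $\models \phi$. By Theorem \ref{xstitcomp} we have $\vdash_{\xstit} \phi$, and hence by Lemma \ref{xstitsequent} we obtain $\vdash_{\xstitl} x{:}\phi$.
\end{proof}

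There is no real obstacle here: all the substantive work has been front-loaded into Theorem \ref{xstitcomp} (which in turn rests on the Sahlqvist–Kripke completeness machinery of \cite{BlaRijVen01}) and into Lemma \ref{xstitsequent} (whose only non-routine case, the $\mathrm{IOA}^x$-axiom, is discharged by the system of rules $\ioax$ as shown in the appendix). The one point worth a sentence of care is making sure the notion of validity is the same on both sides — that $\models \phi$ in the corollary statement refers to validity over the relational $\xstit$-models of Definition \ref{models_xstit}, which is precisely the class against which $\xstit$ was shown complete in Theorem \ref{xstitcomp}; this alignment is immediate from the way the semantics was set up. Together with the soundness direction (Theorem \ref{xstitsound}), this yields the advertised sound-and-complete correspondence between $\xstitl$ and the relational semantics, and, via the Hilbert-calculus detour, the equivalence of $\xstit$ and $\xstitl$ as logics over $\mathcal{L}_{\xstit}$.
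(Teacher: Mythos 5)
Your proposal is correct and matches the paper's proof exactly: the corollary is obtained by chaining Theorem \ref{xstitcomp} (Hilbert completeness over relational $\xstit$-frames) with Lemma \ref{xstitsequent} (the embedding of $\xstit$ into $\xstitl$). Nothing further is needed.
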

\begin{proof}
Follows from theorem \ref{xstitcomp} and lemma \ref{xstitsequent}.
\end{proof}
%
As another consequence, we obtain that the logic $\xstit$ can be characterized without using 
 two-dimensional frames employing histories, as applied in \cite{Bro11}. 


   
\section{Conclusion and Future Work}\label{Conclusion}

   In this paper, we laid the proof-theoretic foundations 
 for implementable logics of agency 
by providing calculi for one of its central formalisms: STIT logic. In particular, we developed cut-free labelled sequent calculi for three STIT logics: $\ldm$, $\ldmt$ and $\xstit$. Furthermore, by providing the cut-free calculus $\tstitl$ for temporal STIT logic we answered the open question from \cite{Wan06}. All labelled calculi presented in this work, are sound and cut-free complete relative to their classes of \textit{temporal relational} frames. As a corollary to the latter, we extended prior results from \cite{BalHerTro08,HerSch08,Lor13} and provided a characterization of $\xstit$ through relational frames. 



We see two possible future extensions of the calculi provided in this paper: First, we aim to use these calculi to solve the decidability problems for $\ldmt$ and $\xstit$, which are currently open questions. Our approach will be proof-theoretic in nature and will consist of showing decidability via proof-search. To realize our goal, we plan on harnessing refinement (i.e. internalization) procedures, such as those in \cite{CiaLyoRamTiu19}, to obtain variants of our labelled calculi that are more suitable for proof-search. 
Second, 
 we aim to extend the current calculi to incorporate formal concepts that enable reasoning about normative choice-making, for example, those found in utilitarian deontic STIT \cite{Hor01,Mur04} and legal theory \cite{LorSar15}.
\\
\\
\textbf{Acknowledgments.}  
Work funded by the projects WWTF MA16-028, FWF I2982 and FWF W1255-N23.
 The authors would 
  like to thank their supervisor Agata Ciabattoni for her 
 helpful comments.



%
%
 \bibliographystyle{splncs04}
%


\appendix

   \section{Completeness of $\ldm$}\label{Completeness_Proof_ldm}
We give the definitions and lemmas sufficient to prove the completeness of $\ldm$ relative to $\tstit$ frames \cite{Lor13,BerLyo19}. We make use of the canonical model of $\ldm$ (obtained by standard means \cite{BlaRijVen01,BalHerTro08}) to construct a $\tstit$ model. A truth-lemma is then given relative to this model, from which, completeness follows as a corollary.

\begin{definition}[$\ldmn$-CS, $\ldmn$-MCS] A set $\Theta \subset \mathcal{L}_{\ldmn}$ is a \emph{$\ldmn$ consistent set ($\ldmn$-CS)} iff $\Theta \not\vdash_{\ldmn} \bot$. We call a set $\Theta \subset \mathcal{L}_{\ldmn}$ a \emph{$\ldmn$ maximally consistent set ($\ldmn$-MCS)} iff $\Theta$ is a $\ldmn$-CS and for any set $\Theta'$ such that $\Theta \subset \Theta'$, $\Theta' \vdash_{\ldmn} \bot$.
\end{definition}

\begin{lemma}[Lindenbaum's Lemma \cite{BlaRijVen01}]\label{Lindenbaum}
Every $\ldmn$-CS can be extended to a $\ldmn$-MCS.
\end{lemma}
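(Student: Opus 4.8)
The plan is to carry out the standard Lindenbaum construction: enumerate the formulae of $\mathcal{L}_{\ldmn}$ and build the desired set by adding formulae one at a time whenever consistency is preserved. First I would note that $\mathcal{L}_{\ldmn}$ is countable, being generated by a finite BNF grammar over the countable sets $Var$ and $Ag$; fix an enumeration $\phi_{0}, \phi_{1}, \phi_{2}, \ldots$ of all its formulae. Given a $\ldmn$-CS $\Theta$, define an increasing chain of sets by $\Theta_{0} := \Theta$ and
$$\Theta_{n+1} := \begin{cases} \Theta_{n} \cup \{\phi_{n}\} & \text{if } \Theta_{n} \cup \{\phi_{n}\} \not\vdash_{\ldmn} \bot, \\ \Theta_{n} & \text{otherwise,}\end{cases}$$
and put $\Theta^{*} := \bigcup_{n \in \mathbb{N}} \Theta_{n}$. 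A trivial induction on $n$, using the hypothesis on $\Theta$ for the base case and the case distinction for the inductive step, shows that every $\Theta_{n}$ is a $\ldmn$-CS.

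Next I would show that $\Theta^{*}$ is a $\ldmn$-CS. Suppose otherwise, i.e. $\Theta^{*} \vdash_{\ldmn} \bot$. Since Hilbert derivations are finite objects, there is a finite $\Psi \subseteq \Theta^{*}$ with $\Psi \vdash_{\ldmn} \bot$; as the $\Theta_{n}$ form an increasing chain, $\Psi \subseteq \Theta_{n}$ for some $n$, and monotonicity of $\vdash_{\ldmn}$ then yields $\Theta_{n} \vdash_{\ldmn} \bot$, contradicting the consistency of $\Theta_{n}$ established above. Hence $\Theta^{*} \not\vdash_{\ldmn} \bot$.

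Finally I would verify maximality in the sense of the definition preceding the lemma. Let $\Theta'$ be any set with $\Theta^{*} \subsetneq \Theta'$, and pick $\psi \in \Theta' \setminus \Theta^{*}$; say $\psi = \phi_{n}$. Since $\phi_{n} \notin \Theta^{*}$ we have $\phi_{n} \notin \Theta_{n+1}$, so the construction must have taken the second case at stage $n+1$, giving $\Theta_{n} \cup \{\phi_{n}\} \vdash_{\ldmn} \bot$. Because $\Theta_{n} \cup \{\phi_{n}\} \subseteq \Theta^{*} \cup \{\psi\} \subseteq \Theta'$, monotonicity of $\vdash_{\ldmn}$ gives $\Theta' \vdash_{\ldmn} \bot$, as required. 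Thus $\Theta^{*}$ is a $\ldmn$-MCS extending $\Theta$. I expect no real obstacle here; the only point deserving care is the appeal to the finiteness of derivations (a compactness property of the Hilbert system) in the consistency argument for $\Theta^{*}$, everything else being bookkeeping with the enumeration and monotonicity of $\vdash_{\ldmn}$.
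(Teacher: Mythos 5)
Your proof is correct and is precisely the standard Lindenbaum construction that the paper itself defers to by citing Blackburn, de Rijke and Venema rather than giving an argument: enumerate the (countable) language, extend stepwise while preserving consistency, and use finiteness of derivations for the union and the case analysis of the construction for maximality. Nothing further is needed.
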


\begin{definition}[Present and Future Pre-Canonical $\tstit$ Model] The \emph{present pre-canonical $\tstit$ model} is the tuple $M^{\pres} = (W^{\pres}, \R^{\pres}_{\Box},$ $\{\R^{\pres}_{i}|i \in Ag\}, V^{\pres})$ defined below left, and the \emph{future pre-canonical $\tstit$ model} is the tuple $M^{\fut} = (W^{\fut}, \R^{\fut}_{\Box}, \{\R^{\fut}_{i}|i \in Ag\}, V^{\fut})$ defined below right:

\begin{center}
\begin{multicols}{2}

\begin{center}

\begin{itemize}

\item $W^{\pres}$ is the set of all $\ldm$-MCSs;

\item $\R^{\pres}_{\Box}wu$ iff for all $\Box \phi \in w$, $\phi \in u$;

\item $\R^{\pres}_{i}wu$ iff for all $[i] \phi \in w$, $\phi \in u$;

\item $V^{\pres}(p) = \{w \in W| p \in w\}$.

\end{itemize}
\end{center}

\begin{center}

\begin{itemize}

\item $W^{\fut} = W^{\pres}$;

\item $\R^{\fut}_{\Box}(w) = \bigcap_{i \in Ag} \R^{\pres}_{i}(w)$;

\item $\R^{\fut}_{i}(w) = \bigcap_{i \in Ag} \R^{\pres}_{i}(w)$;

\item $V^{\fut}(p) = V^{\pres}(p)$.

\end{itemize}
\end{center}

\end{multicols}
\end{center}

\end{definition}

\begin{definition}[Canonical Temporal Kripke STIT Model] We define the \emph{canonical temporal Kripke STIT model} to be the tuple $M^{\ldm} = (W^{\ldm}, \R^{\ldm}_{\Box}, $ $\{R^{\ldm}_{i} | i \in Ag\}, \R^{\ldm}_{Ag}, \R^{\ldm}_{\g}, \R^{\ldm}_{\h},$ $V^{\ldm})$ such that:

\begin{itemize}

\item $W^{\ldm} = W^{\pres} \times \mathbb{N}$\footnote{Note that we choose to write each world $(w,j) \in W^{\ldm}$ as $w^{j}$ to simplify notation. Moreover, we write $\phi \in w^{j}$ to mean that the formula $\phi$ is in the $\ldm$-MCS $w$ associated with $j$.};

\item $\R^{\ldm}_{\Box}w^{j}u^{j}$ iff (i) $\R^{\pres}_{\Box}wu$ and $j = 0$, or (ii) $\R^{\fut}_{\Box}wu$ and $j > 0$;

\item $\R^{\ldm}_{i}w^{j}u^{j}$ iff (i) $\R^{\pres}_{i}wu$ and $j = 0$, or (ii) $\R^{\fut}_{i}wu$ and $j > 0$;

\item $\R^{\ldm}_{Ag}(w^{j}) = \bigcap_{1 \leq i \leq n} \R^{\ldm}_{i}(w^{j})$;

\item $\R^{\ldm}_{\g} = \{(w^{j},w^{k})| w^{j}, w^{k} \in W^{\ldm} \text{ and } j < k \}$;

\item $\R^{\ldm}_{\h} = \{(u^{i},w^{i})| (w^{i},u^{i}) \in \R^{\ldm}_{\g}\}$;

\item $V^{\ldm}(p) = \{w^{j} \in W^{\ldm}| w \in V^{\pres}(p)\}$.

\end{itemize}

\end{definition}

\begin{lemma}\label{Relate_only_same_level} For all $\alpha \in \{\Box, Ag\} \cup Ag$, if $\R^{\ldm}_{\alpha}w^{j}u^{k}$ for $j,k \in \mathbb{N}$, then $j = k$.
\end{lemma}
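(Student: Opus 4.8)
The plan is to argue by cases on $\alpha$, in each case simply unfolding the relevant clause in the definition of the canonical temporal Kripke STIT model $M^{\ldm}$. No canonical-model machinery beyond the bare definition of the relations is needed.

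First I would handle $\alpha = \Box$. The defining clause for $\R^{\ldm}_{\Box}$ only stipulates when pairs of the form $(w^{j},u^{j})$ — worlds carrying the \emph{same} natural-number index — belong to the relation; it never places a pair $(w^{j},u^{k})$ with $j \neq k$ into $\R^{\ldm}_{\Box}$. Hence $\R^{\ldm}_{\Box}w^{j}u^{k}$ forces $j = k$ outright. The case $\alpha = i$ for $i \in Ag$ is identical, appealing instead to the defining clause for $\R^{\ldm}_{i}$, which again speaks only of index-preserving pairs.

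Next, for $\alpha = Ag$, I would use that $\R^{\ldm}_{Ag}$ is defined as $\bigcap_{1 \leq i \leq n} \R^{\ldm}_{i}$, together with the fact that $Ag$ is non-empty (since $n \geq 1$). If $\R^{\ldm}_{Ag}w^{j}u^{k}$, then in particular $\R^{\ldm}_{1}w^{j}u^{k}$, and the previous case yields $j = k$. This exhausts $\alpha \in \{\Box,Ag\} \cup Ag$.

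I do not expect any genuine obstacle: the lemma is essentially built into the construction, as both $\R^{\ldm}_{\Box}$ and each $\R^{\ldm}_{i}$ are by fiat ``level-internal'' relations, and $\R^{\ldm}_{Ag}$ inherits this from being their intersection. The only point deserving a word of care is to make explicit that the displayed biconditionals defining $\R^{\ldm}_{\Box}$ and $\R^{\ldm}_{i}$ are read as giving the \emph{whole} of these relations, so that pairs of worlds at distinct temporal levels are excluded by default; and, relatedly, to observe why the tense relations $\R^{\ldm}_{\g}$ and $\R^{\ldm}_{\h}$ are (correctly) absent from the statement, since they are designed precisely to connect worlds at different levels.
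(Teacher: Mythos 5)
Your proof is correct and is essentially the paper's own argument: the paper simply states that the lemma ``follows by definition of the canonical model,'' and your case analysis on $\alpha$ is precisely the spelling-out of that observation, including the right remark that $\R^{\ldm}_{Ag}$ inherits level-preservation as a (non-empty) intersection of the $\R^{\ldm}_{i}$.
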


\begin{proof} Follows by definition of the canonical $\tstit$ model.

\end{proof}

\begin{lemma}\label{AG_Rel_Same_All_Levels} For all $j \in \mathbb{N}$ with $k \geq 1$, $(w^{j},u^{j}) \in \R^{\ldm}_{Ag}$ iff $(w^{j+k},u^{j+k}) \in \R^{\ldm}_{Ag}$.
\end{lemma}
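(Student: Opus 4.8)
The plan is to unwind the definitions of $\R^{\ldm}_{Ag}$ and of the present/future pre-canonical models, reducing the claim to a statement purely about the underlying $\ldm$-MCS $w$. Recall that for $j \geq 1$ we have $\R^{\ldm}_{i}w^{j}u^{j}$ iff $\R^{\fut}_{i}wu$, i.e.\ iff $u \in \bigcap_{i \in Ag}\R^{\pres}_{i}(w)$, which does not depend on the particular level $j$ at all (as long as it is positive). Hence for any $j \geq 1$ and any $k \geq 1$, $\R^{\ldm}_{i}(w^{j})$ and $\R^{\ldm}_{i}(w^{k})$ ``are the same set of MCSs'' in the sense that $u^{j} \in \R^{\ldm}_{i}(w^{j})$ iff $u^{k} \in \R^{\ldm}_{i}(w^{k})$. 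Since $\R^{\ldm}_{Ag}(w^{j}) = \bigcap_{1 \leq i \leq n}\R^{\ldm}_{i}(w^{j})$ by definition, the same level-independence is inherited by $\R^{\ldm}_{Ag}$.

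The one genuine case distinction is whether $j = 0$ or $j \geq 1$. If $j \geq 1$, then both $w^{j}$ and $w^{j+k}$ sit at positive levels, so by the remark above $\R^{\ldm}_{i}w^{j}u^{j}$ iff $\R^{\fut}_{i}wu$ iff $\R^{\ldm}_{i}w^{j+k}u^{j+k}$, and intersecting over $i$ gives $(w^{j},u^{j}) \in \R^{\ldm}_{Ag}$ iff $(w^{j+k},u^{j+k}) \in \R^{\ldm}_{Ag}$. If $j = 0$, then on the left we have $\R^{\ldm}_{i}w^{0}u^{0}$ iff $\R^{\pres}_{i}wu$, whereas on the right (since $k \geq 1$) we have $\R^{\ldm}_{i}w^{k}u^{k}$ iff $\R^{\fut}_{i}wu$ iff $u \in \bigcap_{i}\R^{\pres}_{i}(w)$. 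So here I need the auxiliary fact that $\bigcap_{i \in Ag}\R^{\pres}_{i}(w) = \bigcap_{i \in Ag}\R^{\pres}_{i}(w)$ trivially, but more to the point that $\R^{\ldm}_{Ag}(w^{0}) = \bigcap_{i}\R^{\ldm}_{i}(w^{0}) = \bigcap_{i}\R^{\pres}_{i}(w)$ by definition, which is exactly $\R^{\fut}_{i}(w)$ for each $i$ and hence $\R^{\ldm}_{Ag}(w^{k})$. Thus even in the $j=0$ case both sides equal $\bigcap_{i \in Ag}\R^{\pres}_{i}(w)$, and the biconditional holds.

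Concretely, the key steps in order are: (1) record that for $m \geq 1$, $\R^{\ldm}_{i}w^{m}u^{m} \iff u \in \bigcap_{i \in Ag}\R^{\pres}_{i}(w)$, independent of $m$; (2) from the definition $\R^{\ldm}_{Ag}(w^{m}) = \bigcap_{1 \leq i \leq n}\R^{\ldm}_{i}(w^{m})$, deduce that for $m \geq 1$, $\R^{\ldm}_{Ag}(w^{m})$ (as a set of MCSs, via Lemma~\ref{Relate_only_same_level}) equals $\bigcap_{i \in Ag}\R^{\pres}_{i}(w)$; (3) observe by definition that $\R^{\ldm}_{Ag}(w^{0}) = \bigcap_{1 \leq i \leq n}\R^{\ldm}_{i}(w^{0}) = \bigcap_{i \in Ag}\R^{\pres}_{i}(w)$ as well; (4) conclude that for every $j \in \mathbb{N}$ and $k \geq 1$, $u^{j} \in \R^{\ldm}_{Ag}(w^{j})$ iff $u \in \bigcap_{i \in Ag}\R^{\pres}_{i}(w)$ iff $u^{j+k} \in \R^{\ldm}_{Ag}(w^{j+k})$. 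The statement follows, using Lemma~\ref{Relate_only_same_level} to ensure the related worlds always carry the matching superscript.

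There is no real obstacle here; the ``hard'' part is purely bookkeeping: keeping the superscripts straight and being careful about the $j = 0$ versus $j \geq 1$ split, since $\R^{\ldm}_{i}$ is defined case-wise on the level but $\R^{\ldm}_{Ag}$ is not. The only subtlety worth flagging explicitly is that $\R^{\pres}_{\Box}$ plays no role — one might be tempted to invoke it, but the grand-coalition relation at level $0$ is defined directly as the intersection of the $\R^{\ldm}_{i}$, which bottom out in $\R^{\pres}_{i}$, not $\R^{\pres}_{\Box}$, so the $j=0$ and $j\geq1$ cases genuinely coincide.
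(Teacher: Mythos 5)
Your proof is correct and follows essentially the same route as the paper's: both reduce each side to the level-independent set $\bigcap_{i \in Ag}\R^{\pres}_{i}(w)$ via the definitions of $\R^{\fut}_{i}$ and $\R^{\ldm}_{Ag}$ as an intersection. Your explicit $j=0$ versus $j\geq 1$ case split is just a slightly more spelled-out version of the paper's chain of equivalences.
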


\begin{proof} This follows from the fact that $u^{0} \in \R^{\ldm}_{Ag}(w^{0})$ iff $u \in \bigcap_{i \in Ag} \R^{\pres}_{i}(w)$ iff $u \in \R^{\fut}_{i}(w)$ for each $i \in Ag$ iff $u \in \bigcap_{i \in Ag} \R^{\fut}_{i}(w)$ iff $u^{k} \in \bigcap_{i \in Ag} \R^{\ldm}_{i}(w^{k})$ for any $k > 0$.
\end{proof}

\begin{lemma}[\cite{BlaRijVen01}]\label{Relations_Diamonds} (i) For all $\mathsf{x} \in\{\pres, \fut, \ldm\}$, $\R^{\mathsf{x}}_{\Box}wu$ iff for all $\phi$, if $\phi \in u$, then $\Diamond \phi \in w$. (ii) For all $\mathsf{x} \in\{\pres, \fut, \ldm\}$, $\R^{\mathsf{x}}_{i}wu$ iff for all $\phi$, if $\phi \in u$, then $\lb i \rb \phi \in w$.

\end{lemma}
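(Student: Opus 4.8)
The plan is to recognize this as the standard \emph{existence lemma} for canonical models (as the citation signals) and to prove it first for $\mathsf{x}=\pres$, then transfer it to $\mathsf{x}\in\{\fut,\ldm\}$ by unfolding the definitions of the derived relations. For $\mathsf{x}=\pres$, part (i): by definition $\R^{\pres}_{\Box}wu$ holds iff $\phi\in u$ whenever $\Box\phi\in w$, and I would show this is equivalent to the stated ``diamond'' condition using the duality axiom $\Box\psi\vee\Diamond\overline{\psi}$ together with the maximal consistency of $w$ and $u$. Left to right: assume $\R^{\pres}_{\Box}wu$, take $\phi\in u$, and suppose $\Diamond\phi\notin w$; then $\overline{\Diamond\phi}=\Box\overline{\phi}\in w$ by maximality, so $\overline{\phi}\in u$ by definition of $\R^{\pres}_{\Box}$, contradicting the consistency of $u$. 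Conversely, assume the right-hand side and take $\Box\psi\in w$; if $\psi\notin u$ then $\overline{\psi}\in u$ by maximality, hence $\Diamond\overline{\psi}\in w$ by hypothesis, but $\Diamond\overline{\psi}=\overline{\Box\psi}$ and $\Box\psi\in w$, contradicting the consistency of $w$. Part (ii) is verbatim the same argument with $\Box,\Diamond$ replaced by $[i],\lb i\rb$ and the duality axiom $[i]\psi\vee\lb i\rb\overline{\psi}$.

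For $\mathsf{x}=\fut$, recall that $\R^{\fut}_{\Box}=\R^{\fut}_{i}=\bigcap_{k\in Ag}\R^{\pres}_{k}$; unfolding this and applying the $\pres$ case of (ii) to each conjunct reduces the claim to a statement about the present relations, where the bridge between the $\lb i\rb$-form and the $\Diamond$-form is supplied by the theorem $\lb i\rb\phi\rightarrow\Diamond\phi$ (the contrapositive of the axiom $\Box\phi\rightarrow[i]\phi$), together with the independence-of-agents axiom for the remaining inclusion. For $\mathsf{x}=\ldm$, Lemma~\ref{Relate_only_same_level} confines each relation $\R^{\ldm}_{\alpha}$ to a single $\mathbb{N}$-level, and at level $0$ (respectively at levels $>0$) that relation coincides with $\R^{\pres}_{\alpha}$ (respectively $\R^{\fut}_{\alpha}$); since $\phi\in w^{j}$ means precisely $\phi\in w$ and likewise for $\Diamond\phi\in w^{j}$, the claim for $\ldm$ follows directly from the two cases already treated.

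The genuinely load-bearing step is the $\pres$ case, and there the only subtlety is the disciplined use of the duality axioms to pass between the ``box'' formulation of the canonical relation and the ``diamond'' formulation in the statement; everything else is routine bookkeeping with maximal consistent sets. For the $\fut$ and $\ldm$ cases the point requiring care is that these relations are not canonical in the textbook sense but are \emph{derived} from the $\R^{\pres}_{k}$, so one must check that the diamond characterization still transfers; this uses that the independence-of-agents axiom is Sahlqvist and hence that its corresponding frame condition already holds in the canonical model of $\ldm$.
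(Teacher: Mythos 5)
Your proof of the $\pres$ case is correct and is exactly the textbook argument the paper is pointing to with its citation: the paper offers no proof of this lemma beyond the reference to \cite{BlaRijVen01}, and for $\R^{\pres}_{\Box}$ and $\R^{\pres}_{i}$ the duality-axiom manipulation you give (in both directions, using maximality and consistency) is the standard canonical-model fact.

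The gap is in your treatment of $\fut$ (and hence of $\ldm$ at levels $j>0$). For $\mathsf{x}=\fut$ the right-hand side of (i) is, by your own $\pres$ case, equivalent to $\R^{\pres}_{\Box}wu$, while the left-hand side is $u\in\bigcap_{k\in Ag}\R^{\pres}_{k}(w)$; so the biconditional amounts to the relation identity $\R^{\pres}_{\Box}=\bigcap_{k}\R^{\pres}_{k}$. The ``remaining inclusion'' $\R^{\pres}_{\Box}\subseteq\bigcap_{k}\R^{\pres}_{k}$ that you propose to extract from the independence-of-agents axiom is not available: IOA (and its Sahlqvist frame correspondent, condition (C2)) only asserts the \emph{existence} of a common $\R_{i}$-successor for a family of $\R_{\Box}$-related worlds; it does not say that every $\R_{\Box}$-successor is an $\R_{i}$-successor. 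Indeed the inclusion is false: $\{[1]\overline{p},\Diamond p\}$ is $\ldm$-consistent (satisfy it at a world whose own singleton is agent $1$'s choice cell in a two-world moment where $p$ holds at the other world), and any MCS $w$ extending it has, by the Existence Lemma, an $\R^{\pres}_{\Box}$-successor containing $p$ which cannot be an $\R^{\pres}_{1}$-successor. The same problem defeats (ii) for $\fut$, which would require all the $\R^{\pres}_{k}$ to coincide. What is true---and all that the appendix ever uses, e.g.\ in the verification of (C2) and in the Truth-Lemma---is the left-to-right direction for $\fut$ and $\ldm$: it follows immediately from your $\pres$ case together with the inclusions $\R^{\fut}_{\Box}=\R^{\fut}_{i}=\bigcap_{k}\R^{\pres}_{k}\subseteq\R^{\pres}_{i}\subseteq\R^{\pres}_{\Box}$ (the last by the axiom $\Box\phi\rightarrow[i]\phi$). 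So the statement should be read as (or restricted to) that direction for $\mathsf{x}\in\{\fut,\ldm\}$; your argument as written cannot close the converse, because the converse does not hold.
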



\begin{lemma}[Existence Lemma \cite{BlaRijVen01}]\label{Existence_Lemma} (i) For any world $w^{j} \in W^{\ldm}$, if $\Diamond \phi \in w^{j}$, then there exists a world $u^{j} \in W^{\ldm}$ such that $\R^{\ldm}_{\Box}w^{j}u^{j}$ and $\phi \in u^{j}$. (ii) For any world $w^{j} \in W^{\ldm}$, if $\lb i \rb \phi \in w^{j}$, then there exists a world $u^{j} \in W^{\ldm}$ such that $\R^{\ldm}_{i}w^{j}u^{j}$ and $\phi \in u^{j}$.

\end{lemma}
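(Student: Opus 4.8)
The plan is to run the standard canonical existence argument of \cite{BlaRijVen01}, adapted to the layered domain $W^{\ldm} = W^{\pres} \times \mathbb{N}$. For each clause I would construct a suitable ``box-saturated'' set of formulae, show it is $\ldm$-consistent, extend it to an $\ldm$-MCS via Lindenbaum's Lemma (Lemma \ref{Lindenbaum}), and then read off both the membership $\phi \in u^j$ and the required relational link from the definition of that set together with Lemma \ref{Relations_Diamonds}. The layering causes no real trouble: since $\R^{\ldm}_{\Box}$ and $\R^{\ldm}_{i}$ only ever connect worlds carrying the same index (Lemma \ref{Relate_only_same_level}), the witness automatically sits at level $j$, and Lemma \ref{Relations_Diamonds} lets me handle the case $j = 0$ (where $\R^{\ldm}_{\Box} = \R^{\pres}_{\Box}$ and $\R^{\ldm}_{i} = \R^{\pres}_{i}$) and the case $j > 0$ (where instead $\R^{\ldm}_{\Box} = \R^{\fut}_{\Box}$ and $\R^{\ldm}_{i} = \R^{\fut}_{i}$) uniformly.

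For clause (i): assume $\Diamond \phi \in w^j$, i.e. $\Diamond\phi$ lies in the MCS $w$. I would set $\Sigma := \{\phi\} \cup \{\psi : \Box\psi \in w\}$. To see $\Sigma$ is consistent, suppose not; then there are $\Box\psi_1, \dots, \Box\psi_n \in w$ with $\vdash_{\ldm} (\psi_1 \wedge \cdots \wedge \psi_n) \rightarrow \overline\phi$, whence $\Box$-necessitation and the $\mathsf{K}$-axiom for $\Box$ give $\vdash_{\ldm} \Box(\psi_1 \wedge \cdots \wedge \psi_n) \rightarrow \Box\overline\phi$; since $w$ is an MCS, $\Box(\psi_1 \wedge \cdots \wedge \psi_n) \in w$ (by the $\mathsf{K}$-axiom and closure under conjunction), so $\Box\overline\phi \in w$, contradicting $\Diamond\phi \in w$ by the duality axiom $\Box\overline\phi \vee \Diamond\phi$. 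Extending $\Sigma$ to an MCS $u$ by Lemma \ref{Lindenbaum}, we get $\phi \in u$ by construction, and for every $\psi \in u$ we have $\Diamond\psi \in w$ (otherwise $\Box\overline\psi \in w$ by duality, so $\overline\psi \in \Sigma \subseteq u$, impossible). By Lemma \ref{Relations_Diamonds}(i) this last fact yields both $\R^{\pres}_{\Box} w u$ and $\R^{\fut}_{\Box} w u$, so in either layer $\R^{\ldm}_{\Box} w^j u^j$ holds with $\phi \in u^j$, as required. Clause (ii) is entirely parallel, replacing $\Box$ by $[i]$: one uses the $\mathsf{K}$-axiom for $[i]$, the derived $[i]$-necessitation (from $\Box$-necessitation and the axiom $\Box\phi \rightarrow [i]\phi$), and the duality axiom $[i]\phi \vee \lb i \rb \overline\phi$ in the consistency step, and Lemma \ref{Relations_Diamonds}(ii) to conclude $\R^{\pres}_{i} w u$ and $\R^{\fut}_{i} w u$ from the MCS built over $\{\phi\} \cup \{\psi : [i]\psi \in w\}$.

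The only point deserving care — and the closest thing to an obstacle — is that for $j > 0$ the relation $\R^{\ldm}_{\Box}$ (resp. $\R^{\ldm}_{i}$) is the \emph{intersection} $\bigcap_{k \in Ag}\R^{\pres}_{k}$, which is richer than the single box-set used to build $u$; one must be sure that the MCS generated from the $\Box$- (resp. $[i]$-) boxes of $w$ still lands inside this intersection. This is exactly what Lemma \ref{Relations_Diamonds} supplies, so no separate verification is needed. A minor subtlety worth flagging is that, because formulae are in negation normal form, the ``not both $\Box\overline\chi$ and $\Diamond\chi$'' direction invoked in the consistency arguments is literally the stated duality axiom (its NNF being $\neg(\Box\overline\chi \wedge \Diamond\chi)$), rather than something extracted from a primitive definition $\Diamond := \neg\Box\neg$.
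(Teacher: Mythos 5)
Your level-$0$ argument is the standard canonical existence proof and is fine: $\{\phi\}\cup\{\psi:\Box\psi\in w\}$ is consistent by the argument you give, extends to an MCS $u$ by Lemma \ref{Lindenbaum}, and $\R^{\pres}_{\Box}wu$ holds by construction, which for $j=0$ is exactly $\R^{\ldm}_{\Box}w^{0}u^{0}$; likewise for clause (ii).

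The gap is in the case $j>0$, and it sits exactly at the point you flagged and then waved away. There you need $u\in\R^{\fut}_{\Box}(w)=\bigcap_{i\in Ag}\R^{\pres}_{i}(w)$, but your witness is only guaranteed to contain $\{\psi:\Box\psi\in w\}$, i.e.\ to lie in $\R^{\pres}_{\Box}(w)$. The direction of Lemma \ref{Relations_Diamonds} you invoke to bridge this (``every $\psi\in u$ has $\Diamond\psi\in w$, hence $\R^{\fut}_{\Box}wu$'') amounts to the inclusion $\R^{\pres}_{\Box}(w)\subseteq\bigcap_{i\in Ag}\R^{\pres}_{i}(w)$, which does not follow from the $\ldm$ axioms: $\Box\phi\rightarrow[i]\phi$ yields only the reverse inclusion $\R^{\pres}_{i}(w)\subseteq\R^{\pres}_{\Box}(w)$, so the $\fut$ instances of Lemma \ref{Relations_Diamonds} are themselves overstated and cannot be used as a black box here. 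In fact clause (i) at a level $j>0$ is refutable: $\{\Diamond p,[1]\overline{p}\}$ is $\ldm$-satisfiable, hence extends to an MCS $w$, and every $u\in\R^{\fut}_{\Box}(w)\subseteq\R^{\pres}_{1}(w)$ then contains $\overline{p}$, so no $\R^{\ldm}_{\Box}$-successor of $w^{j}$ contains $p$; clause (ii) fails analogously from $\{\lb 1\rb p,[2]\overline{p}\}$. No repair of your construction is possible at positive levels. The way out is to observe that the lemma is only needed at level $0$ — it is invoked only in case (i) of condition {\rm\textbf{(C2)}} and, implicitly, in the Truth Lemma, which is stated only at $w^{0}$, where $\R^{\ldm}_{\Box}$ and $\R^{\ldm}_{i}$ coincide with $\R^{\pres}_{\Box}$ and $\R^{\pres}_{i}$ — and to prove and use it in that restricted form rather than for arbitrary $j$.
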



\begin{lemma}\label{Canonical_is_TTKSTIT_Model}
The Canonical Model is a temporal Kripke STIT model.
\end{lemma}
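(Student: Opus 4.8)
The plan is to verify, one condition at a time, that $M^{\ldm}$ satisfies \textbf{(C1)}--\textbf{(C7)} of Definition \ref{models_tkstit} and that $V^{\ldm}$ is a legitimate valuation; the latter is immediate, since $V^{\ldm}(p)=\{w^{j}\in W^{\ldm}\mid w\in V^{\pres}(p)\}$ is by construction a subset of $W^{\ldm}$. The organising observation throughout is Lemma \ref{Relate_only_same_level}: each of $\R^{\ldm}_{\Box}$, $\R^{\ldm}_{i}$ and $\R^{\ldm}_{Ag}$ relates only worlds lying at one common level $j$, so the modal conditions can be checked level-by-level, reducing to facts about $\R^{\pres}_{\alpha}$ at level $0$ and about $\R^{\fut}_{\Box}=\R^{\fut}_{i}=\bigcap_{k\in Ag}\R^{\pres}_{k}$ at levels $j>0$.

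For the equivalence-relation requirements, $\R^{\pres}_{\Box}$ and each $\R^{\pres}_{i}$ are equivalence relations by the standard canonical-model reasoning from the reflexivity axioms $\Box\phi\to\phi$, $[i]\phi\to\phi$ and the Euclidean-type axioms $\Diamond\phi\to\Box\Diamond\phi$, $\lb i\rb\phi\to[i]\lb i\rb\phi$ (cf. \cite{BlaRijVen01}), while $\bigcap_{k\in Ag}\R^{\pres}_{k}$ is an equivalence relation as an intersection of such; since the levels are disjoint, $\R^{\ldm}_{\Box}$ and $\R^{\ldm}_{i}$ are equivalence relations on $W^{\ldm}$, and then so is $\R^{\ldm}_{Ag}(w^{j})=\bigcap_{1\le i\le n}\R^{\ldm}_{i}(w^{j})$ --- which is exactly \textbf{(C3)}. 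Condition \textbf{(C1)} holds at level $0$ via the axiom $\Box\phi\to[i]\phi$ (if $\R^{\pres}_{i}wu$ and $\Box\phi\in w$ then $[i]\phi\in w$, hence $\phi\in u$) and trivially at levels $j>0$, where $\R^{\fut}_{i}=\R^{\fut}_{\Box}$. For the temporal relations, $\R^{\ldm}_{\g}$ is transitive and serial because $<$ is transitive on $\mathbb{N}$ and each $w^{j}$ has the successor $w^{j+1}$, $\R^{\ldm}_{\h}$ is the converse of $\R^{\ldm}_{\g}$ by definition, and \textbf{(C4)}, \textbf{(C5)} follow from the linearity of $<$ on $\mathbb{N}$. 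Condition \textbf{(C7)} holds because, by Lemma \ref{Relate_only_same_level}, every world in $\R^{\ldm}_{\Box}(w^{j})$ sits at level $j$, whereas every world in $\R^{\ldm}_{\g}(w^{j})$ sits at a level strictly above $j$, so the two sets are disjoint.

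The two substantive conditions are \textbf{(C2)} and \textbf{(C6)}. For \textbf{(C2)}, suppose $u_{1},\dots,u_{n}\in W^{\ldm}$ with $\R^{\ldm}_{\Box}u_{\ell}u_{k}$ for all $1\le\ell,k\le n$; by Lemma \ref{Relate_only_same_level} they all lie at one level $m$, say $u_{\ell}=v_{\ell}^{m}$. If $m>0$, then $\R^{\ldm}_{1},\dots,\R^{\ldm}_{n}$ and $\R^{\ldm}_{\Box}$ all coincide with the reflexive equivalence relation $\R^{\fut}_{\Box}$, so the $v_{\ell}$ lie in a single $\R^{\fut}_{\Box}$-class, any member of which witnesses $\bigcap_{\ell}\R^{\ldm}_{\ell}(u_{\ell})\neq\emptyset$. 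If $m=0$, I would run the usual argument: it suffices that $\bigcup_{\ell}\{\phi\mid[\ell]\phi\in v_{\ell}\}$ be $\ldm$-consistent, for then any $\ldm$-MCS $w$ extending it (Lemma \ref{Lindenbaum}) satisfies $w\in\R^{\pres}_{\ell}(v_{\ell})$ for every $\ell$, so $w^{0}\in\bigcap_{\ell}\R^{\ldm}_{\ell}(u_{\ell})$; were it inconsistent, grouping the offending formulas by agent yields $\chi_{\ell}$ with $[\ell]\chi_{\ell}\in v_{\ell}$ and $\bigwedge_{\ell}\chi_{\ell}\vdash_{\ldm}\bot$, and since $\R^{\pres}_{\Box}v_{1}v_{\ell}$ holds, Lemma \ref{Relations_Diamonds}(i) gives $\Diamond[\ell]\chi_{\ell}\in v_{1}$, hence $\bigwedge_{\ell}\Diamond[\ell]\chi_{\ell}\in v_{1}$, so the IOA axiom $\bigwedge_{i\in Ag}\Diamond[i]\phi_{i}\to\Diamond(\bigwedge_{i\in Ag}[i]\phi_{i})$ with the Existence Lemma \ref{Existence_Lemma} produces a world containing $\bigwedge_{\ell}[\ell]\chi_{\ell}$, hence $\bigwedge_{\ell}\chi_{\ell}$ by reflexivity --- contradiction. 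For \textbf{(C6)}, given an instance of $\R^{\ldm}_{\g}\circ\R^{\ldm}_{\Box}$, say $\R^{\ldm}_{\g}w^{j}w^{k}$ and $\R^{\ldm}_{\Box}w^{k}u^{k}$, we have $j<k$, so $k\ge 1$ and $\R^{\ldm}_{\Box}$ at level $k$ is $\R^{\fut}_{\Box}=\bigcap_{i\in Ag}\R^{\pres}_{i}$; this yields $u^{k}\in\R^{\ldm}_{Ag}(w^{k})$, Lemma \ref{AG_Rel_Same_All_Levels} transfers it to $u^{j}\in\R^{\ldm}_{Ag}(w^{j})$, and since $j<k$ we get $\R^{\ldm}_{\g}u^{j}u^{k}$, so $(w^{j},u^{k})\in\R^{\ldm}_{Ag}\circ\R^{\ldm}_{\g}$. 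I expect the level-$0$ case of \textbf{(C2)} --- the consistency argument marshalling the IOA axiom, Lemma \ref{Relations_Diamonds} and the Existence Lemma --- to be the main obstacle, with the level bookkeeping for \textbf{(C6)} a close second.
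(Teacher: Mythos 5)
Your proposal is correct and follows essentially the same route as the paper's proof: the level-by-level reduction to $\R^{\pres}_{\alpha}$ and $\R^{\fut}_{\Box}$, the IOA-plus-Existence-Lemma consistency argument for \textbf{(C2)} at level $0$ and the trivial single-class argument at levels $j>0$, and the use of Lemma \ref{AG_Rel_Same_All_Levels} for \textbf{(C6)} all match. The only (inessential) omission is the explicit check that $W^{\ldm}\neq\emptyset$, which the paper dispatches via Lemma \ref{Lindenbaum}.
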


\begin{proof} We prove that $M^{\ldm}$ has all the properties of a $\tstit$ model:

\begin{itemize}

\item By lemma \ref{Lindenbaum}, the $\ldm$ consistent set $\{p\}$ can be extended to a $\ldm$-MCS, and therefore $W^{\pres}$ is non-empty. Since $\mathbb{N}$ is non-empty as well, $W^{\pres} \times \mathbb{N} = W^{\ldm}$ is a non-empty set of worlds.

\item We argue that $\R^{\ldm}_{\Box}$ is an equivalence relation between worlds of $W^{\ldm}$, and omit the arguments for $\R^{\ldm}_{i}$ and $\R^{\ldm}_{Ag}$, which are similar. Suppose that $w^{j} \in W^{\ldm}$. We have two cases to consider: (i) $j = 0$, and (ii) $j > 0$. \textbf{(i)} Standard canonical model arguments apply and $\R^{\ldm}_{\Box}$ is an equivalence relation between all worlds of the form $w^{0} \in W^{\ldm}$ (See \cite{BlaRijVen01} for details). \textbf{(ii)} If we fix a $j >0$, then $\R^{\ldm}_{\Box}$ will be an equivalence relation for all worlds of the form $w^{j} \in W^{\ldm}$ since the intersection of equivalence relations produces another equivalence relation. Last, since $\R^{\ldm}_{\Box}$ is an equivalence relation for each fixed $j \in \mathbb{N}$, and because each $W^{\pres} \times \{j\} \subset W^{\ldm}$ is disjoint from each $W^{\pres} \times \{j'\} \subset W^{\ldm}$ for $j \neq j'$, we know that the union all such equivalence relations will be an equivalence relation.

\item[{\rm \textbf{(C1)}}] Let $i$ be in $Ag$ and assume that $(w^{j},u^{j}) \in \R^{\ldm}_{i}$. We split the proof into two cases: (i) $j = 0$, or (ii) $j > 0$. \textbf{(i)} Assume that $\Box \phi \in w^{0}$. Since $w$ is a $\ldm$-MCS, it contains the axiom $\Box \phi \rightarrow [i] \phi$, and so, $[i] \phi \in w$ as well. Since $(w,u) \in \R^{\pres}_{i}$ (because $j = 0$), we know that $\phi \in u$ by the definition of the relation; therefore, $(w,u) \in \R^{\pres}_{\Box}$, which implies that $(w^{0},u^{0}) \in \R^{\ldm}_{\Box}$ by definition. \textbf{(ii)} The assumption that $j >0$ implies that $u \in \R^{\fut}_{i} (w) $ $= \bigcap_{i \in Ag} \R^{\pres}_{i}(w) = \R^{\fut}_{\Box}(w)$ by definition, which implies that $(w^{j},u^{j}) \in \R^{\ldm}_{\Box}$.

\item[{\rm \textbf{(C2)}}] Let $u^{j}_{1}, ..., u^{j}_{n} \in W^{\ldm}$ and assume that $\R^{\ldm}_{\Box}u^{j}_{i}u^{j}_{k}$ for all $i,k \in \{1,...,n\}$. We split the proof into two cases: (i) $j = 0$, or (ii) $j > 0$. \textbf{(i)} We want to show that there exists a world $w^{j} \in W^{\ldm}$ such that $w^{j} \in \bigcap_{1 \leq i \leq n} \R^{\ldm}_{i}(u^{j}_{i})$. Let $\hat{w}^{j} = \bigcup_{1 \leq i \leq n} \{\phi | [i] \phi \in u^{j}_{i} \}$. Suppose that $\hat{w}^{j}$ is inconsistent to derive a contradiction. Then, there are $\psi_{1}$,...,$\psi_{k}$ such that $\vdash_{\ldm} \bigwedge_{1 \leq l \leq k} \psi_{i} \rightarrow \bot$. For each $i \in Ag$, we define $\Phi_{i} = \{\psi_{l}|[i] \psi_{l} \in u^{j}_{i}\} \subseteq \{\psi_{1},...,\psi_{k}\}$. Observe that for each $i \in Ag$, $[i] \bigwedge \Phi_{i} \in u^{j}_{i}$ because $\bigwedge [i] \Phi_{i} \in u^{j}_{i}$ and $\vdash_{\ldm} \bigwedge [i] \Phi_{i} \rightarrow [i] \bigwedge \Phi_{i}$. Since by assumption $\R^{\ldm}_{\Box}u^{j}_{i}u^{j}_{k}$ for all $i,k \in \{1,...,n\}$, this means that for any $u^{j}_{m}$ we pick (with $1 \leq m \leq n$), $\Diamond [i] \bigwedge \Phi_{i} \in u^{j}_{m}$ for each $i \in Ag$ by lemma \ref{Relations_Diamonds}; hence, $\bigwedge_{i \in Ag} \Diamond [i] \bigwedge \Phi_{i} \in u^{j}_{m}$. By the $\ioa$ axiom, this implies that $\Diamond \bigwedge_{i \in Ag} [i] (\bigwedge \Phi_{i}) \in u^{j}_{m}$. By lemma \ref{Existence_Lemma}, there must exist a world $v^{j}$ such that $\R^{\ldm}_{\Box}u^{j}_{m}v^{j}$ and $\bigwedge_{i \in Ag} [i] (\bigwedge \Phi_{i}) \in v^{j}$. But then, since $\vdash_{\ldm} [i] (\bigwedge \Phi_{i}) \rightarrow \bigwedge \Phi_{i}$ by reflexivity, $\vdash_{\ldm} \bigwedge_{i \in Ag} (\bigwedge \Phi_{i}) \leftrightarrow \bigwedge_{1 \leq i \leq k} \psi_{i}$, and $\vdash_{\ldm} \bigwedge_{1 \leq i \leq k} \psi_{i} \rightarrow \bot$, it follows that $\bot \in v^{j}$, which is a contradiction since $v^{j}$ is a $\ldm$-MCS. Therefore, $\hat{w}^{j}$ must be consistent and by lemma \ref{Lindenbaum}, it may be extended to a $\ldm$-MCS $w^{j}$. Since for each $[i] \phi \in u^{j}_{i}$, $\phi \in w^{j}$, we have that $w \in \R^{\pres}_{i}(u_{i})$ for each $i \in Ag$. Hence, $w \in \bigcap_{1 \leq i \leq n} \R^{\pres}_{i}(u_{i})$, and so, $w^{j} \in \bigcap_{1 \leq i \leq n} \R^{\ldm}_{i}(u^{j}_{i})$. \textbf{(ii)} Suppose that $j > 0$, so that $t^{j} \in \R^{\ldm}_{\Box}(s^{j})$ iff $t \in \R^{\fut}_{\Box}(s)$ = $\bigcap_{i \in Ag} \R^{\pres}_{i}(s)$. By assumption then, $u^{j}_{m} \in \bigcap_{i \in Ag} \R^{\pres}_{i}(u^{j}_{k}) = \R^{\fut}_{i}(u^{j}_{k})$ for all $k,m \in \{1,...,n\}$ and each $i \in Ag$. Hence, $u^{j}_{m} \in \bigcap_{i \in Ag} \R^{\fut}_{i}(u^{j}_{k})$ for all $k,m \in \{1,...,n\}$. If we therefore pick any $u^{j}_{k}$, it follows that $u^{j}_{k} \in \bigcap_{i \in Ag}\R^{\fut}_{i} (u^{j}_{i})$, meaning that the intersection $\bigcap_{1 \leq i \leq n} \R^{\ldm}_{i}(u^{j}_{i})$ is non-empty.

\item[{\rm \textbf{(C3)}}] Follows by definition.

\item $\R^{\ldm}_{\g}$ is a transitive and serial by definition, and $\R^{\ldm}_{\h}$ is the converse of $\R^{\ldm}_{\g}$ by definition as well.


\item[{\rm \textbf{(C4)}}] For all $u^{j}, u^{k}, u^{l} \in W^{\ldm}$, suppose that $\R^{\ldm}_{\g}u^{j}u^{k}$ and $\R^{\ldm}_{\g}u^{j}u^{l}$.  Then, $j < k$ and $j < l$, and since $\mathbb{N}$ is linearly ordered, we have that $k < l$, $k = l$, or $k> l$, implying that $\R^{\ldm}_{\g}u^{k}u^{l}$, $u^{k} = u^{l}$, or $\R^{\ldm}_{\g}u^{l}u^{k}$.

\item[{\rm \textbf{(C5)}}] Similar to previous case.

\item[{\rm \textbf{(C6)}}] Suppose that $(u^{j},v^{j+k}) \in \R^{\ldm}_{\g} \circ \R^{\ldm}_{\Box}$ with $k \geq 1$. By definition of $\R^{\ldm}_{\g}$, $u^{j+k}$ is the only element in $\R^{\ldm}_{\g}(u^{j})$ associated with $j+k$, and so, $(u^{j+k},v^{j+k}) \in \R^{\ldm}_{\Box}$ (By lemma \ref{Relate_only_same_level} no other $u^{j+k'}$ with $k' \neq k$ can relate to $v^{j+k}$ in $\R^{\ldm}_{\Box}$.). Since $k \geq 1$, $v^{j+k} \in \R^{\ldm}_{\Box}(u^{j+k})$ iff $v \in \R^{\fut}_{\Box}(u) = \bigcap_{i \in Ag} \R^{\pres}_{i}(u)$ iff $v^{0} \in \R^{\ldm}_{Ag}(u^{0})$. By lemma \ref{AG_Rel_Same_All_Levels}, $(u^{j},v^{j}) \in \R^{\ldm}_{Ag}$. This implies that, and since $(v^{j},v^{j+k}) \in \R^{\ldm}_{\g}$ by definition, we have that $(u^{j},v^{j+k}) \in \R^{\ldm}_{Ag} \circ \R^{\ldm}_{\g}$.

\item[{\rm \textbf{(C7)}}] Follows from the definition of the $\R^{\ldm}_{\g}$ relation.



\item Last, it is easy to see that the valuation function $V^{\ldm}$ is indeed a valuation function.

\end{itemize}

\end{proof}

\begin{lemma}[Truth-Lemma]\label{Truth_Lemma} For any formula $\phi$, $M^{\ldm}, w^{0} \models \phi$ iff $\phi \in w^{0}$.

\end{lemma}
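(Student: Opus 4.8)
The plan is to prove the Truth-Lemma by induction on the structure of $\phi$, establishing both directions simultaneously (which is standard since the language is closed under negation normal form, so each modality has its dual present). The base cases $\phi = p$ and $\phi = \overline{p}$ follow directly from the definition of $V^{\ldm}$ together with the fact that a $\ldm$-MCS is complete, i.e. exactly one of $p$, $\overline{p}$ lies in $w^0$. The Boolean cases $\phi = \psi_1 \wedge \psi_2$ and $\phi = \psi_1 \vee \psi_2$ are routine using the properties of maximally consistent sets (closure under conjunction, and primeness for disjunction).

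For the modal cases, I would treat the pairs $\{\Box, \Diamond\}$ and $\{[i], \lb i \rb\}$ together. Consider $\phi = \Box\psi$. If $\Box\psi \in w^0$, then for every $u^0$ with $\R^{\ldm}_{\Box}w^0 u^0$ we have $\psi \in u^0$ by the definition of $\R^{\pres}_{\Box}$ (recall $j = 0$ here), hence $M^{\ldm}, u^0 \models \psi$ by the induction hypothesis, so $M^{\ldm}, w^0 \models \Box\psi$. Conversely, if $\Box\psi \notin w^0$, then $\Diamond\overline{\psi} \in w^0$ by maximality (using the duality axiom $\Box\phi \vee \Diamond\overline{\phi}$), so by the Existence Lemma \ref{Existence_Lemma}(i) there is $u^0$ with $\R^{\ldm}_{\Box}w^0 u^0$ and $\overline{\psi} \in u^0$; by the induction hypothesis $M^{\ldm}, u^0 \not\models \psi$, so $M^{\ldm}, w^0 \not\models \Box\psi$. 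The case of $\Diamond\psi$ is symmetric, using the Existence Lemma directly for the forward direction and the definition of $\R^{\pres}_{\Box}$ for the converse. The cases $[i]\psi$ and $\lb i \rb\psi$ are entirely analogous, using clause (ii) of Lemmas \ref{Relations_Diamonds} and \ref{Existence_Lemma} and the duality axiom $[i]\phi \vee \lb i \rb\overline{\phi}$.

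The one point requiring care — and the main obstacle — is that the induction must stay at \emph{level} $0$: the statement is only about $w^0$, but the semantic clauses quantify over $\R^{\ldm}$-successors, and a priori one might worry about successors at other levels. Here Lemma \ref{Relate_only_same_level} is the crucial tool: for $\alpha \in \{\Box\} \cup Ag$, any $\R^{\ldm}_{\alpha}$-successor of $w^0$ is again of the form $u^0$, so the induction hypothesis (which is formulated for level-$0$ worlds) applies to exactly the worlds that occur in the semantic clauses. Since $\mathcal{L}_{\ldm}$ contains no temporal or grand-coalition operators, no other relations enter the clauses, and the induction closes. Thus $M^{\ldm}, w^0 \models \phi$ iff $\phi \in w^0$ for all $\phi \in \mathcal{L}_{\ldm}$, as required.
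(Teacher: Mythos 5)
Your proof is correct and follows essentially the same route as the paper, which simply invokes the standard induction on the complexity of $\phi$ with a citation to Blackburn--de Rijke--Venema. Your additional observation that Lemma \ref{Relate_only_same_level} keeps all relevant $\R^{\ldm}_{\Box}$- and $\R^{\ldm}_{i}$-successors at level $0$ (so the induction hypothesis applies where needed) is exactly the point the paper's terse proof leaves implicit, and it is handled correctly.
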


\begin{proof} Shown by induction on the complexity of $\phi$ (See \cite{BlaRijVen01}).

\end{proof}

\section{$\xstitl$ Derivation of IOA$^x$ Axiom}\label{XSTIT_IOA}

We make use of the system of rules $\ioax$, to derive the $\xstit$ IOA axiom in $\xstitl$. \\

\begin{small}
\begin{tabular}{@{\hskip -4em} c}
\AxiomC{$R_{\Box}w_{1}w_{2}, R_{\Box}w_{1}w_{3}, R_{\Box}w_{1}w_{4}, R_{A}w_{4}w_{5}, R_{A}w_{2}w_{5}, w_{2}: \langle A \rangle^x \overline{\phi}, w_{3}: \langle B \rangle^x \overline{\psi}, ...\ \ $
$ w_{5}: \phi, w_{5}: \overline{\phi}$}
\UnaryInfC{$R_{\Box}w_{1}w_{2}, R_{\Box}w_{1}w_{3}, R_{\Box}w_{1}w_{4}, R_{A}w_{4}w_{5}, R_{A}w_{2}w_{5}, w_{2}: \langle A \rangle^x \overline{\phi}, w_{3}: \langle B \rangle^x \overline{\psi}, ...\ \ $
$ w_{5}: \phi$}
\RightLabel{($\mathsf{IOA-U_{1}})$}
\UnaryInfC{$R_{\Box}w_{1}w_{2}, R_{\Box}w_{1}w_{3}, R_{\Box}w_{1}w_{4}, R_{A}w_{4}w_{5}, w_{2}: \langle A \rangle^x \overline{\phi}, w_{3}: \langle B \rangle^x \overline{\psi}, ... \ \ $
$ w_{5}: \phi$}
\UnaryInfC{$R_{\Box}w_{1}w_{2}, R_{\Box}w_{1}w_{3}, R_{\Box}w_{1}w_{4}, w_{2}: \langle A \rangle^x \overline{\phi}, w_{3}: \langle B \rangle^x \overline{\psi}, ...\ \ $
$ w_{4}: [A]^x\phi$}
\UnaryInfC{$D_1$}
\DisplayProof
\end{tabular}

\ \\

\begin{tabular}{@{\hskip -4em} c}
\AxiomC{$R_{\Box}w_{1}w_{2}, R_{\Box}w_{1}w_{3}, R_{\Box}w_{1}w_{4}, R_{B}w_{4}w_{6}, R_{B}w_{3}w_{6}, w_{2}: \langle A \rangle^x \overline{\phi}, w_{3}: \langle B \rangle^x \overline{\psi}, ....\ \ $
$ w_{6}: \psi, w_{6} : \overline{\psi}$}

\UnaryInfC{$R_{\Box}w_{1}w_{2}, R_{\Box}w_{1}w_{3}, R_{\Box}w_{1}w_{4}, R_{B}w_{4}w_{6}, R_{B}w_{3}w_{6}, w_{2}: \langle A \rangle^x \overline{\phi}, w_{3}: \langle B \rangle^x \overline{\psi}, ...\ \ $
$ w_{6}: \psi$}
\RightLabel{($\mathsf{IOA-U_{2}})$}
\UnaryInfC{$R_{\Box}w_{1}w_{2}, R_{\Box}w_{1}w_{3}, R_{\Box}w_{1}w_{4}, R_{B}w_{4}w_{6}, w_{2}: \langle A \rangle^x \overline{\phi}, w_{3}: \langle B \rangle^x \overline{\psi}, ...\ \ $
$ w_{6}: \psi$}
\UnaryInfC{$R_{\Box}w_{1}w_{2}, R_{\Box}w_{1}w_{3}, R_{\Box}w_{1}w_{4}, w_{2}: \langle A \rangle^x \overline{\phi}, w_{3}: \langle B \rangle^x \overline{\psi}, ...\ \ $
$ w_{4}: [B]^x\psi$}
\UnaryInfC{$D_2$}
\DisplayProof
\end{tabular}

\ \\

\begin{tabular}{@{\hskip -4em} c}

\AxiomC{$D_1\quad\quad\quad$}
\AxiomC{$\quad\quad\quad D_2$}

\BinaryInfC{$R_{\Box}w_{1}w_{2}, R_{\Box}w_{1}w_{3}, R_{\Box}w_{1}w_{4}, w_{2}: \langle A \rangle^x \overline{\phi}, w_{3}: \langle B \rangle^x \overline{\psi}, w_{1}: \Diamond ([A]^x\phi\land [B]^x\psi), w_{4}: [A]^x\phi\land [B]^x\psi$}
\UnaryInfC{$R_{\Box}w_{1}w_{2}, R_{\Box}w_{1}w_{3}, R_{\Box}w_{1}w_{4}, w_{2}: \langle A \rangle^x \overline{\phi}, w_{3}: \langle B \rangle^x \overline{\psi}, w_{1}: \Diamond ([A]^x\phi\land [B]^x\psi)$}
\RightLabel{($\mathsf{IOA-E}$)}
\UnaryInfC{$R_{\Box}w_{1}w_{2}, R_{\Box}w_{1}w_{3}, w_{2}: \langle A \rangle^x \overline{\phi}, w_{3}: \langle B \rangle^x \overline{\psi}, w_{1}: \Diamond ([A]^x\phi\land [B]^x\psi)$}
\UnaryInfC{$w_{1}: \Box \langle A \rangle^x \overline{\phi}, w_{1}:\Box \langle B \rangle^x \overline{\psi}, w_{1}: \Diamond ([A]^x\phi\land [B]^x\psi)$}
\UnaryInfC{$w_{1}: \Box \langle A \rangle^x \overline{\phi} \lor \Box \langle B \rangle^x \overline{\psi} \lor \Diamond ([A]^x\phi\land [B]^x\psi)$}
\DisplayProof
\end{tabular}
\end{small}

\end{document}